\numberwithin{equation}{section}
\newcommand{\jump}[1]{\bigl[ #1 \bigr]}
\theoremstyle{plain}
\newtheorem{proposition}{Proposition}[section]
\newtheorem{corollary}[proposition]{Corollary}
\newtheorem{lemma}[proposition]{Lemma}
\newtheorem{theorem}[proposition]{Theorem}
\theoremstyle{definition}
\newtheorem{definition}[proposition]{Definition}
\newtheorem{example}[proposition]{Example}
\newtheorem{remark}[proposition]{Remark}
\newcommand{\R}{\mathbf{R}}
\newcommand{\D}{\mathcal{D}}
 \newcommand{\avg}[1]{\bigl\langle #1 \bigr\rangle}
 \newcommand{\sgn}[1]{\operatorname{sgn} #1}
\newcommand{\abs}[1]{\left\lvert #1 \right\rvert}
\newcommand{\hoc}[1]{#1^+}
\newcommand{\floor}[1]{\left\lfloor #1 \right\rfloor}
\tikzset{
  source/.style={circle,draw=black!100,fill=black!50,inner sep = 0,minimum size=2mm},
  sink/.style={circle,draw=black!100,fill=white,inner sep = 0,minimum size=2mm}
}
\begin{document}

\title[Vibrations of an elastic bar and mCH equations]{Vibrations of an elastic bar, isospectral deformations, and modified Camassa-Holm equations}

\author{Xiang-Ke Chang}
\address{ LSEC, ICMSEC, Academy of Mathematics and Systems Science, Chinese Academy of Sciences, P.O.Box 2719, Beijing 100190, PR China; and School of Mathematical Sciences, University of Chinese Academy of Sciences, Beijing 100049, PR China.}
\email{changxk@lsec.cc.ac.cn}

\author{Jacek Szmigielski}
\address{Department of Mathematics and Statistics, University of Saskatchewan, 106 Wiggins Road, Saskatoon, Saskatchewan, S7N 5E6, Canada.}
\email{szmigiel@math.usask.ca}


\begin{abstract}
    Vibrations of an elastic rod are described by a Sturm-Liouville system.
  We present a general discussion of isospectral (spectrum preserving)
  deformations of such a system.  We interpret one family of such deformations in terms of a two-component modified Camassa-Holm equation (2-mCH) and solve completely its dynamics for
  the case of discrete measures (multipeakons).  We show that the
  underlying system is Hamiltonian and prove its Liouville integrability.
  The present paper generalizes our previous work on interlacing multipeakons of the 2-mCH and multipeakons of the 1-mCH.
We give a unified approach to both equations, emphasizing certain natural family of interpolation problems germane to  the solution of the inverse problem for 2-mCH as well as to this type of  a Sturm-Liouville system with singular coefficients.
\end{abstract}
\keywords{peakons, inverse problem, modified Camassa-Holm equations}

\subjclass[2010]{34K29, 37K15, 34B05, 35Q51, 41A21}





\maketitle

\tableofcontents

\section{Introduction}
One of the most important applications of the Sturm-Liouville
systems is provided by the longitudinal vibrations of
an elastic bar of stiffness $p$ and density $\rho$ \cite[Chapter 10.3]{birkhoff-rota}.  The longitudinal
displacement $v$ satisfies the wave equation
\begin{equation}
\rho(x) \frac{\partial^2 v}{\partial t^2}=\frac{\partial}{\partial x}[p(x) \frac{\partial v}{\partial x}],
\end{equation}
which after the separation of variables $v=u(x)\cos\omega t$  leads to
\begin{equation}\label{eq:SL}
D_x [p(x)D_x u]+\omega^2 \rho(x) u=0,
\end{equation}
where $D_x=\frac{d}{dx}$.
In a different area of applications, in geophysics,
the \textit{Love waves}  were proposed by  an early
20 century British geophysicist Augustus Edward Hugh Love who predicted
the existence of \textit{horizontal surface waves} causing Earth shifting
during an earthquake.  The wave amplitudes of these
waves satisfy \cite{aki-richards}
\begin{equation} \label{eq:Love}
D_x (\mu D_x) u+(\omega^2 \rho -k^2 \mu)u=0,  \quad 0<x<\infty,
\end{equation}
where $\mu$ is the \textit{sheer modulus}, $x$ is the depth below the
Earth surface and the boundary
conditions are $D_x u(0)=u(\infty)=0$ which can be interpreted as
the Neumann condition on one end and the Dirichlet condition on the other.
In applications to geophysics the frequency $\omega$ is fixed and
the phase velocity is $\omega/k$. In particular, in
the infinity speed limit ($k=0$), we obtain the same Sturm-Liouville system
as in \eqref{eq:SL}.  In either case the
problem can conveniently be written as
 the first order system:
\begin{equation}\label{eq:Lovesys}
D_x \Phi=\begin{bmatrix}0& n\\-\omega^2 \rho&0 \end{bmatrix}
\Phi,
\end{equation}
where $n=\frac1\mu$ , $\Phi=\begin{bmatrix} \phi_1\\ \phi_2 \end{bmatrix}$, $\phi_1=\phi, \, \phi_2=\mu D_x \phi$.
In the present paper we study \eqref{eq:Lovesys}
 on the whole
real axis and impose the boundary conditions $\phi_1(-\infty)=\phi_2(\infty)=0$
which can be interpreted as the Dirichlet condition at  $-\infty$ and the Neumann condition at $+\infty$.
Our motivation to study this problem comes
from yet another area of applied mathematics dealing
with integrable nonlinear partial differential equations.   To focus our discussion we begin by considering
the nonlinear partial differential equation
\begin{equation}\label{eq:m1CH}
m_t+\left((u^2-u_x^2) m\right)_x=0,  \qquad
m=u-u_{xx},
\end{equation}
which is one of many variants of the famous Camassa-Holm equation (CH) \cite{camassa1993integrable}:
\begin{equation} \label{eq:CH}
m_t+u m_x +2u_x m=0, \qquad  m=u-u_{xx},
\end{equation}
for the shallow water waves.  We will call \eqref{eq:m1CH} the mCH equation for short.
The history of the mCH equation is long and convoluted: \eqref{eq:m1CH} appeared in the papers of Fokas \cite{fokas1995korteweg}, Fuchssteiner \cite{fuchssteiner1996some},  Olver and Rosenau\cite{olver1996tri} and was, later, rediscovered by Qiao
 \cite{qiao2006new,qiao2007new}.

Subsequently, in \cite{song2011new},  Song, Qu and Qiao proposed a natural two-component generalization of \eqref{eq:m1CH}
\begin{equation}\label{eq:m2CH}
\begin{aligned}
  m_t&+[(u-u_x)(v+v_x)m]_x=0,\\
  n_t&+[(u-u_x)(v+v_x)n]_x=0,\\
     &m=u-u_{xx},\qquad n=v-v_{xx},
\end{aligned}
\end{equation}
which, for simplicity, we shall call the 2-mCH. Formally, the 2-mCH reduces to the mCH when $v=u$.

We are interested in the class of non-smooth solutions of \eqref{eq:m2CH} given by the \textit{peakon ansatz} \cite{camassa1993integrable,chang2016multipeakons,lundmark-szmigielski:GX-inverse-problem}:
\begin{equation} \label{eq:peakonansatz}
u=\sum_{j=1}^N m_j (t)e^{-\abs{x-x_j(t)}}, \qquad  v=\sum_{j=1}^N n_j (t)e^{-\abs{x-x_j(t)}},
\end{equation}
where all smooth coefficients $m_j(t),n_j(t)$ are taken to be positive,
and hence
\begin{equation*}
~m=u-u_{xx}=2\sum_{j=1}^N m_j \delta_{x_j},\qquad n=v-v_{xx}=2\sum_{j=1}^N n_j \delta_{x_j}
\end{equation*}
are positive discrete measures.

For the above ansatz, \eqref{eq:m2CH} can be viewed as a distribution equation, requiring in particular that we
define the products $Q m$ and $Qn$, where
\begin{equation} \label{eq:Q}
Q=(u-u_x)(v+v_x).
\end{equation}
It is shown in Appendix \ref{lax_m2ch} that the
choice consistent with the Lax integrability discussed in Section \ref{sec:Lax} is to take $Qm$, $Qn$ to mean $\langle Q \rangle m$, $\langle Q \rangle n$ respectively,
where $\langle f \rangle$ denotes the average function (the arithmetic average
of the right and left limits). Substituting \eqref{eq:peakonansatz} into \eqref{eq:m2CH} and using the
multiplication rule mentioned above
leads to the system of ODEs:
\begin{subequations}\label{eq:m2CH_ode}
\begin{align}
&\dot m_j=0, \qquad \dot n_j=0,\\
&\dot x_j=\avg{Q}(x_j).
\end{align}
\end{subequations}

In the present paper, we shall develop an inverse spectral approach to solve the peakon ODEs
\eqref{eq:m2CH_ode} and hence \eqref{eq:m2CH} under the following assumptions:
\begin{enumerate}
\item all $m_k,n_k$ are positive,
\item the initial positions are assumed to be ordered as $x_1(0)<x_2(0)<\cdots<x_n(0)$.
\end{enumerate}
We emphasize that the second condition is not restrictive since it
can be realized by relabeling positions as long as positions $x_j(0)$ are distinct.

The present paper generalizes our previous work
on interlacing multipeakons of the 2-mCH in \cite{chang2016multipeakons} and multipeakons of the 1-mCH in \cite{chang-szmigielski-m1CHlong}.
It is worth mentioning, however, that the technique of the present paper
is a modification of the one employed in \cite{chang-szmigielski-m1CHlong} and is distinct from the inhomogeneous string approach
adapted in \cite{chang2016multipeakons}.  As a result
we give a unified approach to both equations; this is accomplished by  putting  common interpolation problems front and center
of the solution to the inverse problem for \eqref{eq:Lovesys}.
Moreover, by solving \eqref{eq:m2CH_ode}, we furnish a family of isospectral flows
for the Sturm-Liouville system \eqref{eq:Lovesys}.
The full explanation of the connection between \eqref{eq:Lovesys} and
\eqref{eq:m2CH} is
reviewed in the following sections.

\section{The Lax formalism: the boundary value problem} \label{sec:Lax}
The Lax
pair for \eqref{eq:m2CH} can be written:
\begin{equation}
  \frac{\partial}{\partial x}
  \Psi_x=\frac12U \Psi, \quad  \Psi _t =\frac12 V \Psi, \quad  \Psi=\begin{bmatrix} \Psi_1\\\Psi_2 \end{bmatrix} ,
  \label{2chlax}
\end{equation}
where
\begin{equation}
\label{2chlax_uv}
  U =
  \begin{pmatrix}
    -1 & \lambda m \\
    -\lambda n & 1 \\
  \end{pmatrix}
  ,\quad
  V=
  \begin{pmatrix}
    4\lambda^{-2}+Q & -2\lambda^{-1}(u-u_x)-\lambda mQ\\
    2\lambda^{-1}(v+v_x)+\lambda nQ & -Q
  \end{pmatrix}
  ,
\end{equation}
with $Q=(u-u_x)(v+v_x)$.  This form of the Lax pair is a slight modification (in particular, $V$ have slightly different diagonal terms) of the original Lax pair in \cite{song2011new}.  The modification
is needed for consistency with the boundary value problem to be discussed below in Remark \ref{rem:V}.

We recall that for smooth
solutions the role of the Lax pair is
to provide the Zero Curvature representation
$\frac{\partial U}{\partial t}-\frac{\partial V}{\partial x}+\tfrac12 [U,V]=0$ of
the original non-linear partial differential equation, in our case
\eqref{eq:m2CH}.  In the non-smooth case the situation is more subtle as explained in Appendix \ref{lax_m2ch}.

Following \cite{chang2016multipeakons} we perform
the gauge transformation $\Phi=\textrm{diag}(\frac{e^{\frac x2}}{\lambda}, e^{-\frac x2}) \Psi$ which leads to a simpler $x$-equation
\begin{equation}\label{eq:xLax}
\Phi_x=\begin{bmatrix}0 & h\\
-z g& 0 \end{bmatrix} \Phi, \qquad    g=\sum_{j=1} ^Ng_j \delta_{x_j}, \qquad h=\sum_{j=1} ^Nh_j \delta_{x_j},
\end{equation}
where $g_j=n_j e^{-x_j}, \, h_j =m_j e^{x_j}, \, z=\lambda^2 $, and thus  $g_jh_j=m_jn_j$.
We note that \eqref{eq:xLax}, that is the $x$ member of the Lax pair, has the form of  the Sturm-Liouville
problem given by \eqref{eq:Lovesys}, provided we specify the boundary conditions.  Our initial goal is to solve \eqref{eq:xLax}
subject to boundary conditions $\Phi_1(-\infty)=0, \, \Phi_2(+\infty)=0$ which are chosen in such a way as to remain invariant under the flow in
 $t$ whose infinitesimal change is generated by the matrix $V$ in the
 Lax equation.

Since the coefficients in the boundary value problem are distributions (measures), to make
\begin{equation}\label{eq:xLaxBVP}
\Phi_x=\begin{bmatrix}0 &h\\
-z g& 0 \end{bmatrix} \Phi, \qquad \Phi_1(-\infty)=\Phi_2(+\infty)=0,
\end{equation}
well posed, we need to define the multiplication of the measures $h$ and $g$ by $\Phi$.  As suggested by the results in Appendix \ref{lax_m2ch}
we require that $\Phi$ be left continuous and we subsequently define the terms $\Phi_{a}\delta_{x_j}=\Phi_a(x_j)\delta_{x_j}, a=1,2$.  This choice makes the Lax pair well defined as a distributional Lax pair and, as it is shown in the Appendix \ref{lax_m2ch},
the compatibility condition of the $x$ and $t$ members of the Lax pair indeed implies \eqref{eq:m2CH_ode}.  The latter result is more subtle than
a routine check of compatibility for smooth Lax pairs.

Since the right-hand side of \eqref{eq:xLaxBVP} is
zero on the complement of the support of $g$ and $h$, which in our case
consists of points $\{x_1, \dots, x_N\}$, the solution $\Phi$ is a piecewise constant function, which solves a finite difference equation.
\begin{lemma}\label{lem:forwardR}
Let $q_k=\Phi_1(x_{k}+), \, ~p_k=~\Phi_2(x_{k}+),$ then the difference form of the
boundary value problem \eqref{eq:xLaxBVP} reads:

\begin{equation}\label{dstring}
\begin{bmatrix}
  q_{k}\\
  p_{k}
\end{bmatrix}
=T_k\begin{bmatrix}
  q_{k-1}\\
  p_{k-1}
\end{bmatrix}, \qquad
T_k=\begin{bmatrix}
  1& h_k\\
  -z g_k&1
\end{bmatrix}, \qquad 1\leq k\leq N,
\end{equation}
where $q_0=0, \, p_0=1$, and the boundary condition on the right end (see \eqref{eq:xLaxBVP}) is satisfied whenever $ p_{N}(z)=0.  $

\end{lemma}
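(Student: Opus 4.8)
The plan is to integrate the ODE system \eqref{eq:xLaxBVP} across the real line, exploiting the fact that the coefficient matrix vanishes except at the discrete support $\{x_1,\dots,x_N\}$. First I would observe that on each open interval $(x_{k-1},x_k)$ the right-hand side of \eqref{eq:xLaxBVP} is identically zero, so $\Phi$ is constant there; combined with the left-continuity convention imposed above (and the definition $\Phi_a\delta_{x_j}=\Phi_a(x_j)\delta_{x_j}$), this means $\Phi$ is a piecewise-constant function with jumps only at the $x_k$, and I would set up the natural notation $\Phi(x_k+)=\begin{bmatrix}q_k\\ p_k\end{bmatrix}$ for the value just to the right of the $k$-th node, with $\Phi(x_k-)=\begin{bmatrix}q_{k-1}\\ p_{k-1}\end{bmatrix}$ by constancy on the preceding interval.

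Next I would compute the jump across a single node $x_k$. Integrating the first-order system over an infinitesimal interval $[x_k-\varepsilon,x_k+\varepsilon]$ and letting $\varepsilon\to 0$, the measure $h=\sum h_j\delta_{x_j}$ contributes $h_k$ times the left-continuous value $\Phi_1(x_k)=q_{k-1}$ to the first component, and similarly $-zg_k$ times $\Phi_1(x_k)=q_{k-1}$... more carefully, the off-diagonal structure gives $q_k-q_{k-1}=h_k\,\Phi_2(x_k)$ and $p_k-p_{k-1}=-zg_k\,\Phi_1(x_k)$, where by left-continuity $\Phi_1(x_k)=q_{k-1}$ and $\Phi_2(x_k)=p_{k-1}$. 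Rearranging yields exactly $q_k=q_{k-1}+h_k p_{k-1}$ and $p_k=-zg_k q_{k-1}+p_{k-1}$, which is the stated transfer-matrix relation $\begin{bmatrix}q_k\\ p_k\end{bmatrix}=T_k\begin{bmatrix}q_{k-1}\\ p_{k-1}\end{bmatrix}$ with $T_k=\begin{bmatrix}1 & h_k\\ -zg_k & 1\end{bmatrix}$.

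Then I would handle the boundary conditions. On $(-\infty,x_1)$ the solution is constant and must satisfy $\Phi_1(-\infty)=0$, so $\Phi_1\equiv 0$ there; the remaining scalar freedom is a normalization, and I would fix it as $\Phi_2\equiv 1$ on that interval, i.e. $q_0=0$, $p_0=1$, noting the problem is linear and homogeneous so this choice is harmless. Iterating the recursion gives $\begin{bmatrix}q_N\\ p_N\end{bmatrix}=T_N T_{N-1}\cdots T_1\begin{bmatrix}0\\1\end{bmatrix}$, and on $(x_N,+\infty)$ the solution is again constant with value $\begin{bmatrix}q_N\\ p_N\end{bmatrix}$. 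The right boundary condition $\Phi_2(+\infty)=0$ therefore holds precisely when $p_N(z)=0$, which is the final assertion.

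The only genuinely delicate point — and hence the step I would treat most carefully — is the jump computation: naively multiplying a measure by a function with a jump at the atom is ambiguous, and the correct contribution depends entirely on the left-continuity convention adopted for $\Phi$ (in contrast to the symmetric-average convention used for $Q$). I would make explicit that with this convention the value fed into the $k$-th atom is the pre-jump value $(q_{k-1},p_{k-1})$, so that the recursion is lower/upper-triangular in exactly the way that produces $T_k$; any other convention (right-continuous, or averaged) would change $T_k$ and break the compatibility with the $t$-flow established in Appendix \ref{lax_m2ch}. Everything else is a routine unwinding of definitions.
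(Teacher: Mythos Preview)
Your proposal is correct and follows exactly the approach the paper has in mind: the paper does not give a detailed proof of this lemma, merely noting in the preceding paragraph that $\Phi$ is piecewise constant on the complement of the support of $g,h$ and hence satisfies a finite difference equation. Your argument---constancy between atoms, jump computation via integration across $x_k$ using the left-continuity convention, and matching the boundary conditions at $\pm\infty$---is precisely the unwinding the paper leaves to the reader, including the correct emphasis on how the left-continuity rule (as opposed to the $\langle\cdot\rangle$ convention) determines the form of $T_k$.
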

By inspection we obtain the following corollary.
\begin{corollary} \label{cor:pq-degrees}

$q_{k}(z)$  is a polynomial of degree $\lfloor\frac{k-1}{2}\rfloor$ in $z$, and $p_{k}(z)$ is a polynomial of degree $\lfloor\frac{k}{2}\rfloor$, respectively.

\end{corollary}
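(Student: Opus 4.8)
The plan is to reduce Corollary \ref{cor:pq-degrees} to a short induction on $k$ driven by the scalar form of the recursion in Lemma \ref{lem:forwardR}. Writing \eqref{dstring} out componentwise gives
\begin{equation*}
q_k=q_{k-1}+h_k\,p_{k-1},\qquad p_k=p_{k-1}-z\,g_k\,q_{k-1},
\end{equation*}
with $q_0=0$, $p_0=1$; in particular $q_1=h_1$ and $p_1=1$, which settles the base case $k=1$ since $\floor{0/2}=0=\floor{1/2}$ and avoids any convention about the degree of the zero polynomial.

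For the \emph{upper} bounds $\deg q_k\le\floor{\frac{k-1}{2}}$ and $\deg p_k\le\floor{\frac k2}$ one argues by induction directly from the recursion: $\deg q_k\le\max\bigl(\deg q_{k-1},\deg p_{k-1}\bigr)$ and $\deg p_k\le\max\bigl(\deg p_{k-1},\,1+\deg q_{k-1}\bigr)$, and a one-line check of floors gives $\max\bigl(\floor{\frac{k-2}{2}},\floor{\frac{k-1}{2}}\bigr)=\floor{\frac{k-1}{2}}$ and $\max\bigl(\floor{\frac{k-1}{2}},\,1+\floor{\frac{k-2}{2}}\bigr)=\floor{\frac k2}$. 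So the only point requiring care is that the stated degrees are actually \emph{attained}, i.e.\ that the top-degree coefficient never vanishes.

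To control this, carry the leading coefficients $a_k$ of $q_k$ and $b_k$ of $p_k$ (at the claimed degrees) along in the induction. Two of the four parity cases are automatic: when $k$ is odd, $q_{k-1}$ has strictly smaller degree than $h_k p_{k-1}$, so $a_k=h_k b_{k-1}$; when $k$ is even, $p_{k-1}$ has strictly smaller degree than $z g_k q_{k-1}$, so $b_k=-g_k a_{k-1}$ — a single contributing term, and since $g_k,h_k>0$ it cannot vanish. The remaining cases ($k$ even for $q$, $k$ odd for $p$) produce a genuine sum of two terms of equal degree, $a_k=a_{k-1}+h_k b_{k-1}$ resp.\ $b_k=b_{k-1}-g_k a_{k-1}$, so here one must exclude cancellation. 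This is handled by also propagating the sign pattern of $(a_k,b_k)$, which one checks is periodic in $k$ with period $4$, namely $(+,+),(+,-),(-,-),(-,+)$ for $k\equiv 1,2,3,0\pmod 4$; granting this pattern at stage $k-1$, the two summands above always share a sign and hence reinforce. Positivity of all $g_j,h_j$ (equivalently of all $m_j,n_j$) is exactly what makes this sign bookkeeping — the only substantive step — go through; everything else is the routine floor arithmetic indicated above.
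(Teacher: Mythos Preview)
Your argument is correct. The paper's own ``proof'' is a single line (``By inspection we obtain the following corollary''), so you are supplying the details rather than offering an alternative route. Your induction on the recursion from Lemma~\ref{lem:forwardR} is exactly the computation the paper has in mind; the upper-bound floor arithmetic is routine, and your period-$4$ sign pattern $(+,+),(+,-),(-,-),(-,+)$ for the leading coefficients $(a_k,b_k)$ is the clean way to rule out cancellation in the two mixed-parity cases. I checked the four residue classes and the pattern does propagate as you claim, using only $g_j,h_j>0$.

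One remark on context: the paper can afford to be casual here because a few lines later (Lemma~\ref{lem:seriessol-} and Corollary~\ref{cor:solqkpk}) it writes down $q_k$ and $p_k$ explicitly as $\sum_l\big(\sum_{I<J} h_Ig_J\big)(-z)^l$, from which the exact degree is immediate --- the top coefficient is a sum of strictly positive products $h_Ig_J$ over interlacing index sets. That is the paper's implicit justification for ``by inspection'', and it gives the same conclusion with no sign-tracking at all. Your approach has the virtue of being self-contained at this point in the exposition and of making transparent exactly where the positivity hypothesis on the masses enters.
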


\begin{remark} \label{rem:V}
Note
that $\det(T_k)=1+zh_kg_k=1+zm_kn_k\neq 1$.  In other words the setup we are developing goes beyond an $SL_2$ theory.   In order to understand
the origin of this difference we go back to the original Lax pair \eqref{2chlax_uv}.  If we assumed the matrix $V$ to be traceless,
with some coefficient $\alpha(\lambda)$ on the diagonal,
then in the asymptotic region $x>>0$ the second
equation in \eqref{2chlax} would
read
\begin{equation*}
\begin{bmatrix} \dot a e^{-x/2}\\ \dot b e^{x/2} \end{bmatrix}
=\tfrac12 \begin{bmatrix} \alpha(\lambda)& -4\lambda^{-1} u_+ e^{-x}\\
0 & -\alpha(\lambda) \end{bmatrix} \begin{bmatrix} ae^{-x/2}\\ be^{x/2}\end{bmatrix} ,
\end{equation*}
where $u(x)=u_+ e^{-x}$ in the asymptotic region.
The simplest way to implement the isospectrality is to
require that $\dot b=0$, which requires gauging away $-\alpha(\lambda)$.
This is justified on general grounds by observing that for any Lax equation
$\dot L=[B, L]$, $B$ is not uniquely defined.  In particular, any
term commuting with $L$ can be added to $B$ without changing
Lax equations.  In our case, we are adding a multiple of the identity
to the original formulation in \cite{song2011new}.  Furthermore,
the gauge transformation leading up to \eqref{eq:xLax} is not unimodular which takes us outside of the $SL_2$ theory.
\end{remark}

The polynomials $p_k, q_k$ can be constructed integrating  directly the initial value problem
\begin{equation}\label{eq:xLaxIVP}
\Phi_x=\begin{bmatrix}0 &h\\
-z g& 0 \end{bmatrix} \Phi, \qquad \Phi_1(-\infty)=0, \quad \Phi_2(-\infty)=1,
\end{equation}
with the same rule regarding the multiplication of discrete measures $g,h$
by piecewise smooth left-continuous functions as specified for
\eqref{eq:xLaxBVP}.

  With this convention in place we obtain the following
 characterization of $\Phi_1(x)$ and $\Phi_2(x)$, proven in its entirety in  \cite[Lemma 2.4]{chang-szmigielski-m1CHlong}.
 \begin{lemma} \label{lem:seriessol-}  Let us set
 \begin{equation*}
             \Phi_1(x)=\sum\limits_{0\leq k} \Phi_1^{(k)}(x) z^k, \quad \qquad \Phi_2(x)=            \sum\limits_{0\leq k} \Phi_2^{(k)}(x) z^k.
 \end{equation*}
Then
\begin{equation*}
    \Phi_1^{(0)}(x)=\int\limits_{\eta_0 <x}h(\eta_0) d\eta_0, \quad \qquad \Phi_2^{(0)}(x)=1
\end{equation*}
~for $k=0$, otherwise
\begin{subequations}
\begin{align}
\Phi_1^{(k)}(x)&=(-1)^k \int\limits_{\eta_0<\xi_1<\eta_1<\dots<\xi_k<\eta_k<x}
\big[\prod_{p=1}^k h(\eta_p)g(\xi_p)\big]   h(\eta_0)\, \,  d\eta_0 d\xi_1\dots d\eta_k,   &\\
\Phi_2^{(k)}(x)&=(-1)^k \int\limits_{\xi_1<\eta_1<\dots<\xi_k<\eta_k<x}
\big[\prod_{p=1}^k g(\eta_p)h(\xi_p)\big] \,\,  d\xi_1\dots d\eta_k.   &
\end{align}
\end{subequations}
If the points of the support of  the discrete measure $g$ (and $h$) are ordered $x_1<x_2<\dots<x_N$
then
 \begin{subequations}
\begin{align}
\Phi_1^{(k)}(x)&=(-1)^k \sum_{\substack{j_0<i_1<j_1<\dots< i_k< j_k\\x_{j_k}<x}}
\, \big[\prod_{p=1}^k h_{j_p}g_{i_p}\big] \,\,  h_{j_0} ,  \,  &\\
\Phi_2^{(k)}(x)&=(-1)^k\sum_{\substack{i_1< j_1<\dots< i_k< j_k\\x_{j_k}<x}}
\, \big[\prod_{p=1}^k g_{j_p}h_{i_p}\big] \,\,   .   \, &
\end{align}
\end{subequations}
\end{lemma}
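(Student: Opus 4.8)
The plan is to recast the initial value problem \eqref{eq:xLaxIVP} as a pair of Volterra integral equations and then read off the coefficients of the $z$-expansion by Picard iteration. Integrating \eqref{eq:xLaxIVP} from $-\infty$ and using the left-continuity convention for the products $h\Phi_2$ and $g\Phi_1$ (so that an atom at $x_j$ is felt only once $x>x_j$) gives
\begin{equation*}
\Phi_1(x)=\int_{\eta<x} h(\eta)\,\Phi_2(\eta)\,d\eta,
\qquad
\Phi_2(x)=1-z\int_{\xi<x} g(\xi)\,\Phi_1(\xi)\,d\xi .
\end{equation*}
For finitely supported discrete $g,h$ the Picard iterates stabilize after finitely many steps, so $\Phi_1,\Phi_2$ are polynomials in $z$ and the coefficients $\Phi_a^{(k)}$ are unambiguous (for general measures one instead invokes the usual Volterra bound to get analyticity in $z$). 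Comparing coefficients of $z^k$ in the two integral equations yields $\Phi_2^{(0)}\equiv 1$ together with
\begin{equation*}
\Phi_1^{(k)}(x)=\int_{\eta<x} h(\eta)\,\Phi_2^{(k)}(\eta)\,d\eta,
\qquad
\Phi_2^{(k)}(x)=-\int_{\xi<x} g(\xi)\,\Phi_1^{(k-1)}(\xi)\,d\xi\quad(k\ge1).
\end{equation*}

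I would then establish the two closed formulas simultaneously by induction on $k$. The base case is immediate: $\Phi_2^{(0)}=1$ forces $\Phi_1^{(0)}(x)=\int_{\eta_0<x}h(\eta_0)\,d\eta_0$. For the inductive step, insert the assumed expression for $\Phi_1^{(k-1)}$ into the recursion for $\Phi_2^{(k)}$; since the new outer variable exceeds all of $\eta_0<\xi_1<\cdots<\eta_{k-1}$, after the relabelling $\xi_1\mapsto\eta_0,\ \eta_1\mapsto\xi_1,\ \dots$ the nested domains fuse into the ordered simplex $\xi_1<\eta_1<\cdots<\xi_k<\eta_k<x$, the sign becomes $(-1)^{k-1}\cdot(-1)=(-1)^k$, and the factors line up as $h(\xi_p),g(\eta_p)$ — exactly the claimed formula for $\Phi_2^{(k)}$. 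Feeding that in turn into the recursion for $\Phi_1^{(k)}$ and relabelling once more fuses the domains into $\eta_0<\xi_1<\cdots<\xi_k<\eta_k<x$ with factor string $h(\eta_0),g(\xi_1),h(\eta_1),\dots,g(\xi_k),h(\eta_k)$, which is the asserted expression.

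For discrete $g=\sum_j g_j\delta_{x_j}$, $h=\sum_j h_j\delta_{x_j}$ one finally specializes the iterated integrals: each integration variable ranges over $\{x_1,\dots,x_N\}$, and because every inequality defining the simplex is strict — a consequence of the left-continuity convention, which is also why a diagonal term $\delta_{x_j}\delta_{x_j}$ never arises — an integral over $\eta_0<\xi_1<\cdots<\eta_k<x$ collapses to the sum over strictly increasing index tuples $j_0<i_1<j_1<\cdots<i_k<j_k$ with $x_{j_k}<x$, reproducing the stated sums. I expect the only genuine work to be the index bookkeeping in the inductive step: keeping the alternating $h$/$g$ pattern correct and making sure the ``$<$'' in the nested domains is honestly strict, so that composing the Volterra integrals produces the full open simplex with no diagonal contributions. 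In the discrete case there is also the shortcut of bypassing the integral equations and simply expanding the transfer-matrix product $T_N\cdots T_1$ of \eqref{dstring}, whose alternating nonzero terms generate the same combinatorial sums.
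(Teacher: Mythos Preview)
Your approach is correct and is precisely the standard one: rewrite the initial value problem as a pair of Volterra equations, expand in powers of $z$, and verify the iterated-integral formulas by induction, then specialize to discrete measures. The paper itself does not prove this lemma at all---it simply quotes the result and states that it is ``proven in its entirety in \cite[Lemma~2.4]{chang-szmigielski-m1CHlong}''---so there is no proof in the present paper to compare against, but your argument is exactly the natural one and almost certainly coincides with what the cited reference does.

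One small remark on bookkeeping: in your inductive step for $\Phi_2^{(k)}$ the relabelling you describe (``$\xi_1\mapsto\eta_0,\ \eta_1\mapsto\xi_1,\dots$'') is written in the wrong direction; what you actually do is rename the old variables $\eta_0,\xi_1,\eta_1,\dots,\eta_{k-1}$ and the new outer variable as the new $\xi_1,\eta_1,\dots,\xi_k,\eta_k$. This is cosmetic, but since you yourself flag the index bookkeeping as the delicate point it is worth writing out the substitution explicitly once. Your observation that the left-continuity convention forces strict inequalities throughout (hence no diagonal contributions in the discrete case) is the key analytic point and you have it right.
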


To simplify the formulas in Lemma \ref{lem:seriessol-} we introduce the following notation.
Our basic set of indices is $\{1,2, \dots, N\}$ which we denote by
$[N]$ and if $k\leq N$ we set $[k] = \{ 1,2,\dots,k \}$.
We will denote by  capital letters $I$  and~$J$ any subsets of these sets and use the notation
$\binom{[k]}{j}$ for the set of all $j$-element subsets of $[k]$, listed in increasing order; for example $I\in \binom{[k]}{j}$ means that
$I=\{i_1, i_2,\dots, i_j\}$ for some increasing sequence $i_1 < i_2 < \dots < i_j\leq ~k$.
Furthermore, given a multi-index $I=\{i_1, i_2,\dots, i_j\}$ and a set of
numbers $a_{i_1}, \dots, a_{i_j}$ indexed by $I$, we will abbreviate $a_I=a_{i_1}a_{i_2}\dots a_{i_j}$ etc.

\begin{definition}\label{def:bigIndi} Let $I,J \in \binom{[k]}{l}$, or $I\in \binom{[k]}{l+1},J \in \binom{[k]}{l}$.
\mbox{}

Then  $I, J$ are said to be \emph{interlacing}, denoted $I<J$, if
\begin{equation*}
  \label{eq:interlacing}
    i_{1} <j_{1} < i_{2} < j_{2} < \dotsb < i_{l} <j_{l}
\end{equation*}
or,
\begin{equation*}
    i_{1} <j_{1} < i_{2} < j_{2} < \dotsb < i_{l} <j_{l}<i_{l+1},
\end{equation*}
in the latter case.
The same notation is used
 in the degenerate case $I\in \binom{[k]}{1}, J \in \binom{[k]}{0}$.
\end{definition}
Using this notation we can now express the results of Lemma \ref{lem:seriessol-} in a compact form.
\begin{corollary}\label{cor:solqkpk}
The unique solutions $q_k$ and $p_k$ to  the recurrence equations
\eqref{dstring}  with initial conditions $q_0=0, p_0=1$ are given by
\begin{subequations}
\begin{align}
q_k(z)&=
\sum_{l=0}^{\lfloor\frac{k-1}{2}\rfloor}\Big(\sum_{\substack{I\in \binom{[k]}{l+1}, J\in \binom{[k]}{l}\\ I<J}}
\, h_Ig_J\Big) (-z)^l,  \\
p_{k}(z)&=1+\sum_{l=1}^{\lfloor\frac{k}{2}\rfloor}\Big(\sum_{\substack{I,J \in \binom{[k]}{l}\\ I<J}} h_I g_J
\, \Big)(-z)^l.
\end{align}
\end{subequations}
\end{corollary}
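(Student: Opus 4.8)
The plan is to read the two formulas off the series solution of the initial value problem. First I would record that, by Lemma~\ref{lem:forwardR}, the pair $(q_k,p_k)$ is exactly $\bigl(\Phi_1(x_k+),\Phi_2(x_k+)\bigr)$ for the solution $\Phi$ of \eqref{eq:xLaxIVP}: the recursion \eqref{dstring} is initialized by $(q_0,p_0)=(0,1)$, which is the boundary data $\Phi_1(-\infty)=0$, $\Phi_2(-\infty)=1$, and each transfer matrix $T_k$ records the jump of $\Phi$ across the single atom located at $x_k$. Hence it suffices to evaluate the expansions of Lemma~\ref{lem:seriessol-} at the point $x=x_k+$.

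Second, I would specialize the ordered-support formulas of Lemma~\ref{lem:seriessol-} to $x=x_k+$. Since the points $x_1<\dots<x_N$ are distinct, the support constraint forcing the largest label appearing in a term to lie below $x$ becomes the requirement that this label be $\le k$ once $x=x_k+$ --- the right limit lies just past the atom at $x_k$, so that atom is included. Thus every index in the $z^l$-coefficient of $\Phi_1$ or $\Phi_2$ belongs to $[k]$, and one is left with sums of the form $\sum_{j_0<i_1<j_1<\dots<i_l<j_l\le k} h_{j_0}\prod_{p=1}^l h_{j_p}g_{i_p}$ (times $(-1)^l$) for $\Phi_1^{(l)}(x_k+)$ and $\sum_{i_1<j_1<\dots<i_l<j_l\le k}\prod_{p=1}^l g_{j_p}h_{i_p}$ (times $(-1)^l$) for $\Phi_2^{(l)}(x_k+)$, together with the degenerate values $\Phi_1^{(0)}(x_k+)=\sum_{j_0\le k}h_{j_0}$ and $\Phi_2^{(0)}(x_k+)=1$.

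Third comes the repackaging into the notation of Definition~\ref{def:bigIndi}, which is the only step needing care. For $\Phi_1^{(l)}$ I would put $I=\{j_0,j_1,\dots,j_l\}$ (the labels carried by $h$) and $J=\{i_1,\dots,i_l\}$ (those carried by $g$); then $I\in\binom{[k]}{l+1}$, $J\in\binom{[k]}{l}$, the single increasing chain $j_0<i_1<j_1<\dots<i_l<j_l$ is precisely the interlacing $I<J$ of Definition~\ref{def:bigIndi} (the case $\binom{[k]}{1}$ versus $\binom{[k]}{0}$ covering $l=0$), and $h_{j_0}\prod_{p=1}^l h_{j_p}g_{i_p}=h_Ig_J$. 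Since $(-1)^lz^l=(-z)^l$, summing over $l$ gives the stated expression for $q_k$. The identical bookkeeping with $I=\{i_1,\dots,i_l\}$, $J=\{j_1,\dots,j_l\}$, both in $\binom{[k]}{l}$, and the chain $i_1<j_1<\dots<i_l<j_l$ gives $p_k$, the $l=0$ term contributing the additive constant $1$. The upper limits $\floor{\frac{k-1}{2}}$ and $\floor{\frac{k}{2}}$ are forced either by Corollary~\ref{cor:pq-degrees} or directly by the counting inequalities $(l+1)+l\le k$ and $l+l\le k$ needed for an interlacing configuration to exist inside $[k]$, so the polynomials truncate exactly as written; uniqueness is automatic because \eqref{dstring} with its initial data determines $q_k$ and $p_k$ recursively.

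I expect the re-indexing in the third step --- matching the single increasing chain produced by the iterated integrals of Lemma~\ref{lem:seriessol-} against the alternating interlacing pattern of Definition~\ref{def:bigIndi}, and keeping track of which of $I$, $J$ absorbs the extra index --- to be the only point of substance; there is no analytic difficulty. As an alternative that bypasses Lemma~\ref{lem:seriessol-} altogether, one may verify the two formulas directly by induction on $k$ against \eqref{dstring}: splitting each interlacing sum over $[k]$ according to whether the top label $k$ is used, one observes that in a chain ending $\dots<i_l<j_l<i_{l+1}$ the largest entry lies in the $h$-set $I$, so deleting it pulls out $h_k$ and turns a $q$-sum into a $p$-sum, whereas in a chain ending $\dots<i_l<j_l$ the largest entry lies in the $g$-set $J$, so deleting it pulls out $-zg_k$ and turns a $p$-sum into a $q$-sum; this is precisely the content of $q_k=q_{k-1}+h_kp_{k-1}$ and $p_k=p_{k-1}-zg_kq_{k-1}$.
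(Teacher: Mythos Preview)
Your proposal is correct and follows exactly the route the paper intends: the corollary is stated immediately after Definition~\ref{def:bigIndi} as a restatement of the discrete-support formulas in Lemma~\ref{lem:seriessol-}, and your three steps (identify $(q_k,p_k)=(\Phi_1(x_k+),\Phi_2(x_k+))$, specialize the support constraint to $[k]$, repackage the chain as interlacing multi-indices) are precisely what that restatement amounts to. The inductive alternative you sketch at the end is also valid and could replace the appeal to Lemma~\ref{lem:seriessol-} entirely, but it is not the paper's argument.
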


We can now make a brief comment about the spectrum of the boundary value problem
\eqref{eq:xLaxBVP}. We observe that
a complex number z is an \textit{eigenvalue}  of the boundary value problem
\eqref{eq:xLaxBVP} if there exists a solution $\{q_k(z), p_k(z)\}$ to \eqref{dstring} for
which $p_{N}(z)=0$.   The set of all eigenvalues comprises the \textit{spectrum}  of the boundary value problem \eqref{eq:xLaxBVP}.
Our choice of boundary conditions was picked to ensure the
invariance of the spectrum under the time evolution.
To verify that the flow is isospectral (spectrum preserving) we
 examine the $t$ part of the Lax pair \eqref{2chlax} in the region $x>x_{N}$,
  as indicated in Remark \ref{rem:V} and perform the gauge transformation
  to determine the flow of $\Phi$.
\begin{lemma}\label{lem:t-evolution of qp}
Let $\{q_k, p_k\}$ satisfy the system of difference equations \eqref{dstring}.  Then the Lax equations \eqref{2chlax} imply
\begin{equation}\label{eq:tderqp}
  \dot q_{N}=\frac{2}{z}q_{N}-\frac{2u_+}{z}\,p_{N}, \qquad \dot p_{N}=0,
\end{equation}
where $u_+=\sum_{j=1}^{N}h_j$.
\end{lemma}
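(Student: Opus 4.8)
The plan is to work directly in the asymptotic region $x > x_N$, where both measures $g,h$ have empty support, so the $x$-equation in the gauge \eqref{eq:xLax} gives $\Phi$ constant there; in terms of the original variables $\Psi$ this means $\Psi_1 = \lambda q_N e^{-x/2}$, $\Psi_2 = p_N e^{x/2}$ for $x>x_N$, with the understanding that in this region $u(x) = u_+ e^{-x}$ and $u_x = -u_+e^{-x}$, hence $u-u_x = 2u_+e^{-x}$, while $v+v_x$ and $Q = (u-u_x)(v+v_x)$ decay like $e^{-x}$ or faster. I would first substitute these asymptotic forms into the $t$-part $\Psi_t = \tfrac12 V\Psi$ of the Lax pair \eqref{2chlax}, using the explicit $V$ from \eqref{2chlax_uv} and retaining only the terms that survive as $x\to+\infty$: the $4\lambda^{-2}$ and $-2\lambda^{-1}(u-u_x)$ entries dominate, and everything involving $Q$ drops out (or is subleading) in that limit.

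The second step is to translate the gauge transformation $\Phi = \operatorname{diag}(\tfrac{e^{x/2}}{\lambda}, e^{-x/2})\Psi$ into an evolution equation for $\Phi$, i.e. compute $\dot\Phi = \operatorname{diag}(\tfrac{e^{x/2}}{\lambda}, e^{-x/2})\dot\Psi$ and re-express the right-hand side back in terms of $\Phi_1,\Phi_2$. Because $q_N, p_N$ are by construction the constant values of $\Phi_1,\Phi_2$ for $x>x_N$, reading off the two components will give a system $\dot q_N = A q_N + B p_N$, $\dot p_N = C q_N + D p_N$ with $A,B,C,D$ functions of $z=\lambda^2$ alone. From the structure of $V$ — in particular the fact that the $(2,1)$ entry $2\lambda^{-1}(v+v_x)+\lambda n Q$ vanishes in the limit $x\to+\infty$, and the $(2,2)$ entry is $-Q\to 0$ — one gets $C=D=0$, which is exactly $\dot p_N = 0$. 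For the first row, the $4\lambda^{-2}$ diagonal term contributes $\tfrac12\cdot 4\lambda^{-2} = 2z^{-1}$ to $A$, and the off-diagonal term $-2\lambda^{-1}(u-u_x) = -2\lambda^{-1}\cdot 2u_+e^{-x}$, once dressed by the gauge factors $\tfrac{e^{x/2}}{\lambda}$ acting on $\Psi_2 = p_N e^{x/2}$, produces a contribution proportional to $u_+/z$ times $p_N$ with no residual $x$-dependence; tracking the constants yields $B = -2u_+/z$. This gives precisely \eqref{eq:tderqp} with $u_+ = \sum_j h_j = \sum_j m_j e^{x_j}$, matching Lemma~\ref{lem:forwardR}'s normalization.

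The consistency check that makes this rigorous is that $\dot p_N = 0$ together with the explicit form of $V$ is exactly what encodes isospectrality: the boundary condition $\Phi_2(+\infty)=0$, i.e. $p_N(z)=0$, is preserved under the flow, so the eigenvalues do not move. I would include a sentence noting that although $Q m$, $Q n$ require the averaging convention of Appendix~\ref{lax_m2ch} at the points $x_j$, no such subtlety arises here because we evaluate everything strictly to the right of $x_N$, where $Q$ is smooth; the only input needed from the compatibility analysis is that the Lax pair is consistent, which is assumed. The main obstacle I anticipate is purely bookkeeping: keeping the powers of $\lambda$ versus $z=\lambda^2$ straight through the non-unimodular gauge transformation (the $\tfrac1\lambda$ in the first diagonal entry is what converts the $4\lambda^{-2}$ into $2/z$ and the $2\lambda^{-1}(u-u_x)$ into an $O(1/z)$ coefficient), and making sure the asymptotic expansion $u = u_+e^{-x}$ is used consistently so that no spurious $e^{-x}$ factors are left uncancelled in the final constant-coefficient system. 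Once the algebra is organized, \eqref{eq:tderqp} follows by simply reading off the two components.
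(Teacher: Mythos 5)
Your proposal is correct and is essentially the argument the paper intends: examine the $t$-part of the Lax pair in the region $x>x_N$, where $m=n=0$ and in fact $v+v_x\equiv 0$ (so $Q\equiv 0$ exactly, not merely asymptotically), then undo the time-independent gauge $\Phi=\mathrm{diag}(e^{x/2}/\lambda,e^{-x/2})\Psi$ and read off $\dot q_N=\tfrac2z q_N-\tfrac{2u_+}{z}p_N$, $\dot p_N=0$, the $e^{-x}$ factors cancelling identically. The only cosmetic improvement is to note that no limit $x\to+\infty$ is needed since the gauged system already has constant coefficients throughout $x>x_N$.
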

This lemma implies
that the polynomial $p_{N}(z)$ is independent of time and, in particular, its zeros, i.e. the spectrum, are time invariant. Furthermore, Corollary \ref{cor:solqkpk} allows one to write
the coefficients of $p_N(z)$ in terms of the variables $g_j, h_j$ (or  equivalently $m_j, n_j, x_j$) and thus identify $\lfloor\frac{N}{2}\rfloor$ constants of motion of the system \eqref{eq:m2CH_ode}:
\begin{equation} \label{eq:Mj}
M_j=\sum_{\substack{I,J \in \binom{[N]}{j}\\ I<J}} h_I g_J, \qquad 1\leq j\leq \lfloor\frac{N}{2}\rfloor.
\end{equation}

In the next section we will investigate the role of these constants in the
integrability of \eqref{eq:m2CH_ode}.
\section{Liouville integrability}
\subsection{Bi-Hamiltonian structure}
The results of the previous section, especially
the existence of $\lfloor\frac{N}{2}\rfloor$ constants, suggests that the system \eqref{eq:m2CH_ode} might be
integrable in a classical Liouville sense which is proven below.
For smooth solutions $u(x,t), v(x,t)$ of \eqref{eq:m2CH} the Hamiltonian structure, in fact a bi-Hamiltionian one, of the 2-mCH equation \eqref{eq:m2CH} was given by Tian and Liu in \cite{tian2013tri}. By employing two compatible Hamiltonian
operators
\begin{eqnarray*}
\mathscr{L}_1=
\left(
\begin{array}{cc}
D_xmD_x^{-1}mD_x&D_xmD_x^{-1}nD_x\\
D_xnD_x^{-1}mD_x&D_xnD_x^{-1}nD_x
\end{array}
\right),
\ \mathscr{L}_2=
\left(
\begin{array}{cc}
0&-D_x^2-D_x\\
D_x^2-D_x&0
\end{array}
\right)
\end{eqnarray*}
and the Hamiltonians
\begin{eqnarray}
H_1=\int n (u_x-u) dx,\qquad H_2=\frac{1}{2}\int n(v+v_x)(u-u_x)^2dx,
\end{eqnarray}
the 2-mCH equation \eqref{eq:m2CH}  can be written as
\begin{eqnarray}
\left(
\begin{array}{c}
m_t\\
\ \\
n_t
\end{array}
\right)=\mathscr{L}_1\left(
\begin{array}{c}
\frac{\delta H_1}{\delta m}\\
\ \\
\frac{\delta H_1}{\delta n}
\end{array}
\right)=\mathscr{L}_2\left(
\begin{array}{c}
\frac{\delta H_2}{\delta m}\\
\ \\
\frac{\delta H_2}{\delta n}
\end{array}
\right).
\end{eqnarray}
We note that the word \textit{ compatible} mentioned above means that an arbitrary linear
combination of the two Hamiltonian operators is also Hamiltonian.
Since we work in the non-smooth context the results obtained
for smooth functions will not hold in the non-smooth region, and one either has to formulate a limiting procedure leading to the non-smooth sector or
study the non-smooth sector independently.  At present, we prefer the
second approach mainly because it is technically simpler, and also because it is not clear at this point which Hamiltonian structures have meaningful limits.
\subsection{Hamiltonian vector field}
We focus on the peakon sector of \eqref{eq:m2CH} described by the
system of equations \eqref{eq:m2CH_ode}.
\begin{theorem}\label{thm:hamiltonianODEs}
The equations \eqref{eq:m2CH_ode} for the motion of N peakons of the original PDE \eqref{eq:m2CH} are given by Hamilton's equations of motion:
\begin{align}
\dot x_j=\{x_j,H\},\qquad \dot m_j=\{m_j,H\}, \qquad \dot n_j=\{n_j,H\},
\end{align}
for the Hamiltonian
\begin{equation*} 
H=-\frac{1}{2}\int n(\xi)(u_\xi(\xi)-u(\xi)) d\xi=2M_1+\sum_{k=1}^Nm_kn_k. 
\end{equation*} 
Here
$M_1$ is a constant of motion appearing in \eqref{eq:Mj}, the Poisson bracket $\{,\}$ is given by
\begin{subequations}\label{eq:Poisson bracket}
\begin{align}
&\{x_i,x_k\}=\sgn(x_i-x_k), \\
&\{m_i,m_k\}=\{m_i,x_k\}=\{n_i,n_k\}=\{n_i,x_k\}=\{n_i,m_k\}=0,
\end{align}
\end{subequations}
and the ordering condition $x_1<x_2<\cdots<x_N$ is in place.
\end{theorem}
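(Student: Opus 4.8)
The plan is to verify the three families of Hamilton's equations separately, after first rewriting $H$ in the stated closed form.

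First I would establish the identity $-\tfrac12\int n(\xi)(u_\xi(\xi)-u(\xi))\,d\xi = 2M_1+\sum_{k=1}^N m_kn_k$ by substituting the peakon ansatz \eqref{eq:peakonansatz}. Since $n=2\sum_j n_j\delta_{x_j}$ and the product $n\,(u_\xi-u)$ is to be read with the average convention, the integral collapses to $-\sum_j n_j\big(\avg{u_\xi}(x_j)-u(x_j)\big)$. A direct computation with $u=\sum_i m_i e^{-\abs{x-x_i}}$ and the ordering $x_1<\dots<x_N$ gives $\avg{u_\xi}(x_j)-u(x_j)=-m_j-2\sum_{i<j}m_ie^{x_i-x_j}$, and, recalling $h_i=m_ie^{x_i}$, $g_j=n_je^{-x_j}$, one obtains $2M_1+\sum_k m_kn_k$ with $M_1=\sum_{i<k}h_ig_k$ as in \eqref{eq:Mj}. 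This step is routine.

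Next, the equations $\dot m_j=\{m_j,H\}$ and $\dot n_j=\{n_j,H\}$ are immediate: by \eqref{eq:Poisson bracket} each $m_j$ (resp.\ $n_j$) Poisson-commutes with every coordinate $x_k,m_k,n_k$, hence $\{m_j,H\}=\{n_j,H\}=0$, matching $\dot m_j=\dot n_j=0$ from \eqref{eq:m2CH_ode}. The substantive part is $\dot x_j=\{x_j,H\}$. Since $x_j$ Poisson-commutes with all $m_k,n_k$, only the $2M_1$ term contributes, so $\{x_j,H\}=2\sum_{i<k}\{x_j,h_ig_k\}$. Using $\{x_j,h_i\}=h_i\,\sgn(x_j-x_i)$ and $\{x_j,g_k\}=-g_k\,\sgn(x_j-x_k)$, which follow from the bracket on the $x$'s, one gets $\{x_j,H\}=2\sum_{i<k}h_ig_k\big(\sgn(x_j-x_i)-\sgn(x_j-x_k)\big)$; splitting the sum according to the position of $j$ relative to $i<k$ yields
\begin{equation*}
\{x_j,H\}=2h_j\sum_{k>j}g_k+2g_j\sum_{i<j}h_i+4\sum_{i<j<k}h_ig_k.
\end{equation*}

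It remains to compute $\avg{Q}(x_j)$ with $Q=(u-u_x)(v+v_x)$ and check it equals the expression above. This is where the real care is needed: $u-u_x$ and $v+v_x$ jump at the $x_i$, so $\avg{Q}(x_j)=\tfrac12\big(Q(x_j+)+Q(x_j-)\big)$ is \emph{not} the product of the averages of the two factors, and one must work with the one-sided limits. Computing $(u-u_x)(x_j\pm)=2e^{-x_j}\sum_{i<j}h_i$ (with the extra term $2e^{-x_j}h_j$ included on the $+$ side) and $(v+v_x)(x_j\pm)=2e^{x_j}\sum_{i>j}g_i$ (with the extra term $2e^{x_j}g_j$ on the $-$ side) gives $Q(x_j+)=4\big(\sum_{i\le j}h_i\big)\big(\sum_{k>j}g_k\big)$ and $Q(x_j-)=4\big(\sum_{i<j}h_i\big)\big(\sum_{k\ge j}g_k\big)$; averaging reproduces exactly the displayed right-hand side, so $\dot x_j=\avg{Q}(x_j)=\{x_j,H\}$. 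Finally I would note that $\{\cdot,\cdot\}$ is a genuine Poisson bracket on the ordered region $x_1<\dots<x_N$: it is manifestly antisymmetric, and there $\sgn(x_i-x_k)$ equals the constant $\sgn(i-k)$, so all structure functions are constant and the Jacobi identity is trivial. The only obstacle is bookkeeping — tracking the boundary ($i=j$ or $k=j$) contributions consistently in both the Poisson-bracket sum and the evaluation of $\avg{Q}(x_j)$, where the non-multiplicativity of the average must be respected.
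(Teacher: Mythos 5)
Your proof is correct and follows essentially the same route as the paper's: reduce to $\{x_j,H\}=2\sum_{i<k}h_ig_k\bigl(\sgn(x_j-x_i)-\sgn(x_j-x_k)\bigr)$, split the sum according to the position of $j$ relative to the pair $i<k$, and identify the result with $\avg{Q}(x_j)$. The only difference is that you explicitly verify the integral form of $H$ and compute the one-sided limits of $(u-u_x)$ and $(v+v_x)$ to justify the final identification with $\avg{Q}(x_j)$ — steps the paper leaves implicit — and both computations check out.
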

\begin{proof}
Clearly,
$$\{m_j,h\}=\{n_j,h\}=0$$
under the above Poisson bracket, hence
$$\dot m_j=\dot n_j=0.$$
We proceed with the computation of $ \{x_j,H\}$:
\begin{align*}
 \{x_j,H\}&=\left\{x_j,\ 2\sum_{1\leq i<k\leq N}m_in_ke^{x_i-x_k}+\sum_{k=1}^Nm_kn_k\right\}\\
 &=2\sum_{1\leq i<k\leq N}m_in_k\left\{x_j,e^{x_i-x_k}\right\}\\
 &=2\sum_{1\leq i<k\leq N}m_in_k e^{x_i-x_k}\left(\sgn(x_j-x_i)-\sgn(x_j-x_k)\right)\\
 &=2\sum_{k=1}^{j-1}m_kn_je^{x_k-x_j}+2\sum_{k=j+1}^Nm_jn_ke^{x_j-x_k}+4\sum_{1\leq i<j<k\leq N}m_in_ke^{x_i-x_k}\\
 &\stackrel{\eqref{eq:Q}}{=}\langle Q \rangle(x_j),
\end{align*}
thus proving the results.
\end{proof}
\subsection{Liouville integrability}
We will introduce a natural Poisson manifold $(M, \pi)$ defined
by the Poisson bracket \eqref{eq:Poisson bracket}.  Since $m_j, n_j$
are constant we can restrict our considerations to the non-trivial
part of the Poisson structure involving only $x_j$.
Let us denote
\begin{equation}
M=\big\{x_1<x_2<\cdots<x_N\big\}
\end{equation}
 and define
 \begin{equation}
 \pi(f,g)=\{f,g\}=\sum_{1\leq i<j\leq N} \{x_i, x_j\} \frac{\partial{f}}{\partial{x_i}}\frac{\partial{g}}{\partial{x_j}}
\end{equation}
 for all differentiable functions $f,g$ on $M$.  Then $M$ has a structure of a  Poisson
 manifold $M$ to be denoted
 $(M, \pi)$.

One can check directly from \eqref{eq:Poisson bracket} that regardless
whether $N=2K$ or $N=2K+1$
\begin{lemma} \label{lem:rank pi}
 $rank(\pi)=2K$.
\end{lemma}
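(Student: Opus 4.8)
The plan is to compute the rank of the Poisson tensor $\pi$ directly from its matrix in the coordinates $x_1 < x_2 < \dots < x_N$, exploiting the very simple sign structure of $\{x_i,x_j\} = \sgn(x_i - x_j)$. On $M$ the ordering is fixed, so the structure matrix $A$ of $\pi$ has entries $A_{ij} = \sgn(x_i - x_j)$, i.e. $A_{ij} = -1$ for $i<j$, $A_{ij}=+1$ for $i>j$, and $0$ on the diagonal; equivalently $A = J - J^{\mathsf T}$ where $J$ is the strictly lower-triangular all-ones matrix, so $A$ is the constant skew-symmetric matrix $A_{ij} = \operatorname{sgn}(j-i)$. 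Since this matrix has constant entries, $\operatorname{rank}(\pi)$ is literally the matrix rank of $A$, independent of the point of $M$, so it suffices to show $\operatorname{rank} A = 2K$ when $N=2K$ or $N=2K+1$.

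First I would perform an explicit reduction. Consecutive-difference row and column operations turn $A$ into a banded form: replacing row $i$ by row $i$ minus row $i+1$ (for $i = 1,\dots,N-1$) sends each such row to $(0,\dots,0,-2,-1,1,0,\dots,0)$-type patterns, and after the analogous column operations one obtains (up to a nonzero scalar) a tridiagonal skew-symmetric matrix with $0$ on the diagonal and $\pm 1$ just off it — essentially the incidence form of a path graph on $N$ vertices. The rank of such a matrix is well known to be $N$ if $N$ is even and $N-1$ if $N$ is odd; either way it equals $2K$. Alternatively, and perhaps more cleanly for the write-up, one exhibits an explicit kernel and an explicit nondegenerate block: the vector $w$ with alternating entries $w_i = (-1)^i$ satisfies $Aw = 0$ precisely when $N$ is odd (a one-line telescoping check using $\sum_{j} \operatorname{sgn}(j-i)(-1)^j$), giving $\operatorname{rank} A \le N-1$ in the odd case and $\le N$ trivially in the even case; and restricting to an even-sized principal submatrix (or directly computing $\det A = 1$ for $N=2K$ by the same row reduction) shows the rank is at least $2K$.

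The main obstacle — really the only nontrivial point — is establishing the \emph{lower} bound on the rank, i.e. producing a $2K \times 2K$ minor that does not vanish. The reduction to the tridiagonal path-incidence matrix handles this, since that matrix manifestly has a $2K\times 2K$ nonsingular principal block (the standard symplectic-type block $\bigl[\begin{smallmatrix}0&1\\-1&0\end{smallmatrix}\bigr]^{\oplus K}$ after a further permutation). I would therefore carry out the steps in this order: (i) write down $A_{ij}=\operatorname{sgn}(x_i-x_j)$ and note constancy, so $\operatorname{rank}(\pi)$ is a pure linear-algebra quantity; (ii) apply the consecutive row/column operations to bring $A$ to tridiagonal skew form; (iii) read off the rank of that canonical tridiagonal matrix, distinguishing $N$ even from $N$ odd and observing both give $2K$. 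A remark can record, for later use, that in the odd case the kernel is spanned by the alternating vector $\sum_i (-1)^i \partial_{x_i}$, since that Casimir direction will presumably matter for the symplectic-leaf discussion that follows.
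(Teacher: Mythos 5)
Your proposal is correct, and it supplies an actual argument where the paper offers none: the text merely asserts that the rank ``can be checked directly'' from \eqref{eq:Poisson bracket}, so your direct linear--algebra verification is exactly the computation being alluded to. Both halves of your argument are sound: for $N$ odd the alternating vector $\sum_i(-1)^i\partial_{x_i}$ is indeed a Casimir (the telescoping check works precisely because $\sum_{j=1}^N(-1)^j=-1\neq 0$ only enters through cancellation when $N$ is odd), and for the lower bound the leading $2K\times 2K$ principal submatrix is the even-case matrix with determinant $1$, as your consecutive-difference reduction shows. Two cosmetic slips you should fix in a write-up: with the ordering $x_1<\dots<x_N$ one has $A_{ij}=\sgn(x_i-x_j)=\sgn(i-j)$, not $\sgn(j-i)$ (harmless for the rank, but inconsistent with your own $A=J-J^{\mathsf T}$); and the result of replacing row $i$ by row $i$ minus row $i+1$ is the vector with entries $-1$ in positions $i$ and $i+1$ and $0$ elsewhere, not a $(-2,-1,1)$ pattern --- the subsequent column differences then do produce the tridiagonal skew form with zero diagonal and $\mp 1$ off it, whose rank is $2\lfloor N/2\rfloor=2K$ as you state. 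Neither slip affects the validity of the argument.
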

Our objective now is to identify an appropriate number of
Poisson commuting quantities.  We will break down our analysis
according to whether $N$ is even or odd.
\begin{enumerate}
\item[1.] \textbf{Case $N=2K$.}
It follows from  Lemma \ref{lem:t-evolution of qp} and \eqref{eq:Mj} that the quantities
\begin{equation} \label{eq:Hjs-even}
M_j=\sum_{\substack{I,J \in \binom{[2K]}{j}\\ I<J}} h_I g_J, \qquad 1\leq j\leq K,
\end{equation}
with $ h_i=m_ie^{x_i},\ \  g_i=n_ie^{-x_i} $, form a set of $K$  constants  of motion for the system \eqref{eq:m2CH_ode}.
We claim that these constants of motion Poisson commute.
Short of giving a detailed proof, we would like to outline the argument
which goes back to J. Moser in  \cite{moser-three}.  Since $M_j$ commute with the Hamiltonian $H$ (see Theorem \ref{thm:hamiltonianODEs}) their
Poisson bracket $\{M_j, M_k\}$ commutes with $H$ and thus
$\{M_j, M_k\}$ is a constant of motion for every pair of indices $j, k$.
For cases for which the inverse spectral methods allow one to express
$M_j$ in terms of leading asymptotic positions, in particular
exploiting the asymptotic result that particles corresponding to
to adjacent positions $x_j, x_{j+1}$ pair up, while distinct pairs do not
interact, leads to a suppression of the majority of terms in $M_j$.
The precise argument is presented in \cite[Theorem 3.8]{chang-szmigielski-liouville1mCH}  while needed asymptotic results can be found in Theorem \ref{thm:evenass}.
\begin{theorem} \label{thm:even commute}
The Hamiltonians  $M_1,\cdots,M_K$  Poisson commute.
\end{theorem}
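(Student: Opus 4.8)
The plan is to follow the strategy sketched in the paragraph preceding the statement, namely the Moser argument from \cite{moser-three} adapted to the current setting exactly as in \cite[Theorem 3.8]{chang-szmigielski-liouville1mCH}. First I would invoke Theorem \ref{thm:hamiltonianODEs}, which shows that each $M_j$ is a constant of motion for the Hamiltonian flow generated by $H=2M_1+\sum_k m_kn_k$; since the flow preserves each $M_j$, the function $\{M_j,M_k\}$ is also a constant of motion for every pair $j,k$. The key observation is that a constant of motion is a function that is invariant along the flow, so it is entirely determined by its value on the $\omega$-limit (or equivalently large-time asymptotic) configurations. Hence it suffices to evaluate $\{M_j,M_k\}$ in the limit $t\to+\infty$ and show it vanishes there.

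Next I would bring in the asymptotic analysis. By the forward asymptotics for the even case (Theorem \ref{thm:evenass}, invoked here as a black box), as $t\to+\infty$ the $N=2K$ particles separate into $K$ pairs $(x_{2l-1},x_{2l})$, with the gap within each pair staying bounded while the gaps between distinct pairs grow linearly; moreover the "pair variables" have explicit asymptotic behaviour in terms of the spectral data. The crucial consequence is that in this limit the double sum defining $M_j=\sum_{I<J,\,|I|=|J|=j} h_I g_J$ collapses: because $h_i g_k = m_i n_k e^{x_i-x_k}$ decays exponentially unless $i$ and $k$ belong to the same asymptotic pair (or are adjacent), only the terms built from "pairing" indices survive, and $M_j$ becomes asymptotically a sum over choices of $j$ of the $K$ pairs, each contributing a bounded pair-factor. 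In particular, to leading order $M_j$ depends only on the $K$ bounded pair variables and not on the linearly-growing center-of-mass coordinates. I would make this precise by writing $M_j \sim e_j(\tau_1,\dots,\tau_K)$ where $e_j$ is the $j$-th elementary symmetric function and $\tau_l$ is the asymptotic pair variable attached to the $l$-th pair, just as in \cite{chang-szmigielski-liouville1mCH}.

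Then I would compute the Poisson bracket in these asymptotic coordinates. The point is that the Poisson bracket $\{x_i,x_k\}=\sgn(x_i-x_k)$ is, for well-separated pairs, essentially constant, and when one re-expresses the surviving variables $\tau_l$ in terms of $x_j$'s, the bracket $\{\tau_l,\tau_{l'}\}$ is seen to vanish for $l\neq l'$ (distinct pairs do not interact) while within a pair the contributions telescope. Since each $M_j$ is asymptotically a function of the commuting family $\{\tau_1,\dots,\tau_K\}$ alone, $\{M_j,M_k\}\to 0$ as $t\to+\infty$. Combined with the fact that $\{M_j,M_k\}$ is a constant of motion, this forces $\{M_j,M_k\}\equiv 0$ on all of $M$, which is the assertion.

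The main obstacle is the asymptotic step: one must be certain that \emph{all} off-pair terms in the double sum $M_j$ are genuinely negligible in the limit and that the bracket computation does not pick up hidden contributions from the cross-terms that decay in value but whose $x$-derivatives need not be small. This is handled by a careful bookkeeping of which monomials $h_I g_J$ in $M_j$ survive and at what rate the rest vanish, together with the uniform asymptotics of the pair variables; the needed estimates are precisely those recorded in Theorem \ref{thm:evenass}, and the combinatorial argument organizing the surviving terms is the one given in \cite[Theorem 3.8]{chang-szmigielski-liouville1mCH}, so rather than reproduce that lengthy analysis I would state the reduction clearly and cite it, indicating only the modifications forced by the non-$SL_2$ normalization ($\det T_k = 1+zm_kn_k$) noted in Remark \ref{rem:V}.
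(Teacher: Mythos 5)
Your proposal follows essentially the same route as the paper: the Moser-style observation that $\{M_j,M_k\}$ is itself a constant of motion (since each $M_j$ Poisson-commutes with $H$), followed by evaluation of the bracket in the $t\to+\infty$ asymptotic regime of Theorem \ref{thm:evenass}, where the pairing of adjacent peakons suppresses all but the surviving terms, with the detailed bookkeeping deferred to \cite[Theorem 3.8]{chang-szmigielski-liouville1mCH}. This matches the paper's own (sketched) argument, including the reliance on the same external reference for the combinatorial core.
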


\item[2.] \textbf{Case $N=2K+1$.}
Again, following Lemma \ref{lem:t-evolution of qp} and \eqref{eq:Mj} we see
\begin{equation} \label{eq:Hjs-odd}
M_j=\sum_{\substack{I,J \in \binom{[2K+1]}{j}\\ I<J}} h_I g_J, \qquad 1\leq j\leq K,
\end{equation}
with $h_i=m_ie^{x_i},\ \ \ g_i=n_ie^{-x_i}$,
are constants  of motion for the system \eqref{eq:m2CH_ode} in the odd case.

In the odd case, there is an extra constant of motion,
which can be computed from the value of the Weyl function $W(z)$ at $z=\infty$ (see Section \ref{sec:Lax} and Section \ref{sec:FSM} for details regarding the Weyl function). We point out that this constant is $0$ in the even case.    The computation is routine and produces
\begin{equation} \label{eq:c}
c=\frac{\sum\limits_{\substack{I\in \binom{[2K+1]}{K+1},J \in \binom{[2K+1]}{K}\\ I<J}} h_I g_J}{\sum\limits_{\substack{I,J \in \binom{[2K+1]}{K}\\ I<J}} h_I g_J}=\frac{\sum\limits_{\substack{I\in \binom{[2K+1]}{K+1},J \in \binom{[2K+1]}{K}\\ I<J}} h_I g_J}{M_K},
\end{equation}
which, in turn, gives an extra constant of motion
$$M_c=\sum\limits_{\substack{I\in \binom{[2K+1]}{K+1},J \in \binom{[2K+1]}{K}\\ I<J}} h_I g_J=\prod_{j=1}^{K+1}m_{2j-1}e^{x_j}\prod_{j=1}^{K}n_{2j}e^{-x_j},$$
so that $\{M_1,M_2,\cdots, M_K,M_c\}$ form a set of $K+1$  constants  of motion for the system \eqref{eq:m2CH_ode} in this case.

It is not hard to see by using the same argument as  in \cite[Theorem 3.9]{chang-szmigielski-liouville1mCH}  and the asymptotic results in Theorem \ref{thm:oddass}, that the following theorem holds.
\begin{theorem}\label{thm:odd commute}
The Hamiltonians $M_1,\cdots,M_K,M_c$ Poisson commute.
\end{theorem}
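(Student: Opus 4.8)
The plan is to follow the strategy behind Theorem~\ref{thm:even commute}, which goes back to Moser \cite{moser-three}: adapt the argument of \cite[Theorem 3.9]{chang-szmigielski-liouville1mCH} and feed in the odd-case large-time asymptotics of Theorem~\ref{thm:oddass}.

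\emph{Reduction to an asymptotic statement.} Each of $M_1,\dots,M_K$ and $M_c$ is a constant of motion for \eqref{eq:m2CH_ode}, hence Poisson commutes with the Hamiltonian $H$ of Theorem~\ref{thm:hamiltonianODEs}; by the Jacobi identity every bracket $\{M_i,M_j\}$ and $\{M_i,M_c\}$ again commutes with $H$ and is therefore itself a constant of motion. On the open region $M=\{x_1<\dots<x_N\}$ these brackets are real-analytic (the only non-smooth ingredient, $\sgn(x_i-x_k)$, is locally constant there) and they are constant along every trajectory. It thus suffices to show that $\{M_i,M_j\}\to 0$ and $\{M_i,M_c\}\to 0$ as $t\to+\infty$ along a dense set of trajectories; analyticity and continuity then force these functions to vanish identically.

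\emph{Asymptotic reduction of the Hamiltonians.} By Theorem~\ref{thm:oddass}, as $t\to+\infty$ the $N=2K+1$ particles split into the $K$ bounded pairs $\{x_{2l-1},x_{2l}\}$, $l=1,\dots,K$, together with the lone particle $x_{2K+1}$, all inter-cluster gaps tending to $+\infty$. Writing $h_i=m_ie^{x_i}$, $g_i=n_ie^{-x_i}$ in \eqref{eq:Hjs-odd}, the exponential part of a monomial $h_Ig_J$ with $I<J$ interlacing is $\exp\!\big(\sum_{p}(x_{i_p}-x_{j_p})\big)$, whose individual factors $e^{x_{i_p}-x_{j_p}}$ are bounded by $1$ and stay bounded away from $0$ only when $\{i_p,j_p\}=\{2l-1,2l\}$ for some $l$. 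Hence, modulo exponentially small corrections (in a suitable $C^1$ sense, as supplied by the proof of Theorem~\ref{thm:oddass}), $M_j$ is asymptotic to the $j$-th elementary symmetric function $\sum_{S\in\binom{[K]}{j}}\prod_{l\in S}m_{2l-1}n_{2l}\,e^{\,y_l}$ of the quantities $m_{2l-1}n_{2l}e^{y_l}$, where $y_l:=x_{2l-1}-x_{2l}$. For $M_c$ no limit is needed: the constraints $|I|=K+1$, $|J|=K$, $I<J$ on subsets of $[2K+1]$ force $I=\{1,3,\dots,2K+1\}$ and $J=\{2,4,\dots,2K\}$, so $M_c=\prod_{j=1}^{K+1}h_{2j-1}\prod_{j=1}^{K}g_{2j}$ is a positive constant in the $m_i,n_i$ times $\exp\!\big(\sum_{i=1}^{2K+1}(-1)^{i+1}x_i\big)=\exp\!\big(\sum_{l=1}^{K}y_l+x_{2K+1}\big)$.

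\emph{Vanishing of the limiting brackets.} Using $\{x_i,x_k\}=\sgn(x_i-x_k)$ together with the ordering $x_1<\dots<x_N$, one checks that for $l\neq l'$ all four brackets contributing to $\{y_l,y_{l'}\}$ coincide, so $\{y_l,y_{l'}\}=0$; likewise $\{y_l,x_{2K+1}\}=0$, and trivially $\{y_l,y_l\}=0$. Therefore the limiting forms of $M_1,\dots,M_K$, being functions of the pairwise-commuting quantities $y_1,\dots,y_K$ and of the constants $m_i,n_i$, Poisson commute among themselves, and $M_c$, being a monomial in $e^{y_1},\dots,e^{y_K},e^{x_{2K+1}}$ and the constants, Poisson commutes with all of them and with itself. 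Hence $\{M_i,M_j\}$ and $\{M_i,M_c\}$ are constants of motion with zero limit as $t\to+\infty$, so they vanish identically, proving the theorem.

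\emph{The main obstacle.} The delicate point is discarding the exponentially small tails of $M_j$ \emph{inside} a Poisson bracket: this needs the asymptotics of Theorem~\ref{thm:oddass} with derivative control, so that both the $o(1)$ remainders and their $x$-gradients are uniformly exponentially small and the cross terms of the form $\{E_i,o(1)\}$ genuinely drop out. The accompanying combinatorial bookkeeping --- identifying exactly which interlacing pairs $(I,J)$ survive the limit and how the lone index $2K+1$ threads through them (as a spectator in the surviving part of $M_j$ and as the extra factor $h_{2K+1}$ in $M_c$) --- is the technical core, and it parallels, with the obvious modifications for the unpaired particle, the analysis in \cite[Theorem 3.9]{chang-szmigielski-liouville1mCH}.
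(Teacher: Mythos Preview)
Your proposal follows exactly the strategy the paper indicates: Moser's argument as in \cite[Theorem 3.9]{chang-szmigielski-liouville1mCH}, fed with the odd-case asymptotics of Theorem~\ref{thm:oddass}. One slip to fix: by Theorem~\ref{thm:oddass} (see also the Remark following it) the pairing $\{x_{2l-1},x_{2l}\}$ with lone particle $x_{2K+1}$ occurs as $t\to-\infty$, not $t\to+\infty$; at $t\to+\infty$ it is $x_1$ that stalls and the bound pairs are $\{x_{2l},x_{2l+1}\}$. This does not damage the argument --- Moser's scheme works in either time direction, and your bracket computations $\{y_l,y_{l'}\}=\{y_l,x_{2K+1}\}=0$ remain valid --- so simply replace $t\to+\infty$ by $t\to-\infty$ (or re-do the surviving-term bookkeeping with the correct $t\to+\infty$ clusters).
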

\end{enumerate}

Combining now both theorems above we conclude (the proof is similar to the argument in  \cite[Theorem 3.10]{chang-szmigielski-liouville1mCH}).
\begin{theorem} \label{thm:Liouville integrability}
The conservative peakon system given by \eqref{eq:m2CH_ode}
is Liouville integrable.
\end{theorem}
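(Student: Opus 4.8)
The plan is to deduce Liouville integrability on the Poisson manifold $(M,\pi)$ directly from the structural facts already assembled: the rank computation of Lemma \ref{lem:rank pi}, the explicit constants of motion in \eqref{eq:Hjs-even} and \eqref{eq:Hjs-odd} (plus $M_c$ in the odd case), their mutual Poisson commutativity established in Theorem \ref{thm:even commute} and Theorem \ref{thm:odd commute}, and the fact that the Hamiltonian $H$ of Theorem \ref{thm:hamiltonianODEs} is a function of these quantities (indeed $H=2M_1+\sum_k m_kn_k$, with $\sum_k m_kn_k$ a Casimir-type constant since $m_j,n_j$ are frozen). Recall that on a Poisson manifold of dimension $N$ with a Poisson tensor of constant rank $2K$, complete (Liouville) integrability means exhibiting $K$ functions in involution whose differentials are generically independent, \emph{together with} the $N-2K$ independent Casimir functions that span the kernel of $\pi$ — the two families together giving $N-K$ independent commuting first integrals, which is the right count for a rank-$2K$ structure.

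The steps I would carry out, in order. First, record that since $m_j,n_j$ are constant under the flow and the nontrivial part of $\pi$ involves only the $x_j$, we may and do restrict attention to $(M,\pi)$ with coordinates $x_1<\cdots<x_N$; the $2N$ variables $m_j,n_j$ contribute only Casimirs and play no role in the counting. Second, invoke Lemma \ref{lem:rank pi}: $\mathrm{rank}(\pi)=2K$ whether $N=2K$ or $N=2K+1$. In the odd case this produces exactly one Casimir on $M$; one checks from \eqref{eq:Poisson bracket} that $\sum_{j}x_j$ (equivalently $\prod_j e^{x_j}$, or the product $M_c$ after incorporating the frozen $m_j,n_j$) Poisson-commutes with every $x_i$, hence is that Casimir. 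Third, assemble the commuting family: in the even case $\{M_1,\dots,M_K\}$ by Theorem \ref{thm:even commute}, and in the odd case $\{M_1,\dots,M_K,M_c\}$ by Theorem \ref{thm:odd commute}; in both cases this is a set of $K$ functionally independent functions in involution on the $2K$-dimensional symplectic leaves (in the odd case $M_c$ descends to a Casimir and the $K$ symplectic-leaf integrals are $M_1,\dots,M_K$). Fourth, note $H$ lies in this commuting algebra, so the peakon dynamics \eqref{eq:m2CH_ode} is a Hamiltonian flow integrable in the Liouville sense; the flow preserves each $M_j$ (Lemma \ref{lem:t-evolution of qp}) and, in the odd case, $M_c$.

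The only genuine content beyond bookkeeping is \textbf{functional independence} of the $M_j$ (and of $M_c$) on a dense open subset of $M$ — without it the involutivity is vacuous. Here I would argue exactly as in \cite[Theorem 3.10]{chang-szmigielski-liouville1mCH}: using the large-time asymptotics (Theorem \ref{thm:evenass} in the even case, Theorem \ref{thm:oddass} in the odd case), adjacent particles $x_{2i-1},x_{2i}$ pair up and separate from the other pairs, so that along such asymptotic trajectories $M_j$ is dominated by a single monomial in the $e^{x_i}$, and the resulting leading exponents are linearly independent as linear forms in the $x_i$; this forces the Jacobian of $(M_1,\dots,M_K)$ (resp. augmented by $M_c$) with respect to a suitable $K$-subset (resp. $(K+1)$-subset) of the $x_i$ to be nonzero on an open set, and since $M$ is connected, on a dense open set. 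I expect this asymptotic independence argument to be the main obstacle — or rather the one place real work is needed — but it is entirely parallel to the $1$-mCH case, so the proof reduces to citing that argument together with the asymptotic theorems of the present paper.

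\begin{proof}
Since the variables $m_j,n_j$ are constant along the flow \eqref{eq:m2CH_ode} and enter the Poisson bracket \eqref{eq:Poisson bracket} only through trivial (vanishing) brackets, we may restrict to the Poisson manifold $(M,\pi)$ in the coordinates $x_1<\dots<x_N$; the $m_j,n_j$ contribute only Casimir functions. By Lemma \ref{lem:rank pi}, $\mathrm{rank}(\pi)=2K$ for both $N=2K$ and $N=2K+1$, so the symplectic leaves of $(M,\pi)$ have dimension $2K$.

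If $N=2K$, then $\pi$ is nondegenerate on $M$ and $M$ is itself symplectic of dimension $2K$; by Theorem \ref{thm:even commute} the functions $M_1,\dots,M_K$ from \eqref{eq:Hjs-even} Poisson commute, and by Lemma \ref{lem:t-evolution of qp} they are constants of motion for \eqref{eq:m2CH_ode}. The Hamiltonian $H=2M_1+\sum_{k}m_kn_k$ of Theorem \ref{thm:hamiltonianODEs} belongs to the commutative algebra they generate, since $\sum_k m_kn_k$ is constant. It therefore suffices to verify that $dM_1,\dots,dM_K$ are linearly independent on a dense open subset of $M$. Using the large-time asymptotics of Theorem \ref{thm:evenass}, along which the particles pair up as $(x_{2i-1},x_{2i})$ with distinct pairs separating, each $M_j$ is asymptotically dominated by a single monomial in the $e^{x_i}$; the exponents of these dominant monomials are linearly independent linear forms in the $x_i$, so the Jacobian of $(M_1,\dots,M_K)$ with respect to a suitable $K$-element subset of $\{x_1,\dots,x_N\}$ is nonzero along these trajectories, hence on a dense open set by connectedness of $M$. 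This is precisely the argument of \cite[Theorem 3.10]{chang-szmigielski-liouville1mCH}. Thus \eqref{eq:m2CH_ode} is Liouville integrable when $N$ is even.

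If $N=2K+1$, then $\pi$ has corank $1$ on $M$. One checks directly from \eqref{eq:Poisson bracket} that $\sum_{j=1}^{N}x_j$ Poisson-commutes with every $x_i$, hence spans $\ker\pi$; equivalently, after incorporating the frozen variables $m_j,n_j$, the quantity $M_c$ of \eqref{eq:c} is a Casimir. By Theorem \ref{thm:odd commute} the functions $M_1,\dots,M_K,M_c$ Poisson commute, and by Lemma \ref{lem:t-evolution of qp} together with the explicit formula for $M_c$ they are all constants of motion; again $H$ lies in the algebra they generate. On each $2K$-dimensional symplectic leaf (a level set of $M_c$), the restrictions of $M_1,\dots,M_K$ form $K$ commuting integrals, and functional independence of $M_1,\dots,M_K,M_c$ on a dense open subset of $M$ follows from the asymptotic results of Theorem \ref{thm:oddass} by the same monomial-domination argument, now as in \cite[Theorem 3.9]{chang-szmigielski-liouville1mCH} and \cite[Theorem 3.10]{chang-szmigielski-liouville1mCH}. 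Hence \eqref{eq:m2CH_ode} is Liouville integrable when $N$ is odd as well.
\end{proof}
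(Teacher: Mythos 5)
Your proposal follows essentially the same route as the paper, which itself disposes of this theorem by combining Theorems \ref{thm:even commute} and \ref{thm:odd commute} with the functional-independence argument of \cite[Theorem 3.10]{chang-szmigielski-liouville1mCH} (via the asymptotics of Theorems \ref{thm:evenass} and \ref{thm:oddass}). Your version is more explicit about the Poisson-geometric bookkeeping (rank, Casimirs, symplectic leaves), which is a welcome expansion of the paper's one-line proof.

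There is, however, one concrete error: your identification of the Casimir in the odd case. With the bracket \eqref{eq:Poisson bracket} and the ordering $x_1<\dots<x_N$ one has
\begin{equation*}
\Bigl\{x_i,\ \sum_{j=1}^{N}x_j\Bigr\}=\sum_{j=1}^{N}\sgn(x_i-x_j)=(i-1)-(N-i)=2i-N-1,
\end{equation*}
which is nonzero for all $i\neq K+1$ when $N=2K+1$; so $\sum_j x_j$ is \emph{not} a Casimir, and the parenthetical ``equivalently $\prod_j e^{x_j}$'' is also wrong. The kernel of $\pi$ in the odd case is spanned by the alternating vector: solving $\sum_{j<i}c_j=\sum_{j>i}c_j$ for all $i$ forces $c_{i+1}=-c_i$, so the Casimir is $x_1-x_2+x_3-\dots+x_{2K+1}$. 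This is consistent with the rest of your argument, because $\log M_c$ equals a constant plus precisely this alternating sum (the $x_{2j-1}$ enter with $+$ and the $x_{2j}$ with $-$), so $M_c$ is indeed a Casimir and your subsequent use of it on the symplectic leaves stands. With that one correction the proof is sound and matches the paper's intended argument.
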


\section{Forward map: spectrum and spectral data} \label{sec:FSM}
The spectrum of the boundary value problem \eqref{eq:xLaxBVP} (or equivalently, \eqref{dstring}) is
given by the zeros of the polynomial $p_N(z)$.  However,
one cannot recover the measures $g$ and $h$ from the spectrum alone.
One needs extra data and the right object to turn to is
the \textit{Weyl function}
\begin{equation}\label{eq:defWeyl}
W(z)=\frac{q_{N}(z)}{p_{N}(z)},
\end{equation}
which in our case is a rational function with poles located at the
spectrum of the boundary value problem.  Another compelling reason for
using the Weyl function is that, as we will show below, the residues
of $W$ evolve linearly in time, while the value of $W$ at $z=\infty$ is
a constant of motion.
The investigation of the analytic properties of $W$ can be greatly
simplified by observing that $W$ is built out of solutions to the
recurrence \eqref{dstring}.  This suggests forming a recurrence
of Weyl functions whose solution at step $N$ is $W(z)$.
This leads to the following result which is an immediate consequence of
\eqref{dstring}.

 \begin{lemma}\label{lem:wdstring}
 Let $\{q_k, p_k\}$ be the solution to \eqref{dstring} and let $w_{2k}=\frac{q_{k}}{p_{k}}, w_{2k-1}=\frac{q_{k-1}}{p_{k}}$.
 Then
 \begin{subequations}
 \begin{align}
    w_1=0, \qquad w_{2k}&=(1+zm_kn_k)w_{2k-1}+h_k, \quad &&1\leq k\leq  N,  \label{eq:recweven}\\
    \qquad \frac{1}{w_{2k}}&=\frac{1}{w_{2k+1}}+zg_{k+1},  \quad &&1\leq k\leq N-1.   \label{eq:recwodd}
\end{align}
\end{subequations}
  \end{lemma}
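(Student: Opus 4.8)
The plan is to derive the two recurrences directly from the transfer-matrix recurrence \eqref{dstring}, by translating each step $T_k$ into an update of the appropriate ratio of the sequences $\{q_k\}$ and $\{p_k\}$. First I would record the base case: from $q_0=0$ and $p_0=1$ we get $p_1 = 1$ and $q_1 = h_1$ by one application of $T_1$, so $w_1 = q_0/p_1 = 0$, which is the claimed initialization. For the general step, I would write out the two scalar equations packaged in $\begin{bmatrix} q_k \\ p_k \end{bmatrix} = T_k \begin{bmatrix} q_{k-1} \\ p_{k-1} \end{bmatrix}$, namely $q_k = q_{k-1} + h_k p_{k-1}$ and $p_k = -z g_k q_{k-1} + p_{k-1}$.

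For \eqref{eq:recweven}, I divide $q_k = q_{k-1} + h_k p_{k-1}$ by $p_k$ and use $w_{2k} = q_k/p_k$, $w_{2k-1} = q_{k-1}/p_k$, obtaining $w_{2k} = w_{2k-1} + h_k \, p_{k-1}/p_k$. It then remains to identify $p_{k-1}/p_k$ with $1 + z m_k n_k$ times... no: rather I should observe that from $p_k = p_{k-1} - z g_k q_{k-1}$ one gets $p_{k-1} = p_k + z g_k q_{k-1}$, so $p_{k-1}/p_k = 1 + z g_k q_{k-1}/p_k = 1 + z g_k w_{2k-1}$; substituting back gives $w_{2k} = (1 + z g_k w_{2k-1}) w_{2k-1} + h_k$. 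Hmm, this is not quite the stated form, so the genuinely careful step is to reconcile the $z g_k w_{2k-1}$ term with the stated $z m_k n_k$. Tracing through, since $g_k h_k = m_k n_k$, multiplying the identity $w_{2k} = w_{2k-1} + h_k p_{k-1}/p_k$ by $p_k$ and using $\det T_k = 1 + z g_k h_k$ together with $q_k p_{k-1} - q_{k-1} p_k = h_k(p_{k-1}^2 + z g_k q_{k-1} p_{k-1})/\dots$ — the cleanest route is actually to compute $w_{2k} - w_{2k-1}$ and separately $(1+zm_kn_k)w_{2k-1}$, clear denominators to $p_k^2$ or $p_{k-1}p_k$, and verify the resulting polynomial identity in $q_{k-1},p_{k-1}$ using $\det T_k = 1+zm_kn_k$; I expect this bookkeeping to be the main obstacle, though it is purely mechanical.

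For \eqref{eq:recwodd}, I note $w_{2k+1} = q_k/p_{k+1}$ and $w_{2k} = q_k/p_k$, so $\frac{1}{w_{2k}} - \frac{1}{w_{2k+1}} = \frac{p_k - p_{k+1}}{q_k}$, and by the second component of $T_{k+1}$ we have $p_{k+1} = p_k - z g_{k+1} q_k$, hence $p_k - p_{k+1} = z g_{k+1} q_k$, giving $\frac{1}{w_{2k}} - \frac{1}{w_{2k+1}} = z g_{k+1}$, which is exactly \eqref{eq:recwodd}. This second recurrence is immediate; only the first requires the care noted above. I would close by remarking that iterating the pair of recurrences from $w_1=0$ through step $N$ reconstructs $w_{2N} = q_N/p_N = W(z)$, so the recurrence indeed computes the Weyl function.
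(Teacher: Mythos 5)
Your overall strategy is exactly the intended one (the paper offers no written proof, calling the lemma an immediate consequence of \eqref{dstring}): unpack the two scalar equations $q_k=q_{k-1}+h_kp_{k-1}$, $p_k=p_{k-1}-zg_kq_{k-1}$ and divide by the appropriate denominators. Your base case and your derivation of \eqref{eq:recwodd} are both correct. The only problem is in \eqref{eq:recweven}, where you make an algebra slip and then misdiagnose it as a genuine obstacle requiring determinant bookkeeping. Substituting $p_{k-1}/p_k=1+zg_k\,q_{k-1}/p_k=1+zg_kw_{2k-1}$ into $w_{2k}=w_{2k-1}+h_k\,p_{k-1}/p_k$ attaches the factor $1+zg_kw_{2k-1}$ to $h_k$, not to $w_{2k-1}$; you wrote $w_{2k}=(1+zg_kw_{2k-1})w_{2k-1}+h_k$, which is wrong. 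The correct substitution gives
\begin{equation*}
w_{2k}=w_{2k-1}+h_k\bigl(1+zg_kw_{2k-1}\bigr)=(1+zg_kh_k)\,w_{2k-1}+h_k=(1+zm_kn_k)\,w_{2k-1}+h_k,
\end{equation*}
using $g_kh_k=m_kn_k$ in the last step. That is precisely \eqref{eq:recweven}; no clearing of denominators or appeal to $\det T_k$ is needed. (If you want to be scrupulous, add one sentence noting that $p_k$ and $q_k$ are nonzero as rational functions in $z$ --- e.g.\ $p_k(0)=1$ and the constant term of $q_k$ is $\sum_{j\le k}h_j>0$ for $k\ge1$ by Corollary \ref{cor:solqkpk} --- so all the divisions are legitimate identities of rational functions.) With that one line repaired, your proof is complete and coincides with the paper's.
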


  We will now show that all these Weyl functions,
  including the original $W(z)$,  have the following properties in
  common:
  \begin{enumerate}
  \item they all have simple poles located on $\R_+$;
  \item all the residues are positive;
  \item the values at $z=\infty$ are non-negative.
  \end{enumerate}
  The rational functions of this type have been studied, as a special case,  in the famous
  memoir by T. Stieltjes \cite{stieltjes}.  The most relevant for our studies
  is the following theorem which is a special case of
  a more general theorem proved by Stieltjes.
  \begin{theorem} [T. Stieltjes] \label{thm:Stieltjes} Any rational function $F(z)$ admitting the
integral representation
\begin{equation}\label{eq:Stieltjesintegral}
F(z)=c+\int \frac{d\nu(x)}{x-z},
\end{equation}
where $d\nu(x)$ is the (Stieltjes) measure corresponding to the piecewise constant
non-decreasing function $\nu(x)$ with finitely many jumps in $\R_+$
has a finite (terminating) continued fraction expansion
\begin{equation}\label{eq:Stieltjescf}
F(z)=c+\cfrac{1}{a_1 (-z)+\cfrac{1}{a_2+\cfrac{1}{a_3(-z)+\cfrac{1}{\ddots}}}},
\end{equation}
where all $a_j>0$ and, conversely, any rational function with this type of
a continued fraction expansion has the integral representation \eqref{eq:Stieltjesintegral}.
\end{theorem}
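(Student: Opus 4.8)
The plan is to prove both implications by the classical ``peeling'' algorithm that manufactures the continued fraction, carrying a positivity invariant through the recursion. Since $d\nu$ is a finite sum of point masses $\sum_{i=1}^{n}b_i\,\delta_{x_i}$ with every $b_i>0$ and every $x_i\in\R_+$, the representation \eqref{eq:Stieltjesintegral} is literally $F(z)=c+\sum_{i=1}^{n}\frac{b_i}{x_i-z}$, a proper rational function plus the constant $c=F(\infty)$, so it suffices to expand $G_1:=F-c$. Introduce two classes of rational functions: $G$ is of \emph{type $\mathrm{I}$} if $G(z)=\sum_{k}\frac{\beta_k}{y_k-z}$ with all $\beta_k>0$ and all $y_k\in\R_+$ (such $G$ vanish at $z=\infty$), and of \emph{type $\mathrm{II}$} if $G(z)=\beta_0+\sum_{k}\frac{\gamma_k}{z-y_k}$ with $\beta_0>0$, all $\gamma_k>0$ and all $y_k\in\R_+$; for type $\mathrm{II}$ we moreover require $G(0)>0$, a condition that holds automatically for type $\mathrm{I}$. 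Then $G_1$ is of type $\mathrm{I}$, and the construction rests on the following pair of one-step lemmas, proved simultaneously by induction on the number of poles: (i) if $G\not\equiv 0$ is of type $\mathrm{I}$ then $1/G(z)=a(-z)+\tilde G(z)$ with $a>0$ and $\tilde G$ of type $\mathrm{II}$ with one fewer pole than $G$; (ii) if $G$ is a nonconstant function of type $\mathrm{II}$ then $1/G(z)=a+\tilde G(z)$ with $a>0$ and $\tilde G$ of type $\mathrm{I}$ with the same number of poles as $G$.

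For the forward direction one runs these alternately. In case (i) the leading behaviour $G(z)\sim-\bigl(\sum_k\beta_k\bigr)/z$ forces $1/G(z)\sim -z/\sum_k\beta_k$, so the polynomial part of $1/G$ is linear with leading coefficient $-1/\sum_k\beta_k$, i.e.\ equals $a(-z)+(\text{const})$ with $a=1/\sum_k\beta_k>0$; the positions and residues of the remaining proper part are controlled by the two classical facts that a type-$\mathrm{I}$ function is strictly monotone on each interval of its domain and that its zeros strictly interlace its poles, which together pin the zeros into $\R_+$ and make the corresponding residues of $\tilde G$ positive; finally $\tilde G(0)=1/G(0)>0$ because $G(0)>0$, which is exactly the extra hypothesis carried along. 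Case (ii) is entirely parallel, using $1/G(\infty)=1/\beta_0>0$ to produce the positive constant $a$, and the same interlacing and monotonicity together with $G(0)>0$ to place the zeros of $G$ in $\R_+$. Since the number of poles drops by one over every two steps, the process terminates after at most $2n$ steps in the constant $0$, and assembling the extracted coefficients $a_1,a_2,\dots$ yields precisely the terminating continued fraction \eqref{eq:Stieltjescf} with all $a_j>0$.

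The converse is obtained by reading the same two lemmas backwards: starting from the innermost convergent $a_{2n}>0$, which is a positive constant and hence a (trivial) type-$\mathrm{II}$ function with positive value at $0$, one checks that if $\psi$ is of type $\mathrm{II}$ with $\psi(0)>0$ and $a>0$ then $1/\bigl(a(-z)+\psi\bigr)$ is of type $\mathrm{I}$, and that if $\psi$ is of type $\mathrm{I}$ and $a>0$ then $1/(a+\psi)$ is of type $\mathrm{II}$ and still positive at $z=0$ --- the very same interlacing and monotonicity estimates. Climbing up the continued fraction, $F-c$ ends up of type $\mathrm{I}$, i.e.\ $F(z)=c+\sum_k\frac{\beta_k}{y_k-z}$ with $\beta_k>0$, $y_k\in\R_+$, which is the integral representation \eqref{eq:Stieltjesintegral} with $d\nu=\sum_k\beta_k\,\delta_{y_k}$. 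I expect the main obstacle to be not any single computation but the bookkeeping: pinning down the correct inductive invariant --- in particular recognizing that the bare conditions ``simple poles in $\R_+$, positive residues, value $\ge 0$ at $\infty$'' are \emph{not} preserved by one peeling step unless one also carries the auxiliary positivity at $z=0$ --- and handling the parity alternation between the ``$a_j(-z)$'' and ``$a_j$'' steps cleanly; the analytic inputs (strict monotonicity of a Stieltjes transform on each interval of its domain and strict interlacing of its zeros and poles) are standard and enter only as black boxes.
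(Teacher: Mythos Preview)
The paper does not supply its own proof of this theorem; it is quoted as a classical result, attributed to Stieltjes with a citation to his collected works, and then used as a black box in the subsequent lemma on the Weyl functions $w_j$. Your proposal therefore cannot be compared against any argument in the paper, but what you have written is exactly the standard peeling construction and is essentially correct: the alternation between your type-$\mathrm{I}$ and type-$\mathrm{II}$ classes, the interlacing of zeros and poles forced by strict monotonicity, and the termination after $2n$ steps are all sound, and your observation that the auxiliary condition $G(0)>0$ must be carried through the induction is precisely the right invariant.

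One small point you gloss over: in step~(i), when you write the polynomial part of $1/G$ as $a(-z)+(\text{const})$, you verify $a>0$ but never check that the constant term --- which becomes the $\beta_0$ of the resulting type-$\mathrm{II}$ function --- is itself positive. It is, since a one-term Laurent expansion gives $\beta_0=\bigl(\sum_k\beta_k y_k\bigr)\big/\bigl(\sum_k\beta_k\bigr)^{2}$, and this is exactly where the hypothesis $y_k\in\R_+$ (rather than merely $y_k\in\R$) enters. Without this check step~(ii) would stall, because $a=1/\beta_0$ would not be known to be positive. Once this is filled in, your inductive invariant is complete and the argument goes through.
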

We now apply Stieltjes' result to our case.

  \begin{lemma}\label{lem:spectralwks}
Given $h_j, g_j>0, h_jg_j=m_jn_j>0, 1\leq j\leq N$, let $w_j$s satisfy the recurrence relations of Lemma \ref{lem:wdstring}.
Then $w_j$s are shifted Stieltjes transforms of finite, discrete Stieltjes measures supported on $\R_+$, with nonnegative shifts.  More precisely:
\begin{align*}
w_{2k-1}(z)&=\int \frac{d\mu^{(2k-1)}(x)}{x-z}, \\
w_{2k}(z)&=c_{2k}+\int \frac{d\mu^{(2k)}(x)}{x-z},
\end{align*}
where $c_{2k}>0$ when $k$ is odd, otherwise, $c_{2k}=0$.  Furthermore,
the number of points in the support $d\mu^{(2k)}(x)$ and $d\mu^{(2k-1)}$ is
$\lfloor\frac{k}{2}\rfloor$.
\end{lemma}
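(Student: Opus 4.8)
The plan is to induct on $k$, using the recurrence of Lemma~\ref{lem:wdstring} together with Stieltjes' theorem (Theorem~\ref{thm:Stieltjes}), and to track the continued fraction built up by the recurrence. The base case is $w_1=0$, which is trivially a Stieltjes transform of the zero measure with zero shift, and it has $0$ points in its support, consistent with $\lfloor 1/2\rfloor = 0$. For the inductive step I would separate the two recursion rules. Starting from $w_{2k-1}$, the relation \eqref{eq:recweven}, $w_{2k}=(1+zm_kn_k)w_{2k-1}+h_k$, can be rewritten as $w_{2k} = w_{2k-1} + z m_k n_k w_{2k-1} + h_k$; but the cleaner route is to verify directly that $w_{2k-1}$, having the form $\int d\mu^{(2k-1)}/(x-z)$ with finitely many positive mass points, satisfies the hypotheses of Stieltjes' theorem (it equals $c + \int d\nu/(x-z)$ with $c=0$), whence $w_{2k-1}$ has the continued fraction \eqref{eq:Stieltjescf}; then multiplying by $1+zm_kn_k$ and adding $h_k>0$ produces a rational function whose continued fraction is again of Stieltjes type with all coefficients positive, so by the converse direction of Theorem~\ref{thm:Stieltjes} it is a shifted Stieltjes transform. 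For the relation \eqref{eq:recwodd}, $1/w_{2k} = 1/w_{2k+1} + z g_{k+1}$, I would invert: if $w_{2k}$ is a shifted Stieltjes transform with positive shift, then $1/w_{2k}$ has a continued fraction expansion obtained by prepending a term, and subtracting $z g_{k+1}$ (equivalently adding $g_{k+1}(-z)$ at the top of the fraction) keeps all coefficients positive, so $w_{2k+1}=\big(1/w_{2k} + zg_{k+1}\big)^{-1}$ is once more a Stieltjes transform — now with \emph{zero} shift, since the top term of its continued fraction is the $z$-term.

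The bookkeeping of the shift constants $c_{2k}$ is where I would be most careful. The additive constant $h_k$ in \eqref{eq:recweven} is what creates a nonzero shift; but whether the resulting $w_{2k}$ actually has a nonzero shift depends on the parity pattern, because passing through \eqref{eq:recwodd} strips the shift off again (the reciprocal of a function with nonzero value at $\infty$ has a nonzero value at $\infty$, but after adding $zg_{k+1}$ and reinverting, the value at $\infty$ becomes $0$). Concretely, $w_{2k}(\infty) = (1 + \infty\cdot m_kn_k) w_{2k-1}(\infty) + h_k$; one must check that $w_{2k-1}(\infty)=0$ always (which follows because the top coefficient in its continued fraction multiplies $-z$), so that $w_{2k}(\infty)=h_k>0$ whenever $w_{2k-1}$ genuinely contributes, and then $w_{2k+1}(\infty) = 0$. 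Chasing this through the recursion, $w_{2k}(\infty)\neq 0$ precisely when the continued fraction for $w_{2k}$ has \emph{odd} length, which by Corollary~\ref{cor:solqkpk} and the degree count translates into the stated parity condition ``$c_{2k}>0$ when $k$ is odd.'' I would reconcile this with the explicit formulas: $w_{2k}=q_k/p_k$, and by Corollary~\ref{cor:pq-degrees} the degrees are $\lfloor\frac{k-1}{2}\rfloor$ and $\lfloor\frac{k}{2}\rfloor$, so $\deg p_k > \deg q_k$ iff $k$ is even, forcing $w_{2k}(\infty)=0$ for $k$ even and a nonzero finite limit for $k$ odd; similarly $w_{2k-1}=q_{k-1}/p_k$ always has $\deg p_k > \deg q_{k-1}$, giving zero shift.

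Finally, the count of mass points: the number of poles of $w_{2k}=q_k/p_k$ is $\deg p_k = \lfloor\frac{k}{2}\rfloor$, and the number of poles of $w_{2k-1}=q_{k-1}/p_k$ is likewise $\deg p_k = \lfloor\frac{k}{2}\rfloor$ (no cancellation occurs because, as one checks from Corollary~\ref{cor:solqkpk}, $q_{k-1}$ and $p_k$ share no common zero — or alternatively because the Stieltjes continued fraction of prescribed length has exactly that many poles). This matches the claimed $\lfloor\frac{k}{2}\rfloor$. The simplicity of the poles and positivity of the residues are then automatic from the Stieltjes representation guaranteed by Theorem~\ref{thm:Stieltjes}, together with the fact that the measures sit on $\R_+$. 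The main obstacle I anticipate is not any single computation but organizing the induction so that the two alternating recursion steps are handled with a single, clean invariant — roughly, ``$w_j$ has a finite Stieltjes continued fraction, of length equal to the number of poles, beginning with a $(-z)$-term when $j$ is odd and with a constant term when $j$ is even'' — so that the shift/parity statement and the support count both drop out without separately re-deriving the polynomial degrees at each stage.
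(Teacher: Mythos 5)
Your proposal is correct and follows essentially the same route as the paper: induction on $k$, alternating the two recursion steps of Lemma \ref{lem:wdstring}, inverting \eqref{eq:recwodd} and feeding into \eqref{eq:recweven}, with Stieltjes' theorem (Theorem \ref{thm:Stieltjes}) supplying the equivalence between the continued-fraction form and the integral representation; the paper itself only sketches this and defers the sign analysis of the values at $z=\infty$ to \cite[Lemma 3.6]{chang-szmigielski-m1CHlong}. Your degree-count argument via Corollaries \ref{cor:pq-degrees} and \ref{cor:solqkpk} (showing $w_{2k}(\infty)$ is a ratio of leading coefficients of the same sign when $k$ is odd and vanishes when $k$ is even) is a legitimate stand-in for that outsourced step.
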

\begin{proof}
We only sketch the proof,  for further details we refer to \cite[Lemma 3.6]{chang-szmigielski-m1CHlong}.  The proof goes by induction on $k$.
The base case $k=1$ is elementary.  Assuming the induction
hypothesis to hold up to $2k$ we invert \eqref{eq:recwodd} to get:
\begin{equation*}
w_{2k+1}(z)=\frac{1}{-zg_{k+1}+\frac{1}{w_{2k}}}
\end{equation*}
which, by induction hypothesis, implies that $w_{2k+1}$ has the required
continued fraction expansion covered by Stieltjes' theorem, and thus has
the required integral representation.
We subsequently feed this integral representation into \eqref{eq:recweven}
to obtain the Stietljes integral representation for $w_{2k+2}$.
The analysis of the signs of the values of the Weyl functions at $z=\infty$ is carried out in \cite[Lemma 3.6]{chang-szmigielski-m1CHlong}.

\end{proof}

\begin{remark} The recurrence in Lemma \ref{lem:wdstring} can be viewed as
the recurrence on the Weyl functions corresponding to shorter bars (keeping in mind the interpretation in terms of the longitudinal vibrations
of an elastic bar)
obtained by truncating at the index $k$.  Then $W_{2k}$ is precisely the
Weyl function corresponding to the measures $\sum_{j=1}^k h_j \delta_{x_j}$ and $\sum_{j=1}^k g_j \delta_{x_j}$,
while $W_{2k-1}$ corresponds to the measures $\sum_{j=1}^{k-1}  h_j \delta_{x_j}$ and $\sum_{j=1}^k g_j \delta_{x_j}$ respectively.
\end{remark}

Now, in particular, we note that by Lemma \ref{lem:spectralwks}
\begin{equation}\label{eq:WrepS}
W(z)=\frac{q_{N}(z)}{p_{N}(z)}=c_{2N}+\int \frac{d\mu^{(2N)}(x)}{x-z}, \qquad d\mu^{(2N)}=\sum_{j=1}^{\lfloor \frac N2 \rfloor} b_j^{(2N)}\delta_{\zeta_j}
\end{equation}
and thus the following theorem holds.
\begin{theorem} \label{thm:W}
$W(z)$ is a (shifted) Stieltjes transform of a positive, discrete measure $d\mu$ with support
inside $\R_+$.  More precisely:
\begin{equation*}
W(z)=c+\int \frac{d\mu(x)}{x-z}, \quad d\mu=\sum_{i=1}^{\lfloor \frac N2 \rfloor}
b_j \delta_{\zeta _j}, \  0<\zeta_1<\dots< \zeta_{\lfloor \frac N2 \rfloor}, \ \ 0<b_j,   \  1\leq j\leq \floor{\frac N2},
\end{equation*}
where
$c>0$ when $N$ is odd
and $c=0$ when $N$ is even.
\end{theorem}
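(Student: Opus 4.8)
The plan is to read Theorem~\ref{thm:W} off as the case $k=N$ of Lemma~\ref{lem:spectralwks}, so that almost all the work has already been done; the only genuine task left is to confirm that the resulting measure is \emph{non-degenerate}, i.e.\ that it carries the full complement of $\lfloor\frac N2\rfloor$ atoms, all of strictly positive mass located at strictly positive and distinct points, and that the shift has the stated sign. Throughout set $\ell=\lfloor\frac N2\rfloor$.

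First I would observe that, by the very definitions in Lemma~\ref{lem:wdstring}, $w_{2N}=q_N/p_N$, which by \eqref{eq:defWeyl} is exactly $W(z)$. Specializing Lemma~\ref{lem:spectralwks} to $k=N$ then gives immediately
\begin{equation*}
W(z)=c_{2N}+\int\frac{d\mu^{(2N)}(x)}{x-z},
\end{equation*}
where $d\mu^{(2N)}$ is a finite discrete Stieltjes measure supported in $\R_+$ (hence with nonnegative masses) having $\ell$ atoms, and where the shift $c_{2N}$ is strictly positive for $N$ odd and zero for $N$ even. This is precisely \eqref{eq:WrepS}, and it already delivers the dichotomy for $c$ asserted in the statement. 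Writing $c=c_{2N}$, $d\mu=d\mu^{(2N)}=\sum_j b_j\delta_{\zeta_j}$ with the $\zeta_j$ listed in increasing order, it remains only to upgrade "nonnegative'' to "strictly positive'' for the $b_j$ and to see that the $\zeta_j$ are strictly positive and pairwise distinct.

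For this I would cross-check the atom count with Corollary~\ref{cor:pq-degrees}: the polynomial $p_N$ has degree exactly $\ell$, since by Corollary~\ref{cor:solqkpk} its top coefficient is $(-1)^{\ell}M_\ell$ with $M_\ell=\sum_{\,I,J\in\binom{[N]}{\ell},\ I<J}h_Ig_J$ a sum, over the nonempty family of interlacing pairs, of products of the strictly positive numbers $h_i=m_ie^{x_i}$, $g_i=n_ie^{-x_i}$, and hence nonzero. So $W=q_N/p_N$ has exactly $\ell$ poles. On the other hand, the continued fraction that Lemma~\ref{lem:spectralwks} attaches to $w_{2N}$ has \emph{all} of its coefficients strictly positive --- this is where the hypotheses $h_j,g_j>0$ genuinely enter --- so by Stieltjes' Theorem~\ref{thm:Stieltjes} the associated rational function is in lowest terms, has only simple poles, all lying on $\R_+$, and has strictly positive residues. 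Matching this against the pole count forces $0<\zeta_1<\dots<\zeta_\ell$ with every $b_j>0$, which is exactly the assertion; relabelling $b_j^{(2N)}=b_j$ finishes the proof.

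Since the substantive content --- the induction on $k$ and the continued-fraction mechanism --- is entirely carried by Lemma~\ref{lem:spectralwks} and Theorem~\ref{thm:Stieltjes}, there is no real obstacle at this stage. The one point that requires a little care is non-degeneracy: that no two poles of $W$ collide and that no residue vanishes. This is secured precisely by the strict positivity of every $h_j$ and $g_j$ --- it keeps the leading coefficient $M_\ell$ of $p_N$ away from $0$, so $W$ really has $\ell$ poles, and it makes every coefficient of the Stieltjes continued fraction for $w_{2N}$ strictly positive, so those poles are simple, positive, and carry strictly positive mass.
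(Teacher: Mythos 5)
Your proposal is correct and follows essentially the same route as the paper: the paper obtains Theorem~\ref{thm:W} directly by specializing Lemma~\ref{lem:spectralwks} to $k=N$ (this is exactly the display \eqref{eq:WrepS}), with the positivity/simplicity of the poles and residues and the atom count $\lfloor N/2\rfloor$ already built into that lemma via Stieltjes' Theorem~\ref{thm:Stieltjes}. Your additional non-degeneracy check — that the leading coefficient $M_{\lfloor N/2\rfloor}$ of $p_N$ is a nonempty sum of positive terms, so $W$ genuinely has the full complement of poles — is a sound way of making explicit what the paper leaves implicit in the lemma's statement about the size of the support.
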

The next corollary summarizes the properties of the spectrum of the boundary value problem \eqref{eq:xLaxBVP}, or equivalently \eqref{dstring}.
\begin{corollary} \label{cor:spectrum}
\mbox{}
\begin{enumerate}
\item The spectrum of the boundary value problem \eqref{eq:xLax} is positive and simple.
\item
$W(z)=c+\sum_{j=1}^{\lfloor \frac N2 \rfloor} \frac{b_j}{\zeta_j-z}$, where
 all residues satisfy $b_j>0$ and $c\geq0$.
 \end{enumerate}
 \end{corollary}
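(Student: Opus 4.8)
The plan is to deduce both assertions directly from Theorem \ref{thm:W} together with the degree bookkeeping of Corollaries \ref{cor:pq-degrees} and \ref{cor:solqkpk}, so that this ``corollary'' genuinely is one.

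Part (2) I would obtain simply by unwinding the integral in Theorem \ref{thm:W}: since $d\mu=\sum_{j=1}^{\lfloor N/2\rfloor}b_j\delta_{\zeta_j}$, one has $W(z)=c+\int\frac{d\mu(x)}{x-z}=c+\sum_{j=1}^{\lfloor N/2\rfloor}\frac{b_j}{\zeta_j-z}$, and Theorem \ref{thm:W} already records $b_j>0$ and $c\geq0$ (with $c=0$ precisely when $N$ is even). Nothing further is needed.

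For part (1), recall that the spectrum of \eqref{eq:xLaxBVP} was defined as the zero set of $p_N(z)$, and that $W(z)=q_N(z)/p_N(z)$. First I would check that $W$, as a rational function, has exactly the $\lfloor N/2\rfloor$ poles $\zeta_1,\dots,\zeta_{\lfloor N/2\rfloor}$, all simple: writing the partial fraction sum from part (2) over the common denominator $\prod_{j=1}^{\lfloor N/2\rfloor}(\zeta_j-z)$, the resulting numerator evaluated at $z=\zeta_k$ equals $b_k\prod_{i\neq k}(\zeta_i-\zeta_k)\neq0$, so numerator and denominator are coprime; thus this is the lowest-terms form of $W$ and its denominator has degree $\lfloor N/2\rfloor$. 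On the other hand $p_N$ has degree exactly $\lfloor N/2\rfloor$ as well: by Corollary \ref{cor:solqkpk} the coefficient of $z^{\lfloor N/2\rfloor}$ in $p_N$ is $(-1)^{\lfloor N/2\rfloor}M_{\lfloor N/2\rfloor}$ with $M_{\lfloor N/2\rfloor}=\sum_{I,J\in\binom{[N]}{\lfloor N/2\rfloor},\,I<J}h_Ig_J$, and this is strictly positive because every $h_i,g_i>0$ and the interlacing index set is nonempty (e.g., $I=\{1,3,5,\dots\}$ and $J=\{2,4,6,\dots\}$). Comparing the two representations forces $q_N/p_N$ to be already in lowest terms, hence $p_N(z)$ is a nonzero constant multiple of $\prod_{j=1}^{\lfloor N/2\rfloor}(\zeta_j-z)$. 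Therefore the spectrum, being the zero set of $p_N$, is exactly $\{\zeta_1,\dots,\zeta_{\lfloor N/2\rfloor}\}\subset\R_+$, and since the $\zeta_j$ are pairwise distinct all these zeros are simple; this gives (1).

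There is no obstacle of real substance here; the single point requiring care is ruling out a common factor of $q_N$ and $p_N$, i.e. making sure the spectrum is not masked by cancellation. This is what forces one to pin down that $\deg p_N=\lfloor N/2\rfloor$ exactly (where the strict positivity of all $h_j,g_j$ enters, via $M_{\lfloor N/2\rfloor}>0$) and that the Stieltjes measure of Theorem \ref{thm:W} carries exactly $\lfloor N/2\rfloor$ atoms (via Lemma \ref{lem:spectralwks}); once these two counts coincide, positivity and simplicity of the spectrum drop out at once.
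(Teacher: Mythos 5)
Your proof is correct and follows the same route the paper intends: part (2) is just Theorem \ref{thm:W} written out, and part (1) identifies the poles of $W=q_N/p_N$ with the zeros of $p_N$. The one place you go beyond what the paper makes explicit is the verification that no cancellation occurs between $q_N$ and $p_N$ --- matching $\deg p_N=\lfloor N/2\rfloor$ (via $M_{\lfloor N/2\rfloor}>0$, which holds since all $h_j,g_j>0$ and the interlacing pair $I=\{1,3,\dots\}$, $J=\{2,4,\dots\}$ exists) against the $\lfloor N/2\rfloor$ simple poles guaranteed by Theorem \ref{thm:W} --- and this is exactly the right detail to pin down, since the paper's assertion that the poles of $W$ are ``located at the spectrum'' tacitly assumes it.
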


\section{Inverse problem}\label{sec:IP}

\subsection{The first inverse problem and the interpolation problem}

The initial inverse problem we are interested
in solving can be stated as follows:
\begin{definition} \label{def:ISP}
Given a rational function (see Theorem \ref{thm:W})
\begin{equation} \label{eq:Wc}
W(z)=c+\int \frac{d\mu(x)}{x-z}, \quad d\mu=\sum_{i=1}^{\lfloor \frac N2 \rfloor}
b_j \delta_{\zeta _j}, \  0<\zeta_1<\dots< \zeta_{\lfloor \frac N2 \rfloor}, \ \ 0<b_j,   \  1\leq j\leq \floor{\frac N2},
\end{equation}
where
$c>0$ when $N$ is odd
and $c=0$ when $N$ even, as well as
positive constants $m_1,m_2,\dots, m_N,n_1,n_2,\cdots,n_N$, such that the products $m_jn_j$ are distinct,
find positive constants $g_j, h_j$, $1\leq j\leq N$, for which
\begin{enumerate}
\item $g_jh_j=m_jn_j$ ,
\item the unique solution of the initial value problem:
\begin{align} \label{eq:rodIVP}
&\begin{bmatrix} q_k\\ p_k \end{bmatrix}=\begin{bmatrix} 1& h_k\\ -z g_k&1 \end{bmatrix}\begin{bmatrix} q_{k-1}\\ p_{k-1} \end{bmatrix} , \hspace{1cm} \qquad \qquad 1\leq k\leq N, \\
 &\begin{bmatrix} q_0\\ p_0\end{bmatrix}=\begin{bmatrix} 0 \notag\\ 1\end{bmatrix},
      \end{align}
satisfies
\begin{equation*}
\frac{q_N(z)}{p_N(z)}=W(z).
\end{equation*}
\end{enumerate}
\end{definition}

\begin{remark} The restriction that the products $m_jn_j$ be distinct has been made to
facilitate the argument and
will be eventually relaxed by taking appropriate limits of the generic case (see the comments below
Theorem \ref{thm:inversex}).
\end{remark}

The basic idea of the solution of the above inverse problem is to
associate to it a certain interpolation problem.  We now proceed to explain
how such an interpolation problem appears already in the
solution of the forward problem, that is, in the solution of the difference
boundary value problem \eqref{dstring}, or equivalently
\eqref{eq:rodIVP} above.
First, let us denote by
\begin{equation}\label{eq:Tk}
T_k(z)=\begin{bmatrix} 1&h_k \\- zg_k&1 \end{bmatrix}, 
\end{equation}
the transition matrix appearing in \eqref{eq:rodIVP}.
Clearly,
\begin{equation}\label{eq:iteration}
\begin{bmatrix} q_N(z) \\ p_N(z) \end{bmatrix}=T_N(z) T_{N-1}(z) \cdots T_{N-k+1} (z) \begin{bmatrix} q_{N-k}(z)\\ p_{N-k}(z) \end{bmatrix}.
\end{equation}
Let us introduce a different indexing $i'=N-i+1$ which is a bijection of the
set $[1,N]$ and represents counting
points of the beam from right to left rather than left to right.  Moreover, let us denote by
$\hat T_j(z)$ the classical adjoint of $T_{j'}$.  Thus
\begin{equation}
\hat T_j(z)=\begin{bmatrix} 1&-h_{N-j+1}\\zg_{N-j+1}&1 \end{bmatrix} =
\begin{bmatrix} 1&-h_{j'}\\zg_{j'}&1 \end{bmatrix}.
\end{equation}
Then \eqref{eq:iteration} implies
\begin{equation}\label{eq:backiteration}
\hat T_k(z) \cdots\hat T_{2} \hat T_{1} (z)\begin{bmatrix} W(z)\\ 1 \end{bmatrix}= \frac{\prod_{j=1}^k \big(1+zm_{j'}n_{j'}\big)}{p_N(z)} \begin{bmatrix} q_{(k+1)'}(z)\\ p_{(k+1)'}(z) \end{bmatrix},
\end{equation}
where we used that $\det(\hat T_{j'})=1+zg_{j'}h_{j'}$ and, subsequently, $g_{j'}h_{j'}=m_{j'}n_{j'}$.  We clearly have
\begin{lemma}
Let us fix $1\leq k\leq N$ and denote $\hat S_k(z)=\hat T_k(z) \cdots\hat T_{2} \hat T_{1} (z)$.  Then for every  $1\leq j\leq k$ the vector $\begin{bmatrix} W(z_j)\\ 1 \end{bmatrix}$, where $z_j=-\frac{1}{m_{j'}n_{j'}}$,
is in the $\textrm{ker}\big(\hat S_k(z_j)\big)$.
\end{lemma}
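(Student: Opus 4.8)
The plan is to read the statement directly off the factorization identity \eqref{eq:backiteration}. First I would recall how that identity arises: left-multiplying \eqref{eq:iteration} (with $k=N$) by the classical adjoints $\hat T_1(z),\dots,\hat T_k(z)$ and telescoping via $\hat T_i(z)T_{i'}(z)=\det\big(T_{i'}(z)\big)I=\big(1+zg_{i'}h_{i'}\big)I=\big(1+zm_{i'}n_{i'}\big)I$ yields
\[
\hat S_k(z)\begin{bmatrix} W(z)\\ 1 \end{bmatrix}= \frac{\prod_{i=1}^{k} \big(1+zm_{i'}n_{i'}\big)}{p_N(z)}\,\begin{bmatrix} q_{(k+1)'}(z)\\ p_{(k+1)'}(z) \end{bmatrix}
\]
for every $z$ that is not a zero of $p_N$. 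Then I would fix $j$ with $1\le j\le k$ and evaluate this identity at $z=z_j=-\tfrac{1}{m_{j'}n_{j'}}$. Because $1\le j\le k$, the $i=j$ factor $1+z_jm_{j'}n_{j'}=0$ appears in the product $\prod_{i=1}^{k}\big(1+z_jm_{i'}n_{i'}\big)$, so the right-hand side is the zero vector and the claim $\hat S_k(z_j)\begin{bmatrix} W(z_j)\\ 1 \end{bmatrix}=0$ follows.

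The one point requiring an argument is that $z_j$ is an admissible point of evaluation, i.e. that $p_N(z_j)\neq 0$ so that $W(z_j)=q_N(z_j)/p_N(z_j)$ is a finite number and the displayed identity is valid there. This is immediate from the spectral picture already established: by Corollary \ref{cor:spectrum} the zeros of $p_N$ (the spectrum of the boundary value problem \eqref{eq:xLaxBVP}) all lie in $\R_+$, whereas $z_j=-\tfrac{1}{m_{j'}n_{j'}}<0$ since every $m_\ell,n_\ell>0$. Hence $p_N(z_j)\neq 0$, the right-hand side of the displayed identity at $z=z_j$ equals $0$, and therefore $\begin{bmatrix} W(z_j)\\ 1 \end{bmatrix}\in\textrm{ker}\big(\hat S_k(z_j)\big)$.

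Since all of the content sits in the already-proven identity \eqref{eq:backiteration}, I do not expect a genuine obstacle here; the computation is purely algebraic. The only things to be careful about are the relabeling $i'=N-i+1$ (so that the factors $1+zm_{i'}n_{i'}$, $1\le i\le k$, match the determinants of the transition matrices $T_N,\dots,T_{N-k+1}$ consumed by $\hat T_1,\dots,\hat T_k$) and the appeal to positivity of the spectrum to rule out $p_N(z_j)=0$. Alternatively, one can bypass the division by $p_N$ altogether: applying $\hat S_k(z)$ to $\begin{bmatrix} q_N(z)\\ p_N(z) \end{bmatrix}$ gives $\prod_{i=1}^{k}(1+zm_{i'}n_{i'})\begin{bmatrix} q_{(k+1)'}(z)\\ p_{(k+1)'}(z) \end{bmatrix}$, which vanishes at $z=z_j$; since $\begin{bmatrix} q_N(z_j)\\ p_N(z_j) \end{bmatrix}=p_N(z_j)\begin{bmatrix} W(z_j)\\ 1 \end{bmatrix}$ with $p_N(z_j)\neq 0$, the vector $\begin{bmatrix} W(z_j)\\ 1 \end{bmatrix}$ lies in $\textrm{ker}\big(\hat S_k(z_j)\big)$ as well.
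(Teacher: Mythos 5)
Your proposal is correct and follows exactly the paper's (largely implicit) argument: the lemma is read off from the identity \eqref{eq:backiteration} by evaluating at $z=z_j$, where the factor $1+z_jm_{j'}n_{j'}$ annihilates the right-hand side. Your additional observation that $p_N(z_j)\neq 0$ (since the spectrum lies in $\R_+$ while $z_j<0$) is a worthwhile point of rigor that the paper leaves unstated, but it does not change the route.
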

We proceed by explicitly writing the conditions that
the vectors $\begin{bmatrix} W(z_i)\\ 1 \end{bmatrix}$ be null vectors
of $\hat S_k(z_i)$.  To this end we
write
\begin{equation}\label{eq:Skhat}
\hat S_k(z)=\begin{bmatrix} \hat q_k(z)&\hat Q_k(z)\\ \hat p_k(z)&\hat P_k(z) \end{bmatrix},
\end{equation}
from which two sets of interpolation conditions
\begin{subequations} \label{eq:interpol}
\begin{align}
\hat q_k(z_j)W(z_j)+ \hat Q_k(z_j)=0, \quad 1\leq j\leq k, \\
\hat p_k(z_j)W(z_j)+ \hat P_k(z_j)=0, \quad 1\leq j\leq k,
\end{align}
\end{subequations}
emerge.
Whether these conditions, given $W(z)$, determine the polynomials $\hat q_k, \hat Q_k$, $ \hat p_k, \hat P_k$
will depend on their degrees and this is the subject of the next result whose proof is an easy exercise in induction.
\begin{lemma}\label{lem:degrees} For any $k, \,  1\leq k\leq N, $
\begin{enumerate}
\item $\deg \hat q_k(z)=\lfloor \tfrac{k}{2}\rfloor, \, \det \hat Q_k(z)= \lfloor \tfrac{k-1}{2}\rfloor, \, \deg \hat p_k(z)=\lfloor \tfrac{k+1}{2}\rfloor, \, \det \hat P_k(z)= \lfloor \tfrac{k}{2}\rfloor$,
\item $\hat q_k(0)=1,\,  \hat p_k(0)=0, \, \hat P_k(0)=1. $
\end{enumerate}
\end{lemma}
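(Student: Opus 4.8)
The plan is to prove parts (1) and (2) simultaneously by induction on $k$, using the matrix recursion $\hat S_{k+1}(z)=\hat T_{k+1}(z)\hat S_k(z)$, which is immediate from $\hat S_k=\hat T_k\cdots\hat T_1$. Reading off the four entries, with the explicit form of $\hat T_{k+1}$ recorded above, gives the coupled scalar recursions
\begin{align*}
\hat q_{k+1}&=\hat q_k-h_{(k+1)'}\hat p_k, & \hat Q_{k+1}&=\hat Q_k-h_{(k+1)'}\hat P_k,\\
\hat p_{k+1}&=zg_{(k+1)'}\hat q_k+\hat p_k, & \hat P_{k+1}&=zg_{(k+1)'}\hat Q_k+\hat P_k.
\end{align*}
The base case $k=1$ is read straight off $\hat S_1=\hat T_1$, namely $\hat q_1=1,\ \hat Q_1=-h_{1'},\ \hat p_1=zg_{1'},\ \hat P_1=1$, which satisfy every assertion.

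For part (2) I would argue first that, since $zg_{(k+1)'}\hat q_k$ and $zg_{(k+1)'}\hat Q_k$ vanish at $z=0$, the recursions give $\hat p_{k+1}(0)=\hat p_k(0)$ and $\hat P_{k+1}(0)=\hat P_k(0)$, hence $\hat p_k(0)=0$ and $\hat P_k(0)=1$ for all $k$ by the base case; then $\hat q_{k+1}(0)=\hat q_k(0)-h_{(k+1)'}\hat p_k(0)=\hat q_k(0)$, so $\hat q_k(0)=1$ for all $k$. For part (1), the inductive hypothesis on the degrees at level $k$ together with $\deg(z f)=1+\deg f$ gives, by the triangle inequality applied to each recursion, the upper bounds $\deg\hat q_{k+1}\le\floor{\tfrac{k+1}{2}}$, $\deg\hat Q_{k+1}\le\floor{\tfrac{k}{2}}$, $\deg\hat p_{k+1}\le\floor{\tfrac{k+2}{2}}$, $\deg\hat P_{k+1}\le\floor{\tfrac{k+1}{2}}$; a short check on the two parities of $k$ confirms these match the asserted values.

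The only point that is not purely mechanical is the reverse inequality, i.e.\ non-vanishing of the leading coefficients. The subtlety is that for one parity of $k$ the two terms on the right of, say, $\hat q_{k+1}=\hat q_k-h_{(k+1)'}\hat p_k$ have \emph{equal} degree, so a cancellation is a priori possible. To rule this out I would carry the sign of each leading coefficient along in the induction. One checks that these signs obey a rule depending only on $k$ — for instance the leading coefficient of $\hat q_k$ has sign $(-1)^{\floor{k/2}}$, the same as that of $\hat P_k$, while the leading coefficient of $\hat p_k$ has sign $(-1)^{\floor{(k-1)/2}}$ — and, crucially, that whenever two equal-degree terms are combined their leading coefficients turn out to have the \emph{same} sign because all $g_j,h_j>0$, so the leading terms reinforce rather than cancel; when the degrees are unequal the surviving leading coefficient is nonzero by the inductive hypothesis. (Equivalently, one can note that each entry is a polynomial whose coefficients are, up to a sign fixed by $k$ and the power of $z$, nonempty sums of products of positive numbers, so the top coefficient cannot vanish.)

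A useful sanity check is $\det\hat S_k(z)=\prod_{j=1}^k(1+zm_{j'}n_{j'})$, which has degree exactly $k$ with positive leading coefficient $\prod_{j=1}^k m_{j'}n_{j'}$; comparing this with $\det\hat S_k=\hat q_k\hat P_k-\hat Q_k\hat p_k$ and the degree bounds already in hand pins down the top-degree behaviour of two of the four entries at each step and is consistent with the remaining two. In short, the single delicate step is showing that equal-degree terms never cancel; I expect that sign bookkeeping to be the main obstacle, while everything else is a routine two-line induction closing parts (1) and (2) at once.
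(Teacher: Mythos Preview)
Your proposal is correct and follows exactly the approach the paper indicates: the paper states only that the proof ``is an easy exercise in induction,'' and your induction on $k$ via the recursion $\hat S_{k+1}=\hat T_{k+1}\hat S_k$ is precisely that exercise. You have also correctly isolated the only non-mechanical point---ruling out cancellation of leading coefficients when the two terms in a recursion have equal degree---and your sign bookkeeping (equivalently, the observation that each coefficient is, up to a fixed sign, a nonempty sum of positive monomials in the $g_j,h_j$) settles it; this is consistent with the explicit formulas the paper records in the very next lemma.
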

Now it is elementary to check that the number of interpolation conditions
in equations \eqref{eq:interpol} is the same as the number of unknown coefficients
in $\hat q_k, \hat Q_k, \hat p_k, \hat Q_k$, so, in principle, the
solution exists.
Before we state our next lemma we revisit the notation introduced
in Definition \ref{def:bigIndi}.  For any multi-index $I\in \binom{[k]}{j}$,
where we recall $I=\{i_1, i_2, \cdots,  i_j\}$ is an ordered set associated
to an increasing sequence $1\leq i_1<i_2\cdots<i_j\leq k$, we
assign its ordered image $I'$ obtained by applying the bijection $i\rightarrow N+1-i$ to $I$ and reordering.  The following result
can be demonstrated by using induction on $k$ and the definition of
$\hat S_k(z)$ \eqref{eq:Skhat}.
\begin{lemma} For any $k$, $1\leq k\leq N$,
\begin{subequations}
\begin{align}
&\hat q_{k}(z)=1+\sum_{j=1}^{\lfloor\frac{k}{2}\rfloor}\Big(\sum_{\substack{I,J \in \binom{[k]}{j}\\ I<J}} g_{I'} h_{J'}
\, \Big)(-z)^j, \\
&\hat Q_{k}(z)=-\sum_{j=0}^{\lfloor\frac{k-1}{2}\rfloor}\Big(\sum_{\substack{I\in \binom{[k]}{j+1}, J\in \binom{[k]}{j}\\ I<J}} h_{I'}g_{J'}
\, \Big)(-z)^j, \\
&\hat p_k(z)=
-\sum_{j=1}^{\lfloor\frac{k+1}{2}\rfloor}\Big(\sum_{\substack{I\in \binom{[k]}{j}, J\in \binom{[k]}{j-1}\\ I<J}}
\, g_{I'} h_{J'} \Big) (-z)^j, \\
&\hat P_k(z)=1+
\sum_{j=1}^{\lfloor\frac{k}{2}\rfloor}\Big(\sum_{\substack{I, J\in \binom{[k]}{j}\\ I<J}}
h_{I'} g_{J'} \Big) (-z)^j,
\end{align}
\end{subequations}
with the convention that $\hat q_1(z)=1,\,  \hat Q_1(z)=-h_N, \, \hat p_1(z)=zg_N, \, \hat P_1(z)=1$.
\end{lemma}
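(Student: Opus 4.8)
The plan is to establish all four formulas simultaneously by induction on $k$, the engine being the identity $\hat S_k(z)=\hat T_k(z)\,\hat S_{k-1}(z)$, which is immediate from the definition $\hat S_k=\hat T_k\cdots\hat T_1$. Reading off the two rows of this $2\times2$ product, with $\hat T_k(z)$ as displayed above and $k'=N-k+1$, yields the scalar recursions
\[
\hat q_k=\hat q_{k-1}-h_{k'}\hat p_{k-1},\qquad \hat p_k=zg_{k'}\hat q_{k-1}+\hat p_{k-1},
\]
together with the two analogous relations obtained by replacing $(\hat q_{k-1},\hat p_{k-1},\hat q_k,\hat p_k)$ by $(\hat Q_{k-1},\hat P_{k-1},\hat Q_k,\hat P_k)$; in particular the left and right columns of $\hat S_k$ evolve independently. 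For $k=1$ we have $\hat S_1=\hat T_1$, i.e. $\hat q_1=1$, $\hat Q_1=-h_{1'}=-h_N$, $\hat p_1=zg_{1'}=zg_N$, $\hat P_1=1$, which is exactly the stated convention and agrees with the four formulas at $k=1$ (the nonconstant sums being empty, and the sums in $\hat Q_1$, $\hat p_1$ reducing to the single singleton $I=\{1\}$, with image $1'=N$). This is the base case.

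For the inductive step I would assume the formulas at level $k-1$, substitute them into the four recursions, and compare coefficients of $(-z)^j$ — being careful about the sign and the one-step shift in the power of $z$ produced by the explicit factor $z$ in the relations for $\hat p_k$ and $\hat P_k$. Everything then reduces to a single combinatorial fact, applied four times: any interlacing sum $\sum_{I<J}(\cdots)$ over index sets contained in $[k]$ splits according to whether the largest available index $k$ lies in the index set that \emph{closes} the interlacing chain — this is $J$ for the equal-cardinality sums ($\hat q_k$, $\hat P_k$) and $I$ for the unequal-cardinality ones ($\hat Q_k$, $\hat p_k$). The part with $k$ absent is literally the same-type sum over $[k-1]$, i.e. the matching coefficient of $\hat q_{k-1}$, $\hat Q_{k-1}$, $\hat p_{k-1}$ or $\hat P_{k-1}$; the part with $k$ present, after one deletes $k$ from that set (which lowers its cardinality by one and, by the interlacing inequalities, converts $(I,J)$ into an interlacing pair of the \emph{adjacent} cardinality type over $[k-1]$) and pulls the liberated factor $h_{k'}$ or $g_{k'}$ out of the $h_{I'}$ or $g_{J'}$ product, becomes exactly the coefficient that the recursion multiplies by $h_{k'}$ or $zg_{k'}$. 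For instance, in $\hat q_k$ the terms with $k\in J$ contribute $h_{k'}\sum_{I<J_0}g_{I'}h_{J_0'}$ with $I\in\binom{[k-1]}{j}$ and $J_0\in\binom{[k-1]}{j-1}$, which is $h_{k'}$ times the coefficient of $(-z)^j$ in $-\hat p_{k-1}$ — precisely matching $\hat q_k=\hat q_{k-1}-h_{k'}\hat p_{k-1}$.

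Two small points finish the argument. First, the floor-function upper limits in the four displays are exactly the degrees recorded in Lemma~\ref{lem:degrees}; equivalently, the constraint $I<J$ is self-truncating, since an alternating chain needs $2j$ (or $2j\pm1$) distinct members of $[k]$, so no coefficient is silently dropped or created when $k$ crosses a parity. Second, the whole computation is the mirror image, under the involution $i\mapsto N+1-i$, of the derivation of Corollary~\ref{cor:solqkpk} for $q_k,p_k$ — reflecting that $\hat S_k=\operatorname{adj}(T_{1'}\cdots T_{k'})$ — so one could also deduce it from that corollary instead of redoing the induction. The only genuine hazard is bookkeeping: in each of the four cases one must keep straight which of $I$ or $J$ absorbs the new index $k$ and whether the liberated factor is an $h$ or a $g$. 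Once that dictionary is pinned down the four verifications have identical structure and are routine, so this is the step I would treat most carefully but expect no real difficulty with.
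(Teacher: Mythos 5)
Your proposal is correct and follows exactly the route the paper indicates: induction on $k$ driven by $\hat S_k(z)=\hat T_k(z)\hat S_{k-1}(z)$, with the base case $\hat S_1=\hat T_1$ and the inductive step reduced to splitting each interlacing sum according to whether the index $k$ occurs in the set that closes the chain. The scalar recursions, the identification of which set absorbs $k$ in each of the four entries, and the degree bookkeeping via the floor functions all check out, so nothing further is needed.
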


\begin{example}
It is instructive to display $\hat S_k(z)$ for small $k$.
The notation is that of \eqref{eq:Skhat}.

\begin{enumerate} [label=\alph*)]
\item
\begin{equation*}
\hat S_1(z)=\begin{bmatrix} \hat q_1(z)& \hat Q_1(z)\\\hat p_1(z)&\hat P_1(z) \end{bmatrix}=\begin{bmatrix} 1 &-h_{1'}\\ g_{1'}z& 1 \end{bmatrix}.
\end{equation*}
\item
\begin{equation*}
\hat S_2(z)=\begin{bmatrix} \hat q_2(z)& \hat Q_2(z)\\\hat p_2(z)&\hat P_2(z) \end{bmatrix}=\begin{bmatrix} 1 -(g_{1'} h_{2'})z&-(h_{1'}+h_{2'})\\ (g_{1'}+g_{2'})z& 1 -(h_{1'}g_{2'})z\end{bmatrix}.
\end{equation*}
\item
\begin{align*}
&\hat S_3(z)=\begin{bmatrix} \hat q_3(z)& \hat Q_3(z)\\\hat p_3(z)&\hat P_3(z) \end{bmatrix}=\\
&\quad \begin{bmatrix} 1 -(g_{1'} h_{2'}+g_{1'}h_{3'}+g_{2'}h_{3'})z&-(h_{1'}+h_{2'}+h_{3'})+(h_{1'}g_{2'} h_{3'}) z\\ (g_{1'}+g_{2'}+g_{3'})z-(g_{1'}h_{2'} g_{3'})z^2& 1 -(h_{1'}g_{2'}+h_{1'}g_{3'}+h_{2'}g_{3'})z\end{bmatrix}. 
\end{align*}
\end{enumerate}
\end{example} 
For a polynomial $f(z)$ let us denote by $f^+$ the coefficient at the
highest power in $z$ and use the convention $\hat q_0=1$.
Then, by inspection, we obtain
\begin{equation*}
g_{1'}=\frac{\hat p_{1}^+}{\hat q_0^+}, \quad g_{2'}=\frac{\hat P_{2}^+}{\hat Q_1^+}, \quad g_{3'}=\frac{\hat p_{3}^+}{\hat q_2^+},
\end{equation*}
 and continuing with the help of induction we are led to
\begin{theorem} \label{thm:invg}
For any $1\leq k\leq N$,
\begin{subequations}
\begin{align}
g_{k'}&=\frac{\hoc{\hat p_k}}{\hoc{\hat q_{k-1}}},   &\text{if k  is odd,} \label{eq:inversegodd}\\
g_{k'}&=\frac{\hoc{\hat P_k}}{\hoc{\hat Q_{k-1}}}, &\text{if k is even.}\label{eq:inversegeven}
\end{align}
\end{subequations}
\end{theorem}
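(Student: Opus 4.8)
The plan is to read off Theorem~\ref{thm:invg} from the one-step matrix recursion $\hat S_k(z)=\hat T_k(z)\,\hat S_{k-1}(z)$ (with the convention $\hat S_0(z)=I$, so that $\hat q_0=1,\ \hat p_0=0,\ \hat Q_0=0,\ \hat P_0=1$), together with the exact degree count supplied by Lemma~\ref{lem:degrees}. First I would expand the matrix identity into its four scalar components; the two relevant ones are
\[
\hat p_k(z)=z\,g_{k'}\,\hat q_{k-1}(z)+\hat p_{k-1}(z),\qquad
\hat P_k(z)=z\,g_{k'}\,\hat Q_{k-1}(z)+\hat P_{k-1}(z).
\]
Everything then reduces to comparing the coefficients of the top power of $z$ on the two sides.

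Suppose $k$ is odd. By Lemma~\ref{lem:degrees}, $\deg\hat q_{k-1}=\lfloor\tfrac{k-1}{2}\rfloor=\tfrac{k-1}{2}$, so $\deg\big(z\,g_{k'}\hat q_{k-1}\big)=\tfrac{k+1}{2}$, while $\deg\hat p_{k-1}=\lfloor\tfrac{k}{2}\rfloor=\tfrac{k-1}{2}$ is strictly smaller; moreover $\deg\hat p_k=\lfloor\tfrac{k+1}{2}\rfloor=\tfrac{k+1}{2}$. Hence the highest-degree contribution to the right-hand side of the first recursion comes solely from $z\,g_{k'}\hat q_{k-1}$, and equating coefficients of $z^{(k+1)/2}$ gives $\hoc{\hat p_k}=g_{k'}\hoc{\hat q_{k-1}}$. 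Since $\deg\hat q_{k-1}$ is exactly $\tfrac{k-1}{2}$, its leading coefficient $\hoc{\hat q_{k-1}}$ is nonzero (with $g_j,h_j>0$ it is, up to sign, a nonempty sum of positive terms), so we may divide and obtain \eqref{eq:inversegodd}. The even case is word-for-word the same with $(\hat p_k,\hat q_{k-1},\hat p_{k-1})$ replaced by $(\hat P_k,\hat Q_{k-1},\hat P_{k-1})$: here $\deg\hat Q_{k-1}=\tfrac{k-2}{2}$, $\deg(z\,g_{k'}\hat Q_{k-1})=\tfrac{k}{2}=\deg\hat P_k$, and $\deg\hat P_{k-1}=\tfrac{k-2}{2}$ is strictly smaller, whence $\hoc{\hat P_k}=g_{k'}\hoc{\hat Q_{k-1}}$, i.e. \eqref{eq:inversegeven}.

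There is no genuine obstacle; the proof is essentially the degree bookkeeping above, applied at each $k$ in the spirit of the small-$k$ instances displayed before the theorem ($g_{1'}=\hoc{\hat p_1}/\hoc{\hat q_0}$, $g_{2'}=\hoc{\hat P_2}/\hoc{\hat Q_1}$, and so on). The only points that merit a moment's care are: (i) the base indices $k=1,2$, which are covered by the $\hat S_0=I$ convention and agree with the explicit $\hat S_1$, $\hat S_2$ listed earlier; (ii) that the recovery formula is meaningful only because Lemma~\ref{lem:degrees} certifies $\hoc{\hat q_{k-1}}$ (resp. $\hoc{\hat Q_{k-1}}$) as the nonzero leading coefficient of a polynomial of the stated degree; and (iii) in the even case one may additionally observe that $\hoc{\hat P_k}$ and $\hoc{\hat Q_{k-1}}$ share the same sign, so the ratio is positive, consistent with $g_{k'}>0$.
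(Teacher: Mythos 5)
Your proof is correct and is essentially the fleshed-out version of what the paper intends: the paper obtains the formula ``by inspection'' for small $k$ and ``with the help of induction,'' and your argument — expanding $\hat S_k=\hat T_k\hat S_{k-1}$ into the scalar recursions $\hat p_k=zg_{k'}\hat q_{k-1}+\hat p_{k-1}$, $\hat P_k=zg_{k'}\hat Q_{k-1}+\hat P_{k-1}$ and comparing top-degree coefficients using Lemma \ref{lem:degrees} — is exactly that induction made explicit. The degree bookkeeping in both parities checks out, and the nonvanishing of $\hoc{\hat q_{k-1}}$, $\hoc{\hat Q_{k-1}}$ is indeed guaranteed by the exact degrees (and positivity of $g_j,h_j$) as you note.
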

\begin{remark}

Since, $g_{k'}h_{k'}=m_{k'}n_{k'}$, knowing $g_{k'}$ determines uniquely
$h_{k'}$.
\end{remark}
Now we can state our strategy for solving the original inverse problem
stated in Definition \ref{def:ISP}:
\begin{enumerate}
\item given $W(z)$ we solve the interpolation problems  \eqref{eq:interpol}
for all $1\leq k\leq N$;
\item from the solution to the interpolation problem at stage
$1\leq k\leq N$, we use Theorem \ref{thm:invg} to recover $g_{k'}, h_{k'}$ and thus
$\hat T_k =\begin{bmatrix} 1 & -h_{k'}\\ zg_{k'}&1 \end{bmatrix}$, finally
all transitions matrices $T_k$ (see \eqref{eq:Tk});
\item we then define
\begin{equation}\label{eq:forwarditeration}
\begin{bmatrix} q_k(z) \\ p_k(z) \end{bmatrix}=T_k(z) T_{k-1}(z) \cdots T_{1} (z) \begin{bmatrix} 0\\ 1\end{bmatrix};
\end{equation}
\item we define $\tilde W(z)=\frac{q_N(z)}{p_N(z)} $.
The fact that $W(z)=\tilde W(z)$ follows from \eqref{eq:iteration} and \eqref{eq:backiteration} with $k=N$.
\end{enumerate}

The interpolation problem is linear so the solution will be expressed
in terms of determinants.  Before, however, we present the final formulae it is helpful to introduce a family of generalized Cauchy-Vandermonde matrices \cite{gasca1989computation,martinez1998factorizations,martinez1998fast} attached to a Stieltjes transform of a positive measure.  Matrices of this type arise  in some interpolation problems, including the current one.  We refer the reader to \cite{chang-szmigielski-m1CHlong} for more details but
to ease the presentation we provide in the paragraph below a simplified version of the
interpolation problem specified by \eqref{eq:interpol}.

Given three positive integers $k, l, m$ such that $k=l+m$, a function $f(z)$,  and a collection of distinct points $\{z_j\}_1^k$  we are seeking two polynomials $a(z)=1+\sum_{n=1}^l a_n z^n$ and
$b(z)=\sum_{n=0}^m b_n z^n$ such that
\begin{equation*}
a(z_j)f(z_j)+b(z_j)=0, \qquad 1\leq j\leq k.
\end{equation*}
This interpolation problem is equivalent to the matrix problem
\begin{equation*}
\begin{bmatrix} z_1^1f(z_1)&z_1^2f(z_1)&\dots&z_1^lf(z_1)&1&z_1&\dots&z_1^m\\\\
z_2^1f(z_1)&z_2^2f(z_2)&\dots&z_2^lf(z_2)&1&z_2&\dots&z_2^m\\\\
\vdots&\vdots&\vdots&\vdots&\vdots&\vdots&\vdots&\vdots\\\\\\
z_k^1f(z_k)&z_k^2f(z_k)&\dots&z_k^lf(z_k)&1&z_k&\dots&z_k^m \end{bmatrix} \begin{bmatrix} a_1\\ \vdots\\ \vdots\\ a_l\\b_0\\\vdots\\ b_m \end{bmatrix} =-\begin{bmatrix}f(z_1)\\ \vdots\\ \vdots\\ \\ \\ \vdots\\f(z_k) \end{bmatrix}.
\end{equation*}
Denoting the determinant of the matrix on the left by $D_k$ and assuming for now that $D_k\neq 0$
we can succinctly write the solution
\begin{align*}
&a(z)+z^{l+1} b(z)=\\
&\frac{1}{D_k} \det \begin{bmatrix} 1&z
&z^2&\dots&z^l&z^{l+1}&z^{l+2}&\dots&z^k\\
f(z_1)&z_1^1f(z_1)&z_1^2f(z_1)&\dots&z_1^lf(z_1)&1&z_1&\dots&z_1^m\\\\
f(z_2)&z_2^1f(z_1)&z_2^2f(z_2)&\dots&z_2^lf(z_2)&1&z_2&\dots&z_2^m\\\\
\vdots&\vdots&\vdots&\vdots&\vdots&\vdots&\vdots&\vdots&\vdots\\\\\\
f(z_k)&z_k^1f(z_k)&z_k^2f(z_k)&\dots&z_k^lf(z_k)&1&z_k&\dots&z_k^m \end{bmatrix}  .
\end{align*}
In the case of the function $f$ being the Stieltjes transform of a measure
the determinants in question are computable.  Now we turn to spelling out
the most important points in the solution of the inverse problem.  We start with a definition.

~\begin{definition} \label{def:CSV}
 Given  a (strictly) positive vector $\mathbf{e}\in \R^k$, a non-negative
 number $c$, an index $l$ such that $0\leq l\leq k$, another index $p$ such that $0\leq p, \, p+l-1\leq k-l$, and a positive measure $\nu$
 with support in $\R_+$, a \textit{ Cauchy-Stieltjes-Vandermonde (CSV) matrix} is that of the form
\begin{align*}
  &CSV_k^{(l,p)}(\mathbf{e},\nu, c)=\\
  &\left(\begin{array}{cccccccc}
    e_1^{p}\hat\nu_c(e_1)&e_1^{p+1}\hat \nu_c(e_1)&\cdots&e_1^{p+l-1}\hat\nu_c(e_1)&1&e_1&\cdots&e_1^{k-l-1}\\
    e_2^{p}\hat\nu_c(e_2)&e_2^{p+1}\hat\nu_c(e_2)&\cdots&e_2^{p+l-1}\hat\nu_c(e_2)&1&e_2&\cdots&e_2^{k-l-1}\\
    \vdots&\vdots&\ddots&\vdots&\vdots&\vdots&\ddots&\vdots\\
    e_k^{p}\hat\nu_c(e_k)&e_k^{p+1}\hat\nu_c(e_k)&\cdots&e_k^{p+l-1}\hat\nu_c(e_k)&1&e_k&\cdots&e_k^{k-l-1}
  \end{array}\right),
\end{align*}
where  $
\hat \nu_c(y)=c+\int\frac{d\nu(x)}{y+x}
$ is the (shifted) classical Stieltjes transform of the measure
$\nu$.

\end{definition}
\begin{remark}
In this section we use a slightly different definition of the Stieltjes transform
then the one in the context of the Weyl function (see Section \ref{sec:FSM}).
Thus, in this section, it is  $W(-z)$ which is the Stieltjes transform of the
spectral measure; this notation being in fact  more in line with
Stieltjes' notation in \cite{stieltjes}.

\end{remark}

The explicit formulas for the determinant of the CSV matrix can be readily obtained. To this end we need some notations to
facilitate the presentation.
Recall that the
multi-index notation that was introduced earlier in the part leading up to the definition \ref{def:bigIndi}.  Moreover, let us
 denote $[i,j]=\{i,i+1,\cdots,j\}, \,\,  \binom{[1,K]}{k}=\{J=\{j_1,j_2,\cdots,j_k\}|j_1<\cdots <j_k, j_i\in [1,K]\}$.  Then for two ordered multi-index sets $I, J$  we define
\begin{align*}
  \mathbf{x}_J&=\prod_{j\in J}x_j, &\Delta_J(\mathbf{x})&=\prod_{i<j\in J}(x_j-x_i), \\
  \Delta_{I,J}(\mathbf{x};\mathbf{y})&=\prod_{i\in I}\prod_{j\in J}(x_i-y_j),
  &\Gamma_{I,J}(\mathbf{x};\mathbf{y})&=\prod_{i\in I}\prod_{j\in J}(x_i+y_j),
  \end{align*}
  along with the conventions
\begin{align*}
&\Delta_\emptyset(\mathbf{x})=\Delta_{\{i\}}(\mathbf{x})=\Delta_{\emptyset,J}(\mathbf{x};\mathbf{y})=\Delta_{I,\emptyset}(\mathbf{x};\mathbf{y})=\Gamma_{\emptyset,J}(\mathbf{x};\mathbf{y})=\Gamma_{I,\emptyset}(\mathbf{x};\mathbf{y})=1,&\\
&\binom{[1,K]}{0}=1;\qquad\qquad \binom{[1,K]}{k}=0,\ \  k>K.&
\end{align*}

Since we will eventually obtain expressions in terms of the
ratios of determinants it is helpful to study the structure
of these determinants. The results stated below will be important
at two stages of our analysis: the existence of the solution to the inverse problem and the large time asymptotic analysis of solutions to \eqref{eq:m2CH_ode}.
The detailed proofs can be found in  \cite{chang-szmigielski-m1CHlong}.
\begin{theorem}\label{thm:detCSV}
Let $\nu$ be a positive measure with support in $\R_+$ and let $\mathbf{x}$ denote the vector ~$[x_1,x_2,\dots , x_l]\in ~\R^l $ and $d\nu^p(y)=y^p d\nu(y)$, respectively.
Then
\begin{enumerate}
\item if either $c=0$ or $p+l-1<k-l$ then
\begin{equation}\label{eq:detCSV1}
\begin{split}
&\det CSV_k^{(l,p)}(\mathbf{e},\nu, c)=(-1)^{lp+\frac{l(l-1)}{2}}\\
& \Delta_{[1,k]}(\mathbf{e})
\int_{0<x_1<x_2<\dots<x_l} \frac{\Delta_{[1,l]}(\mathbf{x})^2}{\Gamma_{[1,k], [1,l]}(\mathbf{e}; \mathbf{x})} d\nu^p(x_1)d\nu^p(x_2)\dots d\nu^p(x_l);
\end{split}
\end{equation}
\item if $c>0$ and $p+l-1=k-l$ then
\begin{equation} \label{eq:detCSV2}
\begin{split}
&\det CSV_k^{(l,p)}(\mathbf{e},\nu, c)=(-1)^{lp+\frac{l(l-1)}{2}} \Delta_{[1,k]}(\mathbf{e})\\&\cdot\Big(
\int_{0<x_1<x_2<\dots<x_l} \frac{\Delta_{[1,l]}(\mathbf{x})^2}{\Gamma_{[1,k], [1,l]}(\mathbf{e}; \mathbf{x})} d\nu^p(x_1)d\nu^p(x_2)\dots d\nu^p(x_l)\\
&+c
\int_{0<y_1<y_2<\dots<y_{l-1}} \frac{\Delta_{[1,l-1]}(\mathbf{y})^2}{\Gamma_{[1,k], [1,l-1]}(\mathbf{e}; \mathbf{y})} d\nu^p(y_1)d\nu^p(y_2)\dots d\nu^p(y_{l-1})\Big).
\end{split}
\end{equation}
\end{enumerate}

\end{theorem}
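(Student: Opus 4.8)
The plan is to reduce $\det CSV_k^{(l,p)}(\mathbf e,\nu,c)$ to a classical Cauchy--Vandermonde determinant by two successive uses of multilinearity of the determinant. First I would substitute $\hat\nu_c(e_i)=c+\int\frac{d\nu(x)}{e_i+x}$ into the first $l$ columns and expand by linearity in each of them, obtaining $\det CSV_k^{(l,p)}(\mathbf e,\nu,c)=\sum_{S\subseteq[l]}c^{\,l-|S|}\det M_S$, where $M_S$ is the matrix whose $j$-th column ($j\in[l]$) is the Stieltjes part $\bigl(\int\frac{e_i^{p+j-1}}{e_i+x}\,d\nu(x)\bigr)_i$ when $j\in S$ and the bare monomial $(e_i^{p+j-1})_i$ when $j\notin S$, the last $k-l$ columns $1,e_i,\dots,e_i^{k-l-1}$ being untouched. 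The key point is that $\det M_S=0$ unless $S\supseteq\{1,\dots,l-1\}$: if $j\notin S$ with $j\le l-1$ then its monomial column has exponent $p+j-1\le p+l-2\le k-l-1$ (using $p+l-1\le k-l$) and hence duplicates a column of the Vandermonde block; moreover $M_{[l-1]}$ itself vanishes when $p+l-1<k-l$, since then the exponent $p+l-1$ of its one remaining monomial column ($j=l$) is also $\le k-l-1$. Hence only $S=[l]$ survives in general --- and it is the only survivor when $c=0$ --- with an extra term $c\,\det M_{[l-1]}$ appearing exactly in the boundary case $p+l-1=k-l$ with $c>0$. This is the dichotomy between \eqref{eq:detCSV1} and \eqref{eq:detCSV2}.

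For each surviving $M_S$ I would interchange determinant and integration (again by multilinearity in the $|S|$ Stieltjes columns), writing $\det M_S$ as an $|S|$-fold integral against $d\nu$ of the determinant $\det\bigl[\bigl(\tfrac{e_i^{p+j-1}}{e_i+x_j}\bigr)_{j\in S}\,\big|\,\text{monomial block}\bigr]$. To evaluate this, use the partial-fraction identity $\tfrac{e^{p+j-1}}{e+x_j}=P_j(e)+\tfrac{(-x_j)^{p+j-1}}{e+x_j}$ with $\deg P_j\le p+j-2$, and kill each polynomial part by column operations with the monomial columns; this is legitimate precisely because $p+l-1\le k-l$ guarantees that all the monomials $e^0,\dots,e^{p+j-2}$ required are present in the matrix. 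Factoring $(-x_j)^{p+j-1}$ out of column $j$ for $j\in S$ then produces the scalar $(-1)^{\sum_{j\in S}(p+j-1)}\prod_{j\in S}x_j^{p+j-1}$ and leaves the classical Cauchy--Vandermonde determinant $\det\bigl[\bigl(\tfrac1{e_i+x_j}\bigr)_{j\in S}\,\big|\,\text{monomial block}\bigr]=\dfrac{\Delta_{[1,k]}(\mathbf e)\,\Delta_S(\mathbf x)}{\Gamma_{[1,k],S}(\mathbf e;\mathbf x)}$, which is the Cauchy determinant identity combined with a Vandermonde determinant.

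The final step is symmetrization over the simplex. Writing $\prod_{j\in S}x_j^{p+j-1}=\bigl(\prod_j x_j^p\bigr)\bigl(\prod_j x_j^{j-1}\bigr)$ turns $\prod_{j\in S}d\nu(x_j)$ into $\prod_{j\in S}d\nu^p(x_j)$; and since $\Gamma_{[1,k],S}$ and $d\nu^p$ are symmetric in the $x_j$ while $\Delta_S$ is antisymmetric, averaging over permutations $\sigma$ replaces $\bigl(\prod_j x_j^{j-1}\bigr)\Delta_S(\mathbf x)$ by $\tfrac1{|S|!}\Delta_S(\mathbf x)^2$, via the Vandermonde expansion $\sum_\sigma(\operatorname{sgn}\sigma)\prod_j x_{\sigma(j)}^{j-1}=\Delta_S(\mathbf x)$; the factor $|S|!$ is absorbed by restricting the integral to $x_1<\dots<x_{|S|}$. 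For $S=[l]$ this gives \eqref{eq:detCSV1} with the prefactor $(-1)^{lp+l(l-1)/2}$ coming straight from $\sum_{j=1}^l(p+j-1)=lp+\tfrac{l(l-1)}2$. For $S=[l-1]$ one must in addition reorder the monomial block $e^{k-l},e^0,\dots,e^{k-l-1}$ into increasing order before applying the Cauchy--Vandermonde formula, which costs a sign $(-1)^{k-l}$; using $k-l=p+l-1$ the exponent $(l-1)p+\tfrac{(l-1)(l-2)}2+(k-l)$ collapses to $lp+\tfrac{l(l-1)}2$, so the two terms in \eqref{eq:detCSV2} share the same prefactor, as claimed.

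I expect the sign bookkeeping to be the main obstacle --- in particular the verification just sketched that the $c$-term in the boundary case carries exactly the prefactor $(-1)^{lp+l(l-1)/2}$ --- together with making the column-operation reduction to the classical Cauchy--Vandermonde form airtight, which is where the hypothesis $p+l-1\le k-l$ (equivalently $2l\le k-p+1$) genuinely enters: it is exactly what makes enough monomial columns of the right degrees available. The remaining ingredients --- the Cauchy determinant evaluation, the interchange of sum/integral with the determinant, and the simplex symmetrization --- are routine. Full details are in \cite{chang-szmigielski-m1CHlong}.
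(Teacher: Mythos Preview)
Your argument is correct and complete as a sketch: the multilinear expansion in the Stieltjes columns, the vanishing of all but one (or two, in the boundary case) subsets $S$ via column duplication, the partial-fraction reduction to the classical Cauchy--Vandermonde determinant, and the simplex symmetrization are exactly the right ingredients, and your sign bookkeeping for both the $S=[l]$ and $S=[l-1]$ terms checks out. The paper itself does not give an independent proof of this theorem at all --- it simply states the result and refers the reader to \cite{chang-szmigielski-m1CHlong} for details --- so there is no alternative argument in the paper to compare yours against; your sketch is in fact more informative than what the paper provides, and since you close by citing the same reference, you are fully aligned with the paper's treatment.
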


Our next step is to give a complete
solution to the inverse problem as stated in Definition \ref{def:ISP} in terms of the determinants of the CSV matrices.
To this end we set (see \eqref{eq:Wc}):
\begin{equation} \label{eq:ej}
e_j=\frac{1}{m_{j'}n_{j'}}, \quad \mathbf{e}_{[1,j]}=e_1e_2\cdots e_j, \qquad \nu =\mu, \quad 1\leq j\leq N,
\end{equation}
and
\begin{equation}\label{eq:calD}
\D_k^{(l,p)}=\abs{\det CSV_k^{(l,p)}(\mathbf{e},\mu,c)}.
\end{equation}

Now, with Theorem \ref{thm:detCSV} in hand, the main theorem of this section follows from the
solution of the interpolation problem \eqref{eq:interpol} with
normalization conditions of Lemma \ref{lem:degrees}, and Theorem \ref{thm:invg}.
For details of computations we refer to \cite{chang-szmigielski-m1CHlong}.

\begin{theorem}\label{thm:detsolISP}
Suppose the Weyl function $W(z)$ is given by \eqref{eq:Wc} along with
positive constants (masses) $m_1, m_2, \cdots, m_N,n_1,n_2,\cdots,n_N$ such that the products $m_jn_j$ are distinct.  Then,
there exists a unique solution to the inverse problem specified in Definition \ref{def:ISP}:
\begin{subequations}
\begin{align}
&&g_{k'}&=\frac{\D_k^{(\frac{k-1}{2},1)}\D_{k-1}^{(\frac{k-1}{2},1)}}
{\mathbf{e}_{[1,k]}\D_k^{(\frac{k+1}{2},0)}\D_{k-1}^{(\frac{k-1}{2},0)}},   &\text{ if k  is odd},  \label{eq:detinversegodd}\\
&&g_{k'}&= \frac{\D_k^{(\frac{k}{2},1)}\D_{k-1}^{(\frac k2 -1,1)}}
{\mathbf{e}_{[1,k]}\D_k^{(\frac k2,0)}\D_{k-1}^{(\frac{k}{2},0)}}, &\text{ if k is even}. \label{eq:detinversegeven}
\end{align}
\end{subequations}
Likewise,
\begin{subequations}
\begin{align}
h_{k'}&=\frac
{\mathbf{e}_{[1,k-1]}\D_k^{(\frac{k+1}{2},0)}\D_{k-1}^{(\frac{k-1}{2},0)}}{\D_k^{(\frac{k-1}{2},1)}\D_{k-1}^{(\frac{k-1}{2},1)}},   &\text{ if k is odd},  \label{eq:detinversehodd}\\
h_{k'}&= \frac
{\mathbf{e}_{[1,k-1]}\D_k^{(\frac k2,0)}\D_{k-1}^{(\frac{k}{2},0)}}{\D_k^{(\frac{k}{2},1)}\D_{k-1}^{(\frac k2 -1,1)}}, &\text{ if k is even}. \label{eq:detinverseheven}
\end{align}
\end{subequations}
\end{theorem}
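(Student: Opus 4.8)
The strategy is the four-step procedure spelled out just before the statement: solve the interpolation problems \eqref{eq:interpol} for each $k$, extract $g_{k'},h_{k'}$ via Theorem \ref{thm:invg}, rebuild the transition matrices, and verify $\tilde W=W$. Concretely, fix $k$ with $1\le k\le N$ and recall from the lemma preceding \eqref{eq:Skhat} that the vectors $\bigl[\begin{smallmatrix}W(z_j)\\ 1\end{smallmatrix}\bigr]$, with $z_j=-e_{j}^{-1}\cdot(-1)=-1/(m_{j'}n_{j'})$, lie in $\ker\hat S_k(z_j)$ for $1\le j\le k$. Writing out $\hat S_k$ as in \eqref{eq:Skhat}, this gives the two interpolation systems \eqref{eq:interpol}; by Lemma \ref{lem:degrees} the number of unknown coefficients in each of the polynomial pairs $(\hat q_k,\hat Q_k)$ and $(\hat p_k,\hat P_k)$ exactly matches the number of conditions. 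The first task is to show the relevant coefficient matrices are nonsingular — equivalently that the CSV determinants $\D_k^{(l,p)}$ appearing in the statement are nonzero — which is where Theorem \ref{thm:detCSV} enters: since $\mu$ is a positive measure on $\R_+$, $c\ge 0$, and the $e_j$ are distinct and positive, the integral representations \eqref{eq:detCSV1}--\eqref{eq:detCSV2} have integrands of one sign, so each determinant is a nonzero multiple of the Vandermonde $\Delta_{[1,k]}(\mathbf{e})$, hence nonzero. Uniqueness of the solution to the inverse problem follows from the uniqueness of the solution to each linear interpolation system.

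Next I would make the normalization in Lemma \ref{lem:degrees}(2) explicit in terms of CSV matrices. Translating \eqref{eq:interpol} into the matrix form displayed after Definition \ref{def:CSV} (with $f=W$, viewing $W(-z)$ as the Stieltjes transform of $\mu$ in accordance with the Remark after Definition \ref{def:CSV}), and matching the parity-dependent degree bookkeeping of Lemma \ref{lem:degrees}, the polynomials $\hat p_k,\hat q_{k-1}$ (odd $k$) or $\hat P_k,\hat Q_{k-1}$ (even $k$) are each given by a ratio of a "bordered" CSV determinant to a plain one. The leading coefficients $\hoc{\hat p_k},\hoc{\hat q_{k-1}},\hoc{\hat P_k},\hoc{\hat Q_{k-1}}$ are then read off as ratios of $\D_k^{(l,p)}$'s: the numerator comes from the top-degree entry of the bordered determinant expansion, the denominator from $D_k$ itself, with the powers of $\mathbf{e}_{[1,k]}$ tracking the normalization $\hat q_k(0)=1$, $\hat P_k(0)=1$ from Lemma \ref{lem:degrees}(2) together with the explicit constant terms in the preceding lemma. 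Feeding these leading coefficients into \eqref{eq:inversegodd}--\eqref{eq:inversegeven} of Theorem \ref{thm:invg} yields exactly \eqref{eq:detinversegodd}--\eqref{eq:detinversegeven}; the formulas \eqref{eq:detinversehodd}--\eqref{eq:detinverseheven} for $h_{k'}$ then follow immediately from $g_{k'}h_{k'}=m_{k'}n_{k'}$ and $e_{k}=1/(m_{k'}n_{k'})$, since the product $g_{k'}h_{k'}$ telescopes the determinant ratios down to a single factor $e_{k}^{-1}$, consistent with the claimed expressions.

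Finally, to close the loop one checks that the $g_{k'},h_{k'}$ so produced actually solve the inverse problem: define $q_k,p_k$ by the forward iteration \eqref{eq:forwarditeration} and set $\tilde W=q_N/p_N$. Applying \eqref{eq:iteration} with $k=N$ and \eqref{eq:backiteration} with $k=N$ — the latter saying that $\hat S_N(z)\bigl[\begin{smallmatrix}W(z)\\1\end{smallmatrix}\bigr]$ is a scalar multiple of $\bigl[\begin{smallmatrix}q_1\\p_1\end{smallmatrix}\bigr]$ — shows $\tilde W$ and $W$ have the same value wherever both sides of the interpolation identities hold; a degree count (Lemma \ref{lem:degrees} plus Corollary \ref{cor:pq-degrees}) then forces $\tilde W=W$ identically. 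The positivity $g_{k'},h_{k'}>0$ is inherited from the one-sign integral representations in Theorem \ref{thm:detCSV}: each determinant ratio is a ratio of positive quantities up to the common Vandermonde factor, which cancels.

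\textbf{Main obstacle.} The genuinely delicate step is the precise bookkeeping that matches the parity-dependent degrees and normalizations of Lemma \ref{lem:degrees} to the exponents $(l,p)$ and the powers of $\mathbf{e}_{[1,k]}$ in the four formulas — in particular verifying that the bordered CSV determinant encoding the leading coefficient of $\hat p_k$ (or $\hat P_k$) is $\D_k^{(\frac{k+1}{2},0)}$ (resp.\ $\D_k^{(\frac k2,0)}$) and not an off-by-one variant, and that the case $c>0$, $p+l-1=k-l$ of Theorem \ref{thm:detCSV} is the one that occurs exactly in the odd-$N$ situation where $W$ has a nonzero shift. This is precisely the computation the authors defer to \cite{chang-szmigielski-m1CHlong}; everything else is either a sign-definiteness argument or linear algebra.
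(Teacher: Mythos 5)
Your proposal takes essentially the same route as the paper: solve the interpolation problems \eqref{eq:interpol} subject to the normalizations of Lemma \ref{lem:degrees}, read off the leading coefficients via Theorem \ref{thm:invg}, evaluate the resulting determinant ratios with Theorem \ref{thm:detCSV}, and recover $h_{k'}$ from $g_{k'}h_{k'}=m_{k'}n_{k'}$ (the telescoping of $\mathbf{e}_{[1,k-1]}/\mathbf{e}_{[1,k]}=1/e_k$ checks out), with the paper itself deferring the same parity bookkeeping you flag to \cite{chang-szmigielski-m1CHlong}. One cosmetic slip: your intermediate expression $-e_j^{-1}\cdot(-1)$ for the interpolation node is inconsistent with the (correct) final value $z_j=-1/(m_{j'}n_{j'})=-e_j$.
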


\subsection{The second inverse problem}
To proceed to the next step we recall that the original peakon problem \eqref{eq:m2CH_ode} was formulated
in the $x$ space.  To go back to the $x$ space we use the relation $h_j=m_je^{x_j}$ (see equation \eqref{eq:xLax}) to
arrive at the inverse formulae
relating the spectral data and the positions of peakons given by $x_j$.
\begin{theorem} \label{thm:inversex}
Given positive constants $m_j,n_j$ with distinct products $m_jn_j$, let $\Phi$ be the solution to the boundary value
problem \ref{eq:xLaxBVP} with associated spectral data $\{d\mu, c\}$.
Then the positions $x_j$ (of peakons) in the discrete measures
$m=2\sum_{j=1}^N m_j \delta_{x_j} $ and $n=2\sum_{j=1}^N n_j \delta_{x_j} $ can be expressed in
terms of the spectral data as:

\begin{subequations}
\begin{align}
&&x_{k'}&=\ln \frac
{\mathbf{e}_{[1,k-1]}\D_k^{(\frac{k+1}{2},0)}\D_{k-1}^{(\frac{k-1}{2},0)}}{m_{k'}\D_k^{(\frac{k-1}{2},1)}\D_{k-1}^{(\frac{k-1}{2},1)}},   &\text{ if k  is odd}, \label{eq:detinversexodd}\\
&&x_{k'}&= \ln \frac
{\mathbf{e}_{[1,k-1]}\D_k^{(\frac k2,0)}\D_{k-1}^{(\frac{k}{2},0)}}{m_{k'}\D_k^{(\frac{k}{2},1)}\D_{k-1}^{(\frac k2 -1,1)}}, &\text{ if k is even}, \label{eq:detinversexeven}
\end{align}
\end{subequations}
with $\D_k^{(l,p)}$ defined in \eqref{eq:calD}, $k'=N-k+1, \, 1\leq k\leq N$
and the convention that $\D_0^{l,p}=1$.
\end{theorem}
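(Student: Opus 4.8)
The plan is to reduce Theorem \ref{thm:inversex} to the already-established solution of the first inverse problem, Theorem \ref{thm:detsolISP}, through the single substitution $h_j=m_je^{x_j}$ (equivalently $g_j=n_je^{-x_j}$) that was used to pass to the Sturm--Liouville form \eqref{eq:xLax}. First I would observe that the boundary value problem \eqref{eq:xLaxBVP} attached to masses $m_j,n_j$ and ordered positions $x_1<\dots<x_N$ is precisely the difference system \eqref{dstring} with these $g_j,h_j$; by Lemma \ref{lem:forwardR} and Corollary \ref{cor:solqkpk} its polynomials $q_N,p_N$ are well defined and $g_jh_j=m_jn_j$. By Theorem \ref{thm:W} (equivalently Corollary \ref{cor:spectrum}) the associated Weyl function $W(z)=q_N(z)/p_N(z)$ has exactly the form \eqref{eq:Wc}: a shifted Stieltjes transform $c+\int d\mu(x)/(x-z)$ of a positive discrete measure $d\mu$ supported on $\floor{N/2}$ points of $\R_+$, with $c>0$ when $N$ is odd and $c=0$ when $N$ is even. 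Hence the pair (Weyl function, masses) is admissible input for Definition \ref{def:ISP}, and, the products $m_jn_j$ being assumed distinct, Theorem \ref{thm:detsolISP} applies.

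Next I would invoke uniqueness. By construction the configuration's own data $(g_j,h_j)=(n_je^{-x_j},\,m_je^{x_j})$ solves the inverse problem of Definition \ref{def:ISP} for this $W$: it satisfies $g_jh_j=m_jn_j$, and by \eqref{eq:iteration} with $k=N$ the initial value problem \eqref{eq:rodIVP} (with $q_0=0,\,p_0=1$) reproduces $q_N/p_N=W$. Theorem \ref{thm:detsolISP} asserts that this solution is unique and furnishes the closed forms of $h_{k'}$ in \eqref{eq:detinversehodd}--\eqref{eq:detinverseheven} in terms of $\D_k^{(l,p)}=\abs{\det CSV_k^{(l,p)}(\mathbf{e},\mu,c)}$ (see \eqref{eq:calD}) with $e_j=1/(m_{j'}n_{j'})$. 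Therefore the original $h_{k'}=m_{k'}e^{x_{k'}}$ must coincide with the right-hand sides of those formulas.

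Finally I would solve for the position: $h_{k'}=m_{k'}e^{x_{k'}}$ gives $x_{k'}=\ln(h_{k'}/m_{k'})$, so dividing the odd and even formulas for $h_{k'}$ in Theorem \ref{thm:detsolISP} by $m_{k'}$ and taking logarithms yields \eqref{eq:detinversexodd} and \eqref{eq:detinversexeven}, with the conventions $k'=N-k+1$ and $\D_0^{(l,p)}=1$. The only genuine point to verify---and the closest thing to an obstacle---is that the forward map really lands in the admissible class of Theorem \ref{thm:detsolISP} for \emph{every} configuration, in particular that the even/odd dichotomy for the shift $c$ is exactly the one recorded in Theorem \ref{thm:W}, and that the labeling and ordering $x_1<\dots<x_N$ underlying the forward iteration is the one recovered, so that ``forward then inverse'' is the identity on configurations. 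The hypothesis that the $m_jn_j$ be distinct enters only to keep the determinants $\D_k^{(l,p)}$ nonzero so the ratios make sense; it is lifted afterwards by the limiting argument announced below the theorem.
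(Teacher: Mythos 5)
Your proposal is correct and follows essentially the same route as the paper: Theorem \ref{thm:inversex} is obtained there directly from Theorem \ref{thm:detsolISP} via the relation $h_{k'}=m_{k'}e^{x_{k'}}$, so that $x_{k'}=\ln(h_{k'}/m_{k'})$ turns \eqref{eq:detinversehodd}--\eqref{eq:detinverseheven} into \eqref{eq:detinversexodd}--\eqref{eq:detinversexeven}. Your additional verification that the forward map produces admissible data of the form \eqref{eq:Wc} (via Theorem \ref{thm:W}) and that uniqueness forces the recovered $h_{k'}$ to equal $m_{k'}e^{x_{k'}}$ is exactly the implicit content of the paper's one-line derivation.
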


Finally, we can relax the condition that the products of masses $m_j,n_j$ be distinct.  Indeed, it suffices to observe that the Vandermonde determinants $\Delta_{[1, r]}(\mathbf{e}), \, r=k, k-1$,
cancel out in all expressions of the type
\begin{equation*}
\frac{\D_k^{( l_1,p_1)}\D_{k-1}^{(l_2,p_2)}}{\D_k^{(l_3,p_3)}\D_{k-1} ^{(l_4,p_4)}},
\end{equation*}
as a result of Theorem \ref{thm:detCSV} (see \eqref{eq:detCSV1} and \eqref{eq:detCSV2}).

In summary, we completed the full circle of starting with initial positions
of peakons, mapping them to the spectral data $\{d\mu, c\}$, while
in the last chapter we solved explicitly the inverse problem of
mapping back the spectral data  to the positions $x_j$ of peakons.
In all this the time was fixed.   In the following sections, we concentrate on the time evolution of the multipeakons of the 2-mCH.  In light of the difference of the value of $c$ in the Stieltjes transform according to whether $N$ is even or odd, the discussions will be presented separately for $N$ even, $N$ odd, respectively.

\section{Multipeakons for $N=2K$}\label{sec:evenpeakons}
For even $N$, $c=0$  (see Theorem \ref{thm:W}).  This impacts
the asymptotic behaviour of solutions.  Further
comments on differences between solutions for odd and even $N$
are in Section \ref{sec:redc_odd_even}.
\subsection{Closed formulae for $N=2K$}
If we assume that $x_1(0)<x_2(0)<\cdots<x_{2K}(0)$ then by continuity, this condition will hold at least in a small interval containing $t=0$.
At $t=0$ we solve the forward problem (see Section \ref{sec:FSM}) and obtain the
Weyl function $W(z)$ (see Theorem \ref{thm:W}) hence
the spectral measure $d\mu(0)=\sum_{j=1}^{K} b_j(0) \delta_{\zeta_j}$ supported on the set of
ordered eigenvalues $0<\zeta_1<\cdots<\zeta_K$.

With the help of Theorem \ref{thm:inversex} we obtain the following result.

\begin{theorem}\label{thm:peakon_even}
The 2-mCH \eqref{eq:m2CH} with the regularization
of the singular term $Q m$ given by $\avg{Q}m$ admits the multipeakon solution
\begin{equation}\label{eq:umultipeakoneven}
u(x,t)=\sum_{k=1}^{2K}m_{k'}\exp(-|x-x_{k'}(t)|), \qquad v(x,t)=\sum_{k=1}^{2K}n_{k'}\exp(-|x-x_{k'}(t)|),
\end{equation}
where $m_{k'}, n_{k'}$ are arbitrary positive constants, while $x_{k'}(t)$ are given by equations \eqref{eq:detinversexodd} and \eqref{eq:detinversexeven} corresponding to the peakon spectral measure
\begin{equation}\label{eq:peakon sm}
d\mu(t)=\sum_{j=1}^{K} b_j(t) \delta_{\zeta_j},
\end{equation}
with $b_j(t)=b_j(0)e^{\frac{2t}{\zeta_j}}, \, 0<b_j(0)$, and $c=0$ in \eqref{eq:calD}.
\end{theorem}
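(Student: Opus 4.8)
The plan is to assemble the theorem from pieces already established in the excerpt, the only genuinely new ingredient being the time evolution of the spectral data. First I would recall that by Lemma~\ref{lem:forwardR} and Corollary~\ref{cor:solqkpk} the peakon ODEs \eqref{eq:m2CH_ode} are equivalent, via $g_j=n_je^{-x_j}$, $h_j=m_je^{x_j}$, $z=\lambda^2$, to the difference boundary value problem \eqref{dstring}, and by Theorem~\ref{thm:W} and Corollary~\ref{cor:spectrum} the associated Weyl function $W(z)=q_N/p_N$ is, for $N=2K$, a genuine (unshifted, since $c=0$) Stieltjes transform of a positive discrete measure $d\mu=\sum_{j=1}^{K}b_j\delta_{\zeta_j}$ supported on $K$ simple positive points $0<\zeta_1<\dots<\zeta_K$. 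Thus at $t=0$ the forward map produces exactly the data $\{d\mu(0),c=0\}$ appearing in the statement.

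Next I would pin down the time evolution. Lemma~\ref{lem:t-evolution of qp} gives $\dot p_N=0$ and $\dot q_N=\frac{2}{z}q_N-\frac{2u_+}{z}p_N$, so the spectrum $\{\zeta_j\}$ is frozen (this is the isospectrality already emphasized in Section~\ref{sec:Lax}), and for the residues $b_j=-\operatorname*{Res}_{z=\zeta_j}W(z)=-q_N(\zeta_j)/p_N'(\zeta_j)$ one computes $\dot b_j = -\dot q_N(\zeta_j)/p_N'(\zeta_j) = -\frac{2}{\zeta_j}q_N(\zeta_j)/p_N'(\zeta_j) = \frac{2}{\zeta_j}b_j$, because the $p_N$-dependent term in $\dot q_N$ vanishes at the zero $\zeta_j$ of $p_N$ and $p_N'(\zeta_j)$ is time-independent. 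Integrating gives $b_j(t)=b_j(0)e^{2t/\zeta_j}$, and since $c=0$ is preserved (again from $\dot p_N=0$ together with $\deg q_N<\deg p_N$ in the even case), the full spectral data at time $t$ is $\{d\mu(t),0\}$ with $d\mu(t)$ as in \eqref{eq:peakon sm}. I would remark that $b_j(t)>0$ for all $t$, so the hypotheses of the inverse-problem theorems remain in force along the whole flow.

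Then I would invoke Theorem~\ref{thm:inversex} (equivalently the solution of the inverse problem in Theorem~\ref{thm:detsolISP}): applying it at each fixed $t$ to the data $\{d\mu(t),c=0\}$ together with the time-independent constants $m_{k'},n_{k'}$ recovers $x_{k'}(t)$ through formulas \eqref{eq:detinversexodd}–\eqref{eq:detinversexeven} with $\D_k^{(l,p)}$ evaluated using $\nu=\mu(t)$. Because the inverse map is the two-sided inverse of the forward map (the "full circle" remark before Section~\ref{sec:evenpeakons}), the functions $x_{k'}(t)$ so obtained satisfy \eqref{eq:m2CH_ode}, and hence $u,v$ in \eqref{eq:umultipeakoneven} solve \eqref{eq:m2CH} with the $\avg{Q}m$ regularization, as shown in Appendix~\ref{lax_m2ch}. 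The ordering condition $x_1<\dots<x_N$, assumed at $t=0$, is what makes the interpolation nodes $e_j=1/(m_{j'}n_{j'})$ and the CSV determinants behave as required; I would note it persists on the relevant time interval by the asymptotic/monotonicity analysis of the determinant ratios (Theorem~\ref{thm:detCSV}).

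The main obstacle, and the one point deserving care rather than citation, is the residue computation $\dot b_j = (2/\zeta_j)b_j$ and the claim that the shift stays $0$: one must check that differentiating $b_j=-q_N(\zeta_j)/p_N'(\zeta_j)$ is legitimate (the $\zeta_j$ do not move, so there is no chain-rule term), that the $p_N$-term in $\dot q_N$ genuinely drops out at $z=\zeta_j$, and that no new pole or constant term is generated — all of which follow cleanly from Lemma~\ref{lem:t-evolution of qp} and the degree count $\deg q_N=K-1<K=\deg p_N$. Everything else is bookkeeping: matching the $t=0$ data to the forward map, transporting it by the linear flow of residues, and feeding it back through the already-proven inverse formulas.
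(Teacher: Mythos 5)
Your proposal is correct and follows essentially the same route as the paper: the paper's proof likewise reduces everything to the time evolution of the spectral data, deriving $\dot W=\frac{2}{z}W-\frac{2u_+}{z}$ from Lemma \ref{lem:t-evolution of qp} and concluding $\dot b_j=\frac{2}{\zeta_j}b_j$ (with $c=0$ preserved) before feeding the evolved data back through Theorem \ref{thm:inversex}. Your explicit residue computation $b_j=-q_N(\zeta_j)/p_N'(\zeta_j)$ just spells out the step the paper leaves to Corollary \ref{cor:spectrum}.
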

\begin{proof}
The only outstanding issue is the time evolution of $b_j$ or, more generally, the time evolution of the spectral measure.  Recall the Weyl function $W(z)$ defined in \eqref{eq:defWeyl}. By employing the time evolution \eqref{eq:tderqp},
one easily obtains
\begin{equation*}
\dot W=\frac 2z W-\frac{2u_+}{z},
\end{equation*}
which, in turn, implies
$\dot b_j=\frac{2}{\zeta_j} b_j, 1\leq j\leq K$, by use of
Corollary \ref{cor:spectrum}.

\end{proof}

In the following, we provide examples of multipeakons in the case of even $N$.  Before this is done, it is useful to examine the explicit formulas for the
CSV determinants following  Theorem \ref{thm:detCSV} (see equation \eqref{eq:calD} for notation). We remind the reader that the
eigenvalues $\zeta_j$ are positive and ordered $0<\zeta_1<\dots< \zeta_K$.
\begin{theorem} \label{thm:Dklm-peakon}  Let $N=2K, \, 0\leq l\leq K, \, 0\leq p, \,  p+l-1\leq k-l, \,1\leq k\leq 2K$ and let the peakon spectral measure be given by \eqref{eq:peakon sm}.
Then
\begin{enumerate}
\item
\begin{equation}\label{eq:Dklp}
\D_k^{(l,p)}=\abs{\Delta_{[1,k]}(\mathbf{e})}\sum_{I\in\binom{[1,K]}{l} }
\frac{\Delta^2_I(\mathbf{\zeta})\mathbf{b}_{I} \mathbf{\zeta}^p_I}{\Gamma_{[1,k],I}(\mathbf{e};\mathbf{\zeta})};
\end{equation}
\item in the asymptotic region $t\rightarrow + \infty$
\begin{equation} \label{eq:Dklp+}
\D_k^{(l,p)}=\abs{\Delta_{[1,k]}(\mathbf{e})}
\frac{\Delta^2_{[1,l]} (\mathbf{\zeta})\mathbf{b}_{[1,l]} \mathbf{\zeta}^p_{[1,l]}}{\Gamma_{[1,k],[1,l]}(\mathbf{e};\mathbf{\zeta})}\Big[1+\mathcal{O}(e^{-\alpha t})\Big], \quad \textrm{ for some } \alpha>0;
\end{equation}
\item in the asymptotic region $t \rightarrow -\infty$
\begin{equation}\label{eq:Dklp-}
\D_k^{(l,p)}=\abs{\Delta_{[1,k]}(\mathbf{e})}
\frac{\Delta^2_{[1,l]^*} (\mathbf{\zeta})\mathbf{b}_{[1,l]^*} \mathbf{\zeta}^p_{[1,l]^*}}{\Gamma_{[1,k],[1,l]^*}(\mathbf{e};\mathbf{\zeta})}\Big[1+\mathcal{O}(e^{\beta t})\Big]
, \quad 0<\beta,
\end{equation}
where $[1,l]^*=[l^*=K-l+1,1^*=K]$ $(\textrm{reflection of the interval }[1,K])$.
\end{enumerate}
\end{theorem}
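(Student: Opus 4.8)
The plan is to read off part~(1) directly from Theorem~\ref{thm:detCSV} and then obtain parts~(2) and~(3) by a dominant-balance analysis of the resulting finite sum.

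First I would establish \eqref{eq:Dklp}. Since $N=2K$, Theorem~\ref{thm:W} gives $c=0$ in the Weyl function, so $\D_k^{(l,p)}=\abs{\det CSV_k^{(l,p)}(\mathbf{e},\mu,0)}$ and case~(1) of Theorem~\ref{thm:detCSV} is in force (its hypothesis ``$c=0$'' holds outright, so the clause ``$p+l-1<k-l$'' is irrelevant). That formula expresses $\det CSV_k^{(l,p)}$ as $(-1)^{lp+l(l-1)/2}\,\Delta_{[1,k]}(\mathbf{e})$ times the iterated integral $\int_{0<x_1<\dots<x_l}\Delta_{[1,l]}(\mathbf{x})^2\,\Gamma_{[1,k],[1,l]}(\mathbf{e};\mathbf{x})^{-1}\,d\mu^p(x_1)\cdots d\mu^p(x_l)$, where $d\mu^p(x)=x^p\,d\mu(x)=\sum_{j=1}^{K}b_j\zeta_j^p\,\delta_{\zeta_j}$. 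Because $d\mu$ is supported on the $K$ \emph{distinct}, ordered points $0<\zeta_1<\dots<\zeta_K$ (simplicity of the spectrum, Corollary~\ref{cor:spectrum}), the integral over the ordered simplex collapses to a sum over $l$-element subsets $I=\{i_1<\dots<i_l\}\subseteq[1,K]$, the $I$-th term being $\Delta_I(\mathbf{\zeta})^2\,\mathbf{b}_I\,\mathbf{\zeta}^p_I\,\Gamma_{[1,k],I}(\mathbf{e};\mathbf{\zeta})^{-1}$. Taking absolute values and observing that every $e_j,\zeta_j,b_j$ is positive — hence every summand is nonnegative and every $\Gamma_{[1,k],I}(\mathbf{e};\mathbf{\zeta})>0$ — yields \eqref{eq:Dklp}, with the sign $(-1)^{lp+l(l-1)/2}$ absorbed into $\abs{\,\cdot\,}$.

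Next I would insert the time dependence. By Theorem~\ref{thm:peakon_even} one has $b_j(t)=b_j(0)e^{2t/\zeta_j}$, so $\mathbf{b}_I(t)=\big(\prod_{j\in I}b_j(0)\big)\exp\!\big(2t\sum_{j\in I}\zeta_j^{-1}\big)$, while the factors $\Delta_I(\mathbf{\zeta})^2$, $\mathbf{\zeta}^p_I$ and $\Gamma_{[1,k],I}(\mathbf{e};\mathbf{\zeta})^{-1}$ are $t$-independent. Since the eigenvalues are strictly ordered, the weights satisfy $\zeta_1^{-1}>\zeta_2^{-1}>\dots>\zeta_K^{-1}>0$; consequently, among all $l$-subsets the exponent $\sum_{j\in I}\zeta_j^{-1}$ is \emph{uniquely} maximized by $I=[1,l]$ and \emph{uniquely} minimized by $I=[1,l]^*=\{K-l+1,\dots,K\}$. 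Hence, as $t\to+\infty$, the single term $I=[1,l]$ dominates the sum in \eqref{eq:Dklp}; factoring it out gives \eqref{eq:Dklp+}, and comparison with the next-largest exponent shows one may take $\alpha=2(\zeta_l^{-1}-\zeta_{l+1}^{-1})>0$ when $l<K$ (with \eqref{eq:Dklp+} exact when $l=K$). By the mirror-image argument, as $t\to-\infty$ the term $I=[1,l]^*$ dominates, giving \eqref{eq:Dklp-} with $\beta=2(\zeta_{K-l}^{-1}-\zeta_{K-l+1}^{-1})>0$ (and again exact when $l=K$).

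The only genuinely delicate points are bookkeeping: verifying that case~(1) of Theorem~\ref{thm:detCSV} rather than case~(2) applies — it does, since $c=0$ — correctly extracting the \emph{squared} Vandermonde $\Delta_I(\mathbf{\zeta})^2$ from the integrand, and confirming that in each asymptotic regime the dominant $l$-subset is \emph{strictly} dominant so that the remainder is genuinely of exponential order; this last fact is precisely the simplicity of the spectrum. With these in place, parts~(2) and~(3) reduce to a routine leading-order expansion of a positive finite sum of exponentials.
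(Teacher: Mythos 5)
Your proposal is correct and follows exactly the route the paper intends: part (1) is Theorem \ref{thm:detCSV} with $c=0$ specialized to the discrete spectral measure (the simplex integral collapsing to a sum over ordered $l$-subsets of eigenvalues), and parts (2)--(3) follow by identifying the unique dominant subset $[1,l]$ (resp. $[1,l]^*$) of the exponents $2t\sum_{j\in I}\zeta_j^{-1}$, with the explicit rates $\alpha$, $\beta$ you give being the correct spectral gaps. No gaps.
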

Now we are ready to present examples of expressions
for positions $x_1, \cdots, x_{2K}$ of multipeakons based on formulas \eqref{eq:detinversexodd},
\eqref{eq:detinversexeven}, using \eqref{eq:Dklp} and 
$e_j=\frac{1}{m_{j'}n_{j'}}, \, j'=2K-j+1$.
\begin{example}[2-peakon solution; K=1]\label{ex:2peakon}
  \begin{align*}
     &x_1=\ln\left(\frac{b_1}{\zeta_1m_1(1+\zeta_1m_2n_2)}\right),\ \ \
     &x_2=\ln\left(\frac{b_1n_2}{1+\zeta_1m_2n_2}\right).
       \end{align*}
\end{example}

\begin{example}[4-peakon solution; K=2]\label{ex:4peakon}
\tiny{
  \begin{align*}
     &x_1=&\\
     &\ln\left(\frac{1}{m_1}\cdot\frac{b_1b_2(\zeta_2-\zeta_1)^2}{\zeta_1\zeta_2\left(b_1\zeta_1(1+\zeta_2m_2n_2)(1+\zeta_2m_3n_3)(1+\zeta_2m_4n_4)+b_2\zeta_2(1+\zeta_1m_2n_2)(1+\zeta_1m_3n_3)(1+\zeta_1m_4n_4)\right)}\right),&\\
     &x_2=\ln\left(n_2\cdot\frac{b_1b_2(\zeta_2-\zeta_1)^2\left(b_1(1+\zeta_2m_3n_3)(1+\zeta_2m_4n_4)+b_2(1+\zeta_1m_3n_3)(1+\zeta_1m_4n_4)\right)}{\left(b_1\zeta_1(1+\zeta_2m_3n_3)(1+\zeta_2m_4n_4)+b_2\zeta_2(1+\zeta_1m_3n_3)(1+\zeta_1m_4n_4)\right)}\right.&\\
     &\qquad\qquad \cdot\left.\frac{1}{\left(b_1\zeta_1(1+\zeta_2m_2n_2)(1+\zeta_2m_3n_3)(1+\zeta_2m_4n_4)+b_2\zeta_2(1+\zeta_1m_2n_2)(1+\zeta_1m_3n_3)(1+\zeta_1m_4n_4)\right)}\right),&\\
     &x_3=\\
     &\ln\left(\frac{1}{m_3}\cdot\frac{\left(b_1(1+\zeta_2m_4n_4)+b_2(1+\zeta_1m_4n_4)\right)\left(b_1(1+\zeta_2m_3n_3)(1+\zeta_2m_4n_4)+b_2(1+\zeta_1m_3n_3)(1+\zeta_1m_4n_4)\right)}{(1+\zeta_1m_4n_4)(1+\zeta_2m_4n_4)\left(b_1\zeta_1(1+\zeta_2m_3n_3)(1+\zeta_2m_4n_4)+b_2\zeta_2(1+\zeta_1m_3n_3)(1+\zeta_1m_4n_4)\right)}\right),&\\
     &x_4=\ln\left(n_4\cdot\frac{b_1(1+\zeta_2m_4n_4)+b_2(1+\zeta_1m_4n_4)}{(1+\zeta_1m_4n_4)(1+\zeta_2m_4n_4)}\right).&
       \end{align*}
       }
       \end{example}

\subsection{Global existence for $N=2K$}
Recall that our solution in Theorem \ref{thm:peakon_even} was obtained under the assumption that $x_1<x_2<\cdots<x_{2K}$. However, even if we start with the initial positions satisfying $x_1(0)<x_2(0)<\cdots<x_{2K}(0)$, the order might  cease to hold for sufficiently large times, in other words
some of the peakons might collide.  It is an interesting
question to understand the nature of collisions  but we leave this topic for future work.  In this subsection we give sufficient conditions in terms of the
spectrum and constant masses $m_j, n_j$ which ensure that
no collisions occur and thus the peakon solutions are global in $t$. The readers might want to consult Appendix \ref{app:globaleven} for a detailed proof.  Granted global existence, one can talk sensibly about the
large time behaviour of peakons.
\begin{theorem}\label{thm:global}
  Given arbitrary spectral data $$\{b_j>0, \, 0<\zeta_1<\zeta_2<\cdots< \zeta_K: 1\leq j \leq K \}, $$ suppose the masses $m_k,n_k$ satisfy
\begin{subequations}
\begin{align}
&\frac{\zeta_K^{\frac{k-1}{2}}}{\zeta_1^{\frac{k+1}{2}}}< m_{(k+1)' }n_{k'}, \qquad\qquad\qquad  \text { for all odd }k, \qquad   1\leq k\leq 2K-1,\\
&\frac{m_{(k+1)'}n_{(k+2)'}}{(1+m_{(k+1)'}n_{(k+1)'}\zeta_1)(1+m_{(k+2)'}n_{(k+2)'}\zeta_1)}<
\frac{\zeta_1^{\frac{k+1}{2}}}{\zeta_K^{\frac{k-1}{2}}}
\frac{2\, \textnormal{min}_j(\zeta_{j+1}-\zeta_j)^{k-1}}{(k+1)(\zeta_K-\zeta_1)^{k+1}}, \nonumber\\
&\qquad\qquad\qquad\qquad\qquad\qquad\qquad\text{ for all odd } k, \qquad 1\leq k\leq 2K-3. \label{eq:seccond}
\end{align}
\end{subequations}
Then the positions obtained from inverse formulas \eqref{eq:detinversexodd},
\eqref{eq:detinversexeven} are ordered $x_1<x_2<\cdots<x_{2K}$ and the multipeakon solutions \eqref{eq:umultipeakoneven} exist for arbitrary $t\in R$.
\end{theorem}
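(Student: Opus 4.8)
The plan is to show that under the stated inequalities the positions $x_{k'}(t)$ produced by the inverse formulas \eqref{eq:detinversexodd}--\eqref{eq:detinversexeven} remain strictly ordered for all $t\in\R$, which by Theorem \ref{thm:peakon_even} immediately yields the claimed global multipeakon solution. Since $x_{k'}$ is a logarithm of a ratio of CSV determinants, the inequality $x_{k'}<x_{(k-1)'}$ (equivalently $x_j<x_{j+1}$ after reindexing) becomes a determinant inequality. First I would use Theorem \ref{thm:Dklm-peakon}(1), which expands $\D_k^{(l,p)}$ as a \emph{positive} combination (all terms involve $\Delta_I^2(\boldsymbol\zeta)\mathbf{b}_I\boldsymbol\zeta_I^p/\Gamma_{[1,k],I}(\mathbf e;\boldsymbol\zeta)>0$) of monomials in the $\zeta_j$, $b_j$. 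The key structural point is that each determinant is trapped between its value when only the smallest eigenvalue index set $[1,l]$ contributes and its value when only the largest index set $[1,l]^*$ contributes, so one gets two-sided bounds of the form
\begin{equation*}
\text{(extremal term with }\zeta_1\text{)}\ \leq\ \D_k^{(l,p)}/\abs{\Delta_{[1,k]}(\mathbf e)}\ \leq\ \binom{K}{l}\,\text{(extremal term with }\zeta_K\text{)},
\end{equation*}
or sharper versions exploiting $\min_j(\zeta_{j+1}-\zeta_j)$ in the numerator Vandermonde factors and $(\zeta_K-\zeta_1)$ in the denominator.

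The second step is to form, for each consecutive pair, the quantity $e^{x_j}/e^{x_{j+1}}$ as an explicit ratio of four CSV determinants (two at level $k$, two at level $k-1$) times elementary mass factors $m_{j}$, $n_{j}$ and powers of $e_j=1/(m_{j'}n_{j'})$, reading the exponents directly from \eqref{eq:detinversexodd}--\eqref{eq:detinversexeven}; here the parity of $k$ forces two slightly different algebraic shapes, so I would split into odd and even $k$ as the theorem statement itself does. Substituting the two-sided determinant bounds from the first step turns each ``$x_j<x_{j+1}$'' into a bound of the form ``something like $m_{(k+1)'}n_{(k+2)'}$ divided by $(1+m_{(k+1)'}n_{(k+1)'}\zeta_1)(1+m_{(k+2)'}n_{(k+2)'}\zeta_1)$ is less than an explicit constant built from $\zeta_1,\zeta_K,\min_j(\zeta_{j+1}-\zeta_j),k$'' --- which is exactly \eqref{eq:seccond} --- together with a lower bound ``$\zeta_K^{(k-1)/2}/\zeta_1^{(k+1)/2}<m_{(k+1)'}n_{k'}$'' from the first family of inequalities. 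The combinatorial factors $\binom{K}{l}$ and the $(k+1)$, $(\zeta_K-\zeta_1)^{k+1}$, $\min_j(\zeta_{j+1}-\zeta_j)^{k-1}$ appearing in \eqref{eq:seccond} should drop out precisely from counting the number of multi-indices $I\in\binom{[1,K]}{l}$ and from the crudest estimates $\Delta_I^2\geq \min_j(\zeta_{j+1}-\zeta_j)^{l(l-1)}$, $\Gamma_{[1,k],I}\leq(\zeta_K+\zeta_K)^{\cdots}$ etc.; one has to be careful to use the \emph{sharp} extremal term (the one with $[1,l]$, i.e.\ involving $\zeta_1$) on whichever side of the ratio it genuinely dominates, so that the bound is not lossy enough to be vacuous.

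The third step is a continuity/open-closed argument in $t$: the set of times for which $x_1(t)<\cdots<x_{2K}(t)$ is open (the $x_{k'}(t)$ are real-analytic in $t$, since $b_j(t)=b_j(0)e^{2t/\zeta_j}>0$ and the determinants are real-analytic and, by step one, strictly positive, so no logarithm ever blows up) and contains $t=0$; steps one and two show the strict inequalities persist as \emph{uniform} inequalities independent of $t$ --- crucially, the bounds use only $\zeta_1,\zeta_K$ and the masses, not $b_j(t)$, because each $b_j(t)$ enters every determinant $\D_k^{(l,p)}$ homogeneously of the same total degree and thus cancels in the relevant ratios, or is absorbed into the two-sided envelope --- hence the ordered set is also closed, so it is all of $\R$. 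Once the ordering is permanent, Theorem \ref{thm:peakon_even} applies verbatim for every $t$, and \eqref{eq:umultipeakoneven} is a genuine global solution. The main obstacle I anticipate is step two: producing determinant estimates that are simultaneously (i) clean enough to collapse to the stated closed-form inequalities and (ii) tight enough that the constants $\frac{2\min_j(\zeta_{j+1}-\zeta_j)^{k-1}}{(k+1)(\zeta_K-\zeta_1)^{k+1}}$ are actually attainable rather than over-pessimistic --- in particular matching the exact combinatorial/Vandermonde bookkeeping so that the ``$t\to\pm\infty$'' extremal behaviour in Theorem \ref{thm:Dklm-peakon}(2)--(3) is reproduced as the worst case and no slack is wasted.
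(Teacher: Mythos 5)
Your proposal takes essentially the same route as the paper's proof in Appendix \ref{app:globaleven}: translate the ordering conditions through \eqref{eq:detinversexodd}--\eqref{eq:detinversexeven} into the determinant-ratio inequalities \eqref{eq:order1bis}, \eqref{eq:order2bis}, and bound those ratios uniformly in $t$ via the positive multi-index expansion \eqref{eq:Dklp} --- a term-by-term comparison yielding the $\zeta_1^{(k+1)/2}/\zeta_K^{(k-1)/2}$ bound \eqref{eq:order1step3} for the first family and the sharper estimate \eqref{eq:order2step4} for the second, exactly matching the two hypotheses. One minor imprecision: the $b_j(t)$ do not cancel by equal homogeneity in these ratios (only a simultaneous rescaling of all the $b_j$ would cancel), and the correct mechanism, which your ``two-sided envelope'' fallback does capture, is that the extremal term-by-term bounds hold for arbitrary positive weights $\mathbf{b}_I$ and are therefore uniform in $t$, making your open--closed argument unnecessary.
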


\subsection{Large time peakon asymptotics for $n=2K$}
Once the global existence of solutions is guaranteed, for example
by imposing sufficient conditions of Theorem \ref{thm:global}, one can study the asymptotic behaviour of multipeakon
solutions for large (positive and negative) time by employing Theorem \ref{thm:peakon_even} and  \ref{thm:Dklm-peakon}. More precisely, by using the formulae
for positions \eqref{eq:detinversexodd}, \eqref{eq:detinversexeven}, as
well as asymptotic evaluations of determinants  \eqref{eq:Dklp+} and
\eqref{eq:Dklp-}, one arrives at

\begin{theorem}\label{thm:evenass} Suppose the masses $m_j,n_j$ satisfy the conditions of Theorem \ref{thm:global}.  Then the asymptotic position of a $k$-th (counting from the right) peakon as $t\rightarrow+\infty$ is given by
\begin{subequations}
\begin{align*}
&x_{k'}=\frac{2t}{\zeta_{\frac{k+1}{2}}}+
\ln\frac{b_{\frac{k+1}{2}}(0)\mathbf{e}_{[1,k-1]} \Delta^2_{[1,\frac{k-1}{2}],\{\frac{k+1}{2}\}}(\mathbf{\zeta})}{m_{k'} \Gamma_{[1,k], \{\frac{k+1}{2}\}}(\mathbf{e}; \mathbf{\zeta}) \mathbf{\zeta}^2_{[1,\frac{k-1}{2}]}}+\mathcal{O}(
e^{-\alpha_k t}), \nonumber\\
&\qquad\qquad\qquad\qquad\qquad\qquad\quad\qquad\quad\qquad\textrm{ for some positive } \alpha_k\, \textrm{ and odd } k, \\
&x_{k'}=\frac{2t}{\zeta_{\frac{k}{2}}}+
\ln\frac{b_{\frac{k}{2}}(0)\mathbf{e}_{[1,k-1]} \Delta^2_{[1,\frac{k}{2}-1],\{\frac{k}{2}\}}(\mathbf{\zeta})}{m_{k'} \Gamma_{[1,k-1], \{\frac{k}{2}\}}(\mathbf{e}; \mathbf{\zeta}) \mathbf{\zeta}^2_{[1,\frac{k}{2}-1]}\zeta_{\frac k2}}+\mathcal{O}(
e^{-\alpha_k t}), \nonumber\\
&\qquad\qquad\qquad\qquad\qquad\qquad\quad\qquad\qquad\quad\textrm{ for some positive } \alpha_k\, \textrm{ and even } k,\\
&x_{k'}-x_{(k+1)'}=\ln m_{(k+1)'}n_{k'} \zeta_{\frac{k+1}{2}}
+\mathcal{O}(e^{-\alpha_k t}), \\
&\qquad\qquad\qquad\qquad\qquad\qquad\quad\qquad\qquad\quad\textrm{ for some positive } \alpha_k\, \textrm{ and odd } k.
\end{align*}
\end{subequations}

  Likewise, as $t\rightarrow-\infty$, using the notation of Theorem \ref{thm:Dklm-peakon}, the asymptotic position of the $k$-th peakon is given by
  \begin{subequations}
  \begin{align*}
&x_{k'}=\frac{2t}{\zeta_{(\frac{k+1}{2})^*}}+
\ln\frac{b_{(\frac{k+1}{2})^*}(0)\mathbf{e}_{[1,k-1]} \Delta^2_{([1,\frac{k-1}{2}])^*,\{(\frac{k+1}{2})^*\}}(\mathbf{\zeta})}{m_{k'} \Gamma_{[1,k], \{(\frac{k+1}{2})^*\}}(\mathbf{e}; \mathbf{\zeta}) \mathbf{\zeta}^2_{[1,\frac{k-1}{2}]^*}}+\mathcal{O}(
e^{\beta_k t}), \\
&\qquad\qquad\qquad\qquad\qquad\qquad\quad\qquad\quad\qquad \textrm{ for some positive } \beta_k\, \textrm{ and odd } k, \\
&x_{k'}=\frac{2t}{\zeta_{(\frac{k}{2})^*}}+
\ln\frac{b_{(\frac{k}{2})^*}(0)\mathbf{e}_{[1,k-1]} \Delta^2_{[1,\frac{k}{2}-1]^*,\{(\frac{k}{2})^*\}}(\mathbf{\zeta})}{m_{k'} \Gamma_{[1,k-1], \{(\frac{k}{2})^*\}}(\mathbf{e}; \mathbf{\zeta}) \mathbf{\zeta}^2_{[1,\frac{k}{2}-1]^*}\zeta_{(\frac k2)^*}}+\mathcal{O}(
e^{\beta_k t}), \\
&\qquad\qquad\qquad\qquad\qquad\qquad\quad\qquad\quad\qquad \textrm{ for some positive } \beta_k\, \textrm{ and even } k,\\
&x_{k'}-x_{(k+1)'}=\ln m_{(k+1)'}n_{k'} \zeta_{(\frac{k+1}{2})^*}
+\mathcal{O}(e^{\beta_k t}), \\
&\qquad\qquad\qquad\qquad\qquad\qquad\quad\qquad\quad\qquad \textrm{ for some positive } \beta_k\, \textrm{ and odd } k.
\end{align*}
\end{subequations}
\end{theorem}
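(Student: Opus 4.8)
The plan is to extract the large--time behaviour of the positions directly from the exact formulae in Theorem~\ref{thm:inversex}, equations~\eqref{eq:detinversexodd} and \eqref{eq:detinversexeven}, by substituting the asymptotic evaluation of each CSV determinant supplied by Theorem~\ref{thm:Dklm-peakon}. Recall that the only time dependence in the spectral data is through $b_j(t)=b_j(0)e^{2t/\zeta_j}$, with the eigenvalues $\zeta_j$ frozen; thus every determinant $\D_k^{(l,p)}$ becomes, via \eqref{eq:Dklp}, a sum of exponentials $\mathbf{b}_I(0)e^{2t\sum_{j\in I}1/\zeta_j}$ weighted by the strictly positive geometric factors $\Delta_I^2(\bm\zeta)\bm\zeta_I^p/\Gamma_{[1,k],I}(\mathbf{e};\bm\zeta)$. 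As $t\to+\infty$ the dominant index set is the one maximising $\sum_{j\in I}1/\zeta_j$ over $I\in\binom{[1,K]}{l}$, which, since $0<\zeta_1<\dots<\zeta_K$, is $I=[1,l]$; this is exactly the content of \eqref{eq:Dklp+}, and the error is a genuine $\mathcal{O}(e^{-\alpha t})$ because the next--largest exponential is strictly smaller. Symmetrically, as $t\to-\infty$ one minimises $\sum_{j\in I}1/\zeta_j$, giving $I=[1,l]^*=\{K-l+1,\dots,K\}$ and \eqref{eq:Dklp-}.

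First I would treat $t\to+\infty$ and split by parity of $k$. For odd $k$ the position formula \eqref{eq:detinversexodd} involves $\D_k^{(\frac{k-1}{2},1)}$, $\D_{k-1}^{(\frac{k-1}{2},1)}$ in the numerator and $\D_k^{(\frac{k+1}{2},0)}$, $\D_{k-1}^{(\frac{k-1}{2},0)}$ in the denominator; inserting the leading term from \eqref{eq:Dklp+} for each, the $|\Delta_{[1,k]}(\mathbf{e})|$ and $|\Delta_{[1,k-1]}(\mathbf{e})|$ prefactors cancel between numerator and denominator, as already noted in the paragraph following Theorem~\ref{thm:inversex}. What survives is a ratio of the geometric $\Delta^2/\Gamma$ factors together with the surviving powers of $b(0)$ and $\zeta$. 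The combinations $\frac{1}{\zeta_1}+\dots+\frac{1}{\zeta_{(k-1)/2}}$ appearing in $\D_k^{(\frac{k-1}{2},1)}\D_{k-1}^{(\frac{k-1}{2},1)}$ versus $\frac{1}{\zeta_1}+\dots+\frac{1}{\zeta_{(k+1)/2}}$ in $\D_k^{(\frac{k+1}{2},0)}$ leave a single net exponential $e^{2t/\zeta_{(k+1)/2}}$ inside the logarithm, producing the linear term $\frac{2t}{\zeta_{(k+1)/2}}$; the remaining constant is precisely the $\ln$ of the quotient of geometric factors, which reorganises into $\Delta^2_{[1,\frac{k-1}{2}],\{\frac{k+1}{2}\}}(\bm\zeta)/\bigl(\Gamma_{[1,k],\{\frac{k+1}{2}\}}(\mathbf{e};\bm\zeta)\bm\zeta^2_{[1,\frac{k-1}{2}]}\bigr)$ after the index--$\{\frac{k+1}{2}\}$ part is separated from the index--$[1,\frac{k-1}{2}]$ part using multiplicativity of $\Gamma$ and the splitting $\Delta^2_{[1,(k+1)/2]}=\Delta^2_{[1,(k-1)/2]}\,\Delta^2_{[1,(k-1)/2],\{(k+1)/2\}}$. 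The even--$k$ case is identical in spirit, starting from \eqref{eq:detinversexeven}, with the bookkeeping shifted by one; the extra lone factor $\zeta_{k/2}$ in the denominator of the constant term comes from the $p=1$ versus $p=0$ discrepancy between $\D_k^{(\frac k2,1)}$ and $\D_k^{(\frac k2,0)}$. The gap formula $x_{k'}-x_{(k+1)'}=\ln(m_{(k+1)'}n_{k'}\zeta_{(k+1)/2})$ for odd $k$ then follows by subtracting the odd--$k$ and even--$(k+1)$ expansions: the $2t/\zeta$ terms cancel because they share the index $\frac{k+1}{2}$, and the constants collapse using $e_j=1/(m_{j'}n_{j'})$ and the telescoping of the $\mathbf{e}_{[1,\cdot]}$, $\Delta$, $\Gamma$ factors; this is where the pairing-up of adjacent peakons becomes visible.

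The $t\to-\infty$ half is obtained by the same computation with $[1,l]$ replaced by its reflection $[1,l]^*$ throughout, using \eqref{eq:Dklp-} in place of \eqref{eq:Dklp+}; every eigenvalue index $j$ is replaced by $j^*=K-j+1$, every $\mathcal{O}(e^{-\alpha t})$ by $\mathcal{O}(e^{\beta t})$, and the statement reads off directly. I would note that the uniformity of the error terms over $t$ in a neighbourhood of $+\infty$ (resp. $-\infty$) rests on the hypotheses of Theorem~\ref{thm:global}, which guarantee $x_1<\dots<x_{2K}$ for all $t\in\R$ so that the formulae \eqref{eq:detinversexodd}--\eqref{eq:detinversexeven} remain valid globally and the denominators never vanish.

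I expect the main obstacle to be purely organisational rather than conceptual: correctly tracking which powers of $\zeta_j$ and which $b_j(0)$ survive after the four determinants are divided, and verifying that the leftover product of $\Delta^2$ and $\Gamma$ factors reassembles into exactly the compact form stated, with the right overall sign (the $(-1)^{lp+l(l-1)/2}$ factors in Theorem~\ref{thm:detCSV} must cancel in the ratio, leaving a positive argument for the logarithm). A secondary subtlety is to confirm that the subleading exponential in each $\D_k^{(l,p)}$ is strictly separated from the leading one for \emph{all} the determinants that enter simultaneously, so that a single $\alpha_k>0$ (resp. $\beta_k>0$) controls the combined error after taking logarithms and the $\ln(1+\mathcal{O}(e^{-\alpha t}))=\mathcal{O}(e^{-\alpha t})$ expansion; this is immediate from the strict ordering $0<\zeta_1<\dots<\zeta_K$ but should be stated. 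Apart from these, everything is a direct substitution of Theorems~\ref{thm:inversex} and~\ref{thm:Dklm-peakon}, and indeed the proof can be compressed to "combine \eqref{eq:detinversexodd}--\eqref{eq:detinversexeven} with \eqref{eq:Dklp+}--\eqref{eq:Dklp-}, cancel the Vandermonde prefactors, and simplify," as the excerpt itself suggests.
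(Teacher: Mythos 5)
Your proposal is correct and is essentially the paper's own argument: the theorem is presented there as a direct consequence of substituting the determinant asymptotics \eqref{eq:Dklp+}--\eqref{eq:Dklp-} into the inverse formulas \eqref{eq:detinversexodd}--\eqref{eq:detinversexeven}, cancelling the Vandermonde prefactors, reading off the single surviving exponential $e^{2t/\zeta}$, and obtaining the gap formula by subtracting the odd-$k$ and even-$(k+1)$ expansions. One cosmetic slip: in \eqref{eq:detinversexodd} the determinants $\D_k^{(\frac{k+1}{2},0)},\D_{k-1}^{(\frac{k-1}{2},0)}$ sit in the numerator and $\D_k^{(\frac{k-1}{2},1)},\D_{k-1}^{(\frac{k-1}{2},1)}$ in the denominator --- you describe them the other way around --- but your net exponent and all final formulas are nonetheless the correct ones.
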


\begin{remark}
It follows from the above theorem that multipeakons
of the 2-mCH equation exhibit \textit{Toda-like sorting properties} of asymptotic speeds and \textit{ an asymptotic pairing}.  The latter can be partially explained by  the fact that there are  $K$ available eigenvalues to match $ 2K$ asymptotic speeds. Similar features were also observed in the mCH equation \cite{chang-szmigielski-m1CHlong}, as well as the interlacing cases of the 2-mCH equation \cite{chang2016multipeakons}. It is clear now that these two features extend to the non-interlacing cases as well.  \end{remark}

We end this section by providing graphs of a concrete 4-peakon solution.
 Let $K=2$, and $ b_1(0)=10,\ b_2(0)=1,\ \zeta_1=0.3,\ \zeta_2=3,\ m_1=8,\ m_2=24,\ m_3=5,\ m_4=10,\ n_1=12,\ n_2=10,\ n_3=24,\ n_4=16$. It is easy to check that the condition in Theorem \ref{thm:global} is satisfied. Hence,  the order of $\{x_k, k = 1, 2, 3, 4\}$ will be preserved and one can use the explicit formulae for the 4-peakon solution, resulting in the following sequence of graphs (Figure \ref{fig_4peakon}), illustrating the asymptotic pairing of peakons.
\begin{figure}[h!]
  \centering
  \resizebox{1.1\textwidth}{!}{
  \includegraphics{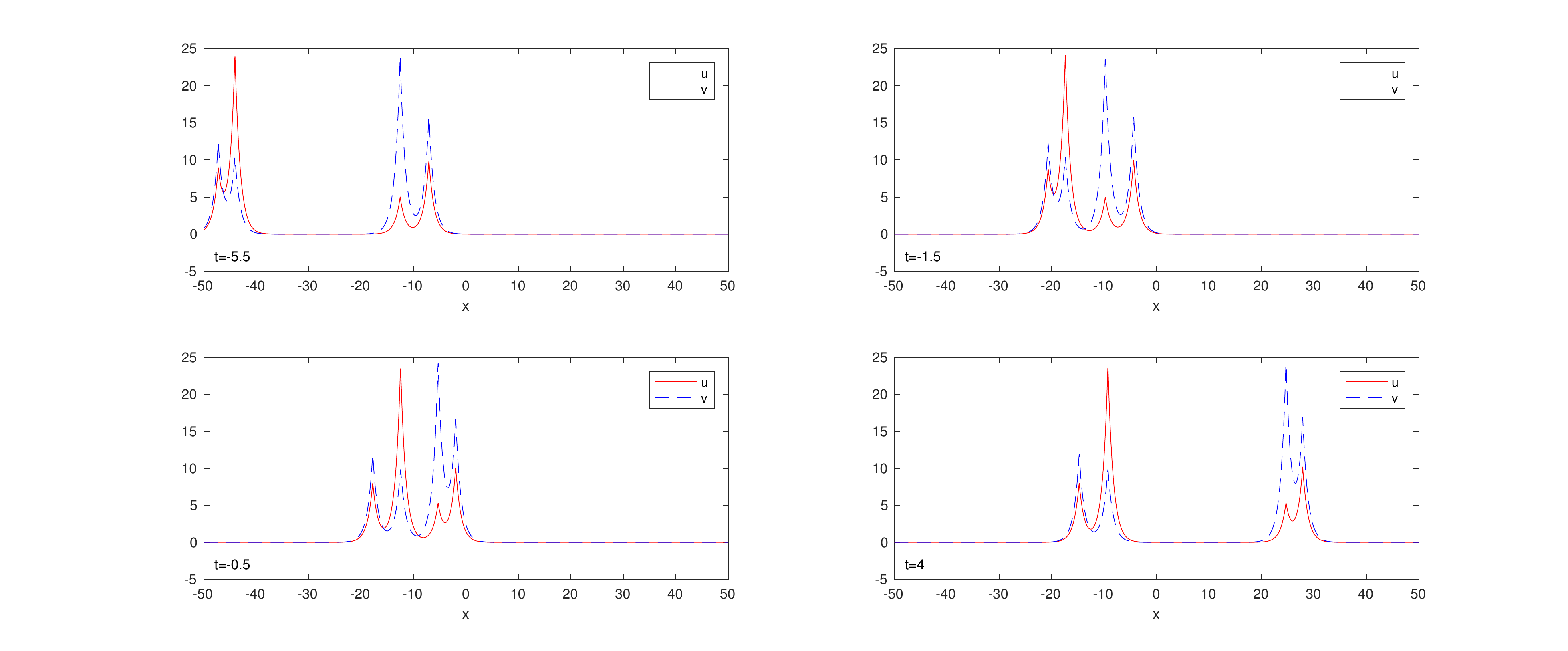}
  }
   \caption{Graphs of 4-peakon at times $t=-5.5,\ -1.5,\ -0.5, \ 4$ in the case of $ b_1(0)=10,\ b_2(0)=1,\ \zeta_1=0.3,\ \zeta_2=3,\ m_1=8,\ m_2=24,\ m_3=5,\ m_4=10,\ n_1=12,\ n_2=10,\ n_3=24,\ n_4=16.$}
   \label{fig_4peakon}
\end{figure}

\section{Multipeakons for $N=2K+1$}\label{sec:oddpeakons}
This section is devoted to the corresponding result for $N=2K+1$, which is presented in a way parallel to the previous section on the even case.
The main source of difference between the two cases  is of course the presence of
the positive shift $c$ which impacts the evaluations of
the CSV determinants as illustrated by Theorem \ref{thm:detCSV}, in particular
formula \eqref{eq:detCSV2}.   Nevertheless, we will present a comparison in the form of a correspondence between the odd case and the even case  in Section \ref{sec:redc_odd_even}.
\subsection{Closed formulae for $N=2K+1$}

As before we assume that $x_1(0)<x_2(0)<\cdots<x_{2K+1}(0)$.  Then this condition will hold at least in a small interval containing $t=0$.  The following \textit{local existence} result follows from Theorem \ref{thm:inversex}.

\vspace{1cm}

\begin{theorem}\label{thm:peakon_odd}
The 2-mCH equation \eqref{eq:m2CH} with the regularization
of the singular term $Q m$ given by $\avg{Q}m$ admits the multipeakon solution
\begin{equation}\label{eq:umultipeakoneven}
u(x,t)=\sum_{k=1}^{2K+1}m_{k'} \exp(-|x-x_{k'}(t)|),\qquad v(x,t)=\sum_{k=1}^{2K+1}n_{k'}\exp(-|x-x_{k'}(t)|),
\end{equation}
where $m_{k'}, n_{k'}$ are arbitrary positive constants, while $x_{k'}(t)$ are given by equations \eqref{eq:detinversexodd} and \eqref{eq:detinversexeven} corresponding to the peakon spectral measure
\begin{equation}\label{eq:peakon sm(odd)}
d\mu=\sum_{j=1}^{K} b_j(t) \delta_{\zeta_j},
\end{equation}
$b_j(t)=b_j(0)e^{\frac{2t}{\zeta_j}}, \, 0<b_j(0)$, with ordered eigenvalues $0<\zeta_1<\cdots<\zeta_K$ and $c(t)=c(0)>0$ in \eqref{eq:calD}.
\end{theorem}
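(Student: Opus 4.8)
The plan is to mirror the proof of Theorem \ref{thm:peakon_even} almost verbatim, the only substantive change being that now $c \neq 0$. First I would recall that, by assumption, the initial positions satisfy $x_1(0) < x_2(0) < \cdots < x_{2K+1}(0)$, so by continuity this ordering persists on some open time interval around $t=0$; this is what makes the result a \emph{local} existence statement. On that interval we may apply the forward map of Section \ref{sec:FSM}: Corollary \ref{cor:spectrum} and Theorem \ref{thm:W} tell us that the Weyl function $W(z) = q_N(z)/p_N(z)$ for $N = 2K+1$ is a genuinely shifted Stieltjes transform, $W(z) = c + \int \frac{d\mu(x)}{x-z}$ with $d\mu = \sum_{j=1}^{K} b_j \delta_{\zeta_j}$, $0 < \zeta_1 < \cdots < \zeta_K$, all $b_j > 0$, and crucially $c > 0$ because $N$ is odd. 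This fixes the spectral data $\{d\mu(0), c(0)\}$ attached to the initial configuration.

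Next I would insert the time evolution. Exactly as in Theorem \ref{thm:peakon_even}, Lemma \ref{lem:t-evolution of qp} gives $\dot q_N = \frac{2}{z} q_N - \frac{2u_+}{z} p_N$ and $\dot p_N = 0$, hence
\begin{equation*}
\dot W = \frac{2}{z} W - \frac{2u_+}{z}.
\end{equation*}
Comparing residues at each pole $z = \zeta_j$ yields $\dot b_j = \frac{2}{\zeta_j} b_j$, so $b_j(t) = b_j(0) e^{2t/\zeta_j}$ with $b_j(0) > 0$; the eigenvalues $\zeta_j$ are time-independent by isospectrality. The one new ingredient compared with the even case is the behaviour of the shift $c$: evaluating the relation $\dot W = \frac{2}{z} W - \frac{2u_+}{z}$ at $z = \infty$ (where $W(\infty) = c$) forces $\dot c \cdot (\text{something}) + \tfrac{2}{\infty}(\cdots) = 0$, i.e. $\dot c = 0$; alternatively, $c = M_c/M_K$ (see \eqref{eq:c}) is a ratio of constants of motion, so $c(t) = c(0) > 0$ is immediate. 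Thus the full spectral evolution is $\{d\mu(t), c(t)\} = \{\sum_j b_j(0) e^{2t/\zeta_j} \delta_{\zeta_j}, \, c(0)\}$.

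Finally I would close the loop using the inverse map. Theorem \ref{thm:inversex} expresses each position $x_{k'}$ explicitly through the CSV determinants $\D_k^{(l,p)}$ built from $\mathbf{e}$, $d\mu$, and $c$; feeding in the time-dependent spectral data above produces $x_{k'}(t)$ via formulas \eqref{eq:detinversexodd}, \eqref{eq:detinversexeven} with $c = c(0) > 0$. That these $x_{k'}(t)$, together with the constant masses $m_{k'}, n_{k'}$, indeed solve the peakon ODEs \eqref{eq:m2CH_ode} — and hence give a genuine multipeakon solution \eqref{eq:umultipeakoneven} of \eqref{eq:m2CH} with the $\avg{Q}$ regularization — follows because the map (initial positions) $\mapsto$ (spectral data) $\mapsto$ (positions) is the identity (this is the four-step argument around \eqref{eq:forwarditeration}), and the spectral side evolves by the linear flow just derived, which is precisely the flow induced by \eqref{eq:m2CH_ode}. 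The only genuine subtlety — and the reason the theorem is stated as local — is that Theorem \ref{thm:inversex} requires the ordering $x_1 < \cdots < x_{2K+1}$ to hold; guaranteeing this globally needs an analogue of the sufficient conditions in Theorem \ref{thm:global}, which is why I would not claim global existence here. I expect the routine but mildly delicate point to be checking that the determinant formulas with $c > 0$ (governed by \eqref{eq:detCSV2} rather than \eqref{eq:detCSV1}) remain well defined and nonvanishing on the local time interval, but this is inherited directly from Theorem \ref{thm:detsolISP} and Theorem \ref{thm:detCSV}.
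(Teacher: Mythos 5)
Your proposal is correct and follows essentially the same route as the paper: derive $\dot W=\frac{2}{z}W-\frac{2u_+}{z}$ from Lemma \ref{lem:t-evolution of qp}, read off $\dot b_j=\frac{2}{\zeta_j}b_j$ and $\dot c=0$ (the paper also invokes Corollary \ref{cor:spectrum} for this), and then feed the evolved spectral data into the inverse formulas of Theorem \ref{thm:inversex} exactly as in the even case. Your extra observation that $\dot c=0$ also follows from $c=M_c/M_K$ being a ratio of constants of motion is a nice, consistent cross-check but not a genuinely different argument.
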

\begin{proof}
Similar to the even case,   it is clear that the Weyl function $W(z)$ is defined in \eqref{eq:defWeyl},
undergoes the time evolution obtained earlier in the proof of Theorem \ref{thm:peakon_even}, namely,
\begin{equation*}
\dot W=\frac 2z W-\frac{2u_+}{z},
\end{equation*}
which, in turn, implies
$\dot b_j=\frac{2}{\zeta_j} b_j, 1\leq j\leq K$ as well as $\dot c=0$ by virtue of
Corollary \ref{cor:spectrum}.
 The rest of the argument is the same as for the even case.
\end{proof}

By using the above theorem, it is not hard to work out two simplest examples of solutions.  Before we do that, however,
we will examine the evaluation of
CSV determinants presented in Theorem \ref{thm:detCSV} (see equation \eqref{eq:calD} for notation), with due care to two facts: $N=2K+1$ and $c>0$.
The proof follows from the same steps as in Theorem \ref{thm:Dklm-peakon} and
we omit it.

\begin{theorem} \label{thm:Dklp-peakon(odd)}  Let $N=2K+1, \, 1\leq k\leq 2K+1, \, 0\leq l\leq K+1, \, 0\leq p, \,  p+l-1\leq k-l, $ and let the peakon spectral measure be given by \eqref{eq:peakon sm(odd)} and a shift $c>0$.
Then
\begin{enumerate}
\item
\begin{subequations}
\begin{align}
&\D_k^{(l,p)}=\abs{\Delta_{[1,k]}(\mathbf{e})}\sum_{I\in\binom{[1,K]}{l} }
\frac{\Delta^2_I(\mathbf{\zeta})\mathbf{b}_{I} \mathbf{\zeta}^p_I}{\Gamma_{[1,k],I}(\mathbf{e};\mathbf{\zeta})}, \nonumber \\
&\qquad\qquad\qquad\qquad\qquad\quad \text{ if} \quad p+l-1<k-l, \quad k\leq 2K+1; \label{eq:Dklpodd1}\\
&\D_k^{(l,p)}=\abs{\Delta_{[1,k]}(\mathbf{e})}\Big(\sum_{I\in\binom{[1,K]}{l} }
\frac{\Delta^2_I(\mathbf{\zeta})\mathbf{b}_{I} \mathbf{\zeta}^p_I}{\Gamma_{[1,k],I}(\mathbf{e};\mathbf{\zeta})} +c\sum_{I\in\binom{[1,K]}{l-1} }
\frac{\Delta^2_I(\mathbf{\zeta})\mathbf{b}_{I} \mathbf{\zeta}^p_I}{\Gamma_{[1,k],I}(\mathbf{e};\mathbf{\zeta})}\Big)
\nonumber\\
&\qquad\qquad\qquad\qquad\qquad\quad  \text{ if} \quad p+l-1=k-l, \quad k\leq 2K+1; \label{eq:Dklpodd2}
\end{align}
\end{subequations}
with the proviso that the first term inside the bracket is set to zero if $l=K+1$, which only happens when
$k=2K+1, p=0$.
\item In the asymptotic region $t\rightarrow + \infty$
\begin{subequations}
\begin{align}
 &\D_k^{(l,p)}=\abs{\Delta_{[1,k]}(\mathbf{e})}
\frac{\Delta^2_{[1,l]} (\mathbf{\zeta})\mathbf{b}_{[1,l]} \mathbf{\zeta}^p_{[1,l]}}{\Gamma_{[1,k],[1,l]}(\mathbf{e};\mathbf{\zeta})}\Big[1+\mathcal{O}(e^{-\alpha t})\Big], \quad 0< \alpha, \nonumber\\
&\qquad\qquad\qquad\qquad\qquad\qquad \text{ if } \quad 0\leq l\leq K; \label{eq:Dklp+odd}\\
 &\D_{2K+1}^{(K+1,0)}=c\abs{\Delta_{[1,2K+1]}(\mathbf{e})}
\frac{\Delta^2_{[1,K]} (\mathbf{\zeta})\mathbf{b}_{[1,K]} }{\Gamma_{[1,2K+1],[1,K]}(\mathbf{e};\mathbf{\zeta})}, \nonumber\\
&\qquad\qquad\qquad\qquad\qquad\qquad \text{ if } \quad k=2K+1, l=K+1, p=0.  \label{eq:Dklp+odd-c}
\end{align}
\end{subequations}
\item In the asymptotic region $t \rightarrow -\infty$
\begin{subequations}
\begin{align}
&\D_k^{(l,p)}=\abs{\Delta_{[1,k]}(\mathbf{e})}
\frac{\Delta^2_{[1,l]^*} (\mathbf{\zeta})\mathbf{b}_{[1,l]^*} \mathbf{\zeta}^p_{[1,l]^*}}{\Gamma_{[1,k],[1,l]^*}(\mathbf{e};\mathbf{\zeta})}\Big[1+\mathcal{O}(e^{\beta t})\Big]
, \quad 0<\beta,\notag\\
&\qquad\qquad\qquad\qquad\qquad\quad \text{ if} \quad p+l-1<k-l, \quad k\leq 2K+1; \label{eq:Dklp-odd1}\\
&\D_k^{(l,p)}=c\abs{\Delta_{[1,k]}(\mathbf{e})}
\frac{\Delta^2_{[1,l-1]^*} (\mathbf{\zeta})\mathbf{b}_{[1,l-1]^*} \mathbf{\zeta}^p_{[1,l-1]^*}}{\Gamma_{[1,k],[1,l-1]^*}(\mathbf{e};\mathbf{\zeta})}\Big[1+\mathcal{O}(e^{\beta t})\Big]
, \quad 0<\beta,  \notag\\
&\qquad\qquad\qquad\qquad\qquad\quad \text{ if}  \quad p+l-1=k-l,\quad  k<2K+1;\label{eq:Dklp-odd2}\\
&\D_{2K+1}^{(K+1,0)}=c\abs{\Delta_{[1,2K+1]}(\mathbf{e})}
\frac{\Delta^2_{[1,K]} (\mathbf{\zeta})\mathbf{b}_{[1,K]} }{\Gamma_{[1,2K+1],[1,K]}(\mathbf{e};\mathbf{\zeta})}, \notag\\
&\qquad\qquad\qquad\qquad\qquad\quad \text{ if } \quad k=2K+1, l=K+1, p=0, \label{eq:Dklp-odd3}
\end{align}
\end{subequations}
where, as before, $[1,l]^*=[l^*=K-l+1,1^*=K]$.
\end{enumerate}
\end{theorem}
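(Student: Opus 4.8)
The plan is to specialize the general determinant evaluation of Theorem~\ref{thm:detCSV} to the discrete peakon spectral measure $d\mu=\sum_{j=1}^K b_j\delta_{\zeta_j}$ and then to read off the large-time behaviour from the explicit exponential dependence $b_j(t)=b_j(0)e^{2t/\zeta_j}$, following verbatim the argument used for the even case in Theorem~\ref{thm:Dklm-peakon}. The only structural novelty is that here $c>0$, so the boundary case $p+l-1=k-l$ of Theorem~\ref{thm:detCSV} contributes the extra $c$-term, and this term will turn out to dominate the $t\to-\infty$ asymptotics.

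For part (1) I would substitute $\nu=\mu$ into \eqref{eq:detCSV1} and \eqref{eq:detCSV2}, using $d\mu^p(y)=y^p\,d\mu(y)=\sum_j\zeta_j^p b_j\delta_{\zeta_j}$. Since $\mu$ is supported on the $K$ distinct points $\zeta_1<\dots<\zeta_K$, the iterated integral over the simplex $0<x_1<\dots<x_l$ collapses to a sum over strictly increasing index tuples, that is over $I\in\binom{[1,K]}{l}$, under which $\Delta_{[1,l]}(\mathbf{x})^2\mapsto\Delta_I^2(\mathbf{\zeta})$, $\Gamma_{[1,k],[1,l]}(\mathbf{e};\mathbf{x})\mapsto\Gamma_{[1,k],I}(\mathbf{e};\mathbf{\zeta})$, and $d\mu^p(x_1)\cdots d\mu^p(x_l)\mapsto\mathbf{b}_I\mathbf{\zeta}_I^p$. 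Taking absolute values removes the sign $(-1)^{lp+l(l-1)/2}$ and the sign of $\Delta_{[1,k]}(\mathbf{e})$, which yields \eqref{eq:Dklpodd1} when $p+l-1<k-l$ (so that the first case of Theorem~\ref{thm:detCSV} applies even though $c>0$), and \eqref{eq:Dklpodd2} when $p+l-1=k-l$, the second case, whose extra $c$-term collapses by the same device to the sum over $\binom{[1,K]}{l-1}$. The proviso is automatic: $l=K+1$ forces the first sum to range over $\binom{[1,K]}{K+1}$, which is empty (hence $0$ by convention), and the constraint $p+l-1\le k-l$ together with $k\le 2K+1$ shows that this configuration can only occur for $k=2K+1$, $p=0$.

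For parts (2) and (3) I would analyse the $t$-dependence of the sums produced in part (1). Each monomial is $\mathbf{b}_I(t)=\mathbf{b}_I(0)\,e^{2t\sum_{j\in I}1/\zeta_j}$, and since $0<\zeta_1<\dots<\zeta_K$ the reciprocals $1/\zeta_j$ are strictly decreasing and positive; hence $I\mapsto\sum_{j\in I}1/\zeta_j$ on $l$-subsets of $[1,K]$ is uniquely maximised at $I=[1,l]$ and uniquely minimised at the reflection $I=[1,l]^*$, each with a strictly positive gap to the next value. Therefore as $t\to+\infty$ the term $I=[1,l]$ dominates each sum over $\binom{[1,K]}{l}$, and dominates each term over $\binom{[1,K]}{l-1}$ as well (its maximal exponent $\sum_{j=1}^l 1/\zeta_j$ strictly exceeds $\sum_{j=1}^{l-1}1/\zeta_j$), so the $c$-correction in \eqref{eq:Dklpodd2} is exponentially subdominant and \eqref{eq:Dklp+odd} follows with relative error $\mathcal{O}(e^{-\alpha t})$ for $0\le l\le K$; when $l=K+1$ the first sum is empty and the lone surviving $c$-term gives the exact identity \eqref{eq:Dklp+odd-c}. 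Dually, as $t\to-\infty$ the slowest-decaying exponential survives: in case \eqref{eq:Dklpodd1} this is the $I=[1,l]^*$ term, giving \eqref{eq:Dklp-odd1}; in case \eqref{eq:Dklpodd2} with $k<2K+1$ the $c$-sum decays like $e^{2t\sum_{j\in[1,l-1]^*}1/\zeta_j}$, more slowly than the non-$c$ sum, which carries the extra positive rate $1/\zeta_{K-l+1}$, so the $c$-term dominates and yields \eqref{eq:Dklp-odd2}; and for $k=2K+1$, $l=K+1$, $p=0$ only the single $c$-term is present, giving the exact \eqref{eq:Dklp-odd3}.

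The main obstacle is not any single computation but the bookkeeping of these exponential orderings together with uniform control of the remainders: one must check, for every admissible triple $(k,l,p)$, that the ratio of the full sum to its leading term is $1+\mathcal{O}(e^{-\alpha t})$ as $t\to+\infty$ (resp.\ $1+\mathcal{O}(e^{\beta t})$ as $t\to-\infty$) with a strictly positive rate. This reduces to the strict monotonicity of $\sum_{j\in I}1/\zeta_j$ noted above and, in the $c$-term cases, to the strict inequality between the extreme exponents over $l$-subsets and over $(l-1)$-subsets; both are elementary. Once these order comparisons are in place, the estimates follow exactly as in the proof of Theorem~\ref{thm:Dklm-peakon}, which is why the detailed argument may be omitted.
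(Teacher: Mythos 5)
Your proposal is correct and follows essentially the same route the paper takes: it specializes Theorem~\ref{thm:detCSV} to the discrete peakon measure (now with the $c$-term active in the boundary case $p+l-1=k-l$) and then extracts the $t\to\pm\infty$ asymptotics from the strict monotonicity of $I\mapsto\sum_{j\in I}1/\zeta_j$, exactly as in Theorem~\ref{thm:Dklm-peakon}, whose steps the paper explicitly says it is reusing. The key observations you add --- that the $c$-sum over $(l-1)$-subsets is exponentially subdominant as $t\to+\infty$ but dominant as $t\to-\infty$ by the margin $1/\zeta_{K-l+1}$, and that $l=K+1$ forces $k=2K+1$, $p=0$ with only the $c$-term surviving --- are precisely the points that distinguish the odd case, and you handle them correctly.
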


There exists a relation between formulae with $c>0$ and $c=0$. Indeed, by comparing formulas \eqref{eq:Dklpodd1}
 and \eqref{eq:Dklpodd2} with \eqref{eq:Dklp},
we arrive at the detailed dependence on $c$.
 \begin{corollary} \label{cor:Dklp-c-dependence}
 Let $N=2K+1,\,  1\leq k\leq 2K+1, \,\, 0\leq l\leq K+1, \, 0\leq p, \,  p+l-1\leq k-l$,  and let the peakon spectral measure be given by \eqref{eq:peakon sm(odd)} and a shift $c>0$.
Then
\begin{subequations}
\begin{align}
&\D_k^{(l,p)}(c)=\D_{k}^{(l,p)}(0), \  \text{ if} \  p+l-1<k-l, \  k\leq 2K+1; \label{eq:Dklpodd1-c-dep}\\
&\D_k^{(l,p)}(c)=\D_{k}^{(l,p)}(0)+c\D_k^{(l-1,p)}(0),
\ \text{ if} \ p+l-1=k-l, \  k\leq 2K+1; \label{eq:Dklpodd2-c-dep}
\end{align}
\end{subequations}
with the convention that the first term in \eqref{eq:Dklpodd2-c-dep} is set to zero if $l=K+1, k=2K+1,p=0$.
\end{corollary}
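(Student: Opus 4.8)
The strategy is to read off both sides of \eqref{eq:Dklpodd1-c-dep}--\eqref{eq:Dklpodd2-c-dep} directly from Theorem \ref{thm:detCSV}, specialized to the discrete spectral measure $d\mu=\sum_{j=1}^{K}b_j\delta_{\zeta_j}$, and then compare terms. Recall from \eqref{eq:calD} that $\D_k^{(l,p)}(c)=\abs{\det CSV_k^{(l,p)}(\mathbf{e},\mu,c)}$. Since $\mathbf{e}$, $\mathbf{b}$, the eigenvalues $\zeta_j$ and the shift $c$ are all positive, every $\Gamma$-factor in \eqref{eq:detCSV1}--\eqref{eq:detCSV2} is positive and the integrands there are nonnegative; hence the absolute value in \eqref{eq:calD} merely strips the sign $(-1)^{lp+l(l-1)/2}$ and passes through the nonnegative sums obtained after replacing each integral $\int\cdots d\nu^p(x_1)\cdots d\nu^p(x_r)$ by the corresponding sum over $I\in\binom{[1,K]}{r}$. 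This is precisely how \eqref{eq:Dklpodd1}--\eqref{eq:Dklpodd2} were produced from Theorem \ref{thm:detCSV} in Theorem \ref{thm:Dklp-peakon(odd)}.

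In the first case, $p+l-1<k-l$, the hypothesis of part (1) of Theorem \ref{thm:detCSV} (``either $c=0$ or $p+l-1<k-l$'') is met \emph{irrespective} of the value of $c$. Thus \eqref{eq:detCSV1}, and hence \eqref{eq:Dklpodd1}, holds verbatim with $c$ nowhere appearing in the formula; specializing to $c=0$ gives the same expression, so $\D_k^{(l,p)}(c)=\D_k^{(l,p)}(0)$, which is \eqref{eq:Dklpodd1-c-dep}.

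In the second case, $p+l-1=k-l$ with $c>0$, part (2) of Theorem \ref{thm:detCSV} applies and, after passing to the discrete measure, yields \eqref{eq:Dklpodd2}: a $c$-independent term equal to $\abs{\Delta_{[1,k]}(\mathbf{e})}\sum_{I\in\binom{[1,K]}{l}}\Delta^2_I(\mathbf{\zeta})\mathbf{b}_I\mathbf{\zeta}^p_I/\Gamma_{[1,k],I}(\mathbf{e};\mathbf{\zeta})$, plus $c$ times $\abs{\Delta_{[1,k]}(\mathbf{e})}\sum_{I\in\binom{[1,K]}{l-1}}\Delta^2_I(\mathbf{\zeta})\mathbf{b}_I\mathbf{\zeta}^p_I/\Gamma_{[1,k],I}(\mathbf{e};\mathbf{\zeta})$. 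The $c$-independent term is exactly $\D_k^{(l,p)}(0)$: at index $(l,p)$ the relation $p+l-1=k-l$ forces one to use part (1) of Theorem \ref{thm:detCSV} with $c=0$, whose output \eqref{eq:detCSV1} is that very sum. For the coefficient of $c$, invoke part (1) once more with the pair $(l,p)$ replaced by $(l-1,p)$: since $p+l-1=k-l$ gives $p+(l-1)-1=k-l-1<k-(l-1)$, the index $(l-1,p)$ sits strictly inside the range of part (1), so its evaluation is $\D_k^{(l-1,p)}(0)=\abs{\Delta_{[1,k]}(\mathbf{e})}\sum_{I\in\binom{[1,K]}{l-1}}\Delta^2_I(\mathbf{\zeta})\mathbf{b}_I\mathbf{\zeta}^p_I/\Gamma_{[1,k],I}(\mathbf{e};\mathbf{\zeta})$. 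Combining, and distributing the absolute value by positivity, gives $\D_k^{(l,p)}(c)=\D_k^{(l,p)}(0)+c\,\D_k^{(l-1,p)}(0)$, which is \eqref{eq:Dklpodd2-c-dep}.

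Finally, in the degenerate case $l=K+1$ (which forces $k=2K+1$, $p=0$) one has $\binom{[1,K]}{K+1}=\emptyset$, so the $c$-independent sum is empty and $\D_{2K+1}^{(K+1,0)}(0)=0$, matching the stated proviso that the first term of \eqref{eq:Dklpodd2-c-dep} be dropped. The only place needing a little attention is the index bookkeeping in the previous paragraph---verifying that $(l-1,p)$ obeys the admissibility constraints $0\le l-1$ and $p+(l-1)-1\le k-(l-1)$ underlying Theorem \ref{thm:Dklp-peakon(odd)}---and this is immediate from $p+l-1=k-l$ together with $1\le l\le K+1$. No genuine obstacle appears; the corollary is a term-by-term comparison of determinant evaluations already in hand.
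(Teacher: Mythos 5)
Your proposal is correct and follows essentially the same route as the paper, which obtains the corollary precisely by comparing the discrete-measure evaluations \eqref{eq:Dklpodd1}--\eqref{eq:Dklpodd2} with \eqref{eq:Dklp}, i.e.\ by matching the two terms of \eqref{eq:detCSV2} against the $c=0$ evaluations at indices $(l,p)$ and $(l-1,p)$. Your explicit check that $(l-1,p)$ falls into the strict-inequality regime of Theorem \ref{thm:detCSV}(1), and the handling of the degenerate case $l=K+1$ via the empty index set, are exactly the bookkeeping the paper leaves implicit.
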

Below the reader will find two examples of
explicit peakon solutions for odd $N=2K+1$.

\begin{example}[1-peakon solution; $K=0$ ] \label{ex:1peakon}
  \[
  x_1=\ln\left(\frac{c}{m_1}\right).
  \]
\end{example}

\begin{example}[3-peakon solution; $K=1$]\label{ex:3peakon}
  \begin{align*}
     &x_1=\ln\left(\frac{b_1c}{\zeta_1m_1\left(b_1\zeta_1m_2n_2m_3n_3+c(1+\zeta_1m_2n_2)(1+\zeta_1m_3n_3)\right)}\right),&\\
     &x_2=\ln\left(\frac{b_1n_2}{b_1\zeta_1m_2n_2m_3n_3+c(1+\zeta_1m_2n_2)(1+\zeta_1m_3n_3)}\left(\frac{b_1m_3n_3}{1+\zeta_1m_3n_3}+c\right)\right),&\\
     &x_3=\ln\left(\frac{1}{m_3}\left(\frac{b_1m_3n_3}{1+\zeta_1m_3n_3}+c\right)\right).
  \end{align*}

\end{example}

\subsection{Global existence for $N=2K+1$}
Similar to the even case, we can also provide a sufficient condition to ensure the global existence
 of peakon solutions when $N=2K+1$.  The main result is stated below while its proof is relegated to  Appendix \ref{app:globalodd}.
\begin{theorem}\label{thm:global(odd)}
  Given arbitrary spectral data $$\{b_j>0, \, 0<\zeta_1<\zeta_2<\cdots< \zeta_K,\,
  c>0: 1\leq j \leq K \}, $$ suppose the masses $m_k,n_k$ satisfy
\begin{subequations}
\begin{align*}
&\frac{1}{m_{(k+1)' }n_{k'}}<\frac{\zeta_1^{\frac{k+1}{2}}}{\zeta_K^{\frac{k-1}{2}}}\textrm{min}\{1, \hat \beta\},  &\text { for all odd }k, \qquad   1\leq k\leq 2K-1, \\
&\frac{1}{m_{(k+2)'}n_{(k+1)'}}<\frac{\zeta_1^{\frac{k+1}{2}}}{\zeta_K^{\frac{k-1}{2}}}\textrm{min}\{1, \hat \beta_1\}, &\text{ for all odd } k, \qquad 1\leq k\leq 2K-1,
\end{align*}
\end{subequations}
where
\begin{align*}
\hat \beta&=\begin{cases} \frac{2\zeta_K\, \textrm{min}_j(\zeta_{j+1}-\zeta_j)^{k-3}}{\zeta_1(k-1)(\zeta_K-\zeta_1)^{k-1}}\frac{(1+m_{k'}n_{k'}\zeta_1)(1+m_{(k+1)'}n_{(k+1)'}\zeta_1)}{m_{k'}n_{k'} m_{(k+1)'}n_{(k+1)'}},   &\substack{\text{ for all odd } k,\\ 3\leq k\leq 2K-1,}  \\
+\infty,  &\text{ for } k=1, \end{cases}\\\\
\hat \beta_1&=\frac{2\, \textrm{min}_j(\zeta_{j+1}-\zeta_j)^{k-1}}{(k+1)(\zeta_K-\zeta_1)^{k+1}}\frac{(1+m_{(k+1)'}n_{(k+1)'}\zeta_1)(1+m_{(k+2)'}n_{(k+2)'}\zeta_1)}{m_{(k+1)'}n_{(k+1)'} m_{(k+2)'}n_{(k+2)'}}.
\end{align*}
Then the positions obtained from inverse formulas \eqref{eq:detinversexodd},
\eqref{eq:detinversexeven} are ordered $x_1<x_2<\cdots<x_{2K+1}$ and the multipeakon solutions \eqref{eq:umultipeakoneven} exist for arbitrary $t\in \R$.
\end{theorem}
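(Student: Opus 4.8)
The plan is to deduce global existence from the fact that the positions $x_{k'}(t)$ produced by the inverse formulas \eqref{eq:detinversexodd}--\eqref{eq:detinversexeven} of Theorem~\ref{thm:inversex} never collide. First I would observe that, by Theorem~\ref{thm:Dklp-peakon(odd)}, once the time dependence $b_j(t)=b_j(0)e^{2t/\zeta_j}>0$, $c(t)=c(0)>0$ is substituted, every determinant $\D_k^{(l,p)}(t)$ appearing in those formulas is a sum of strictly positive terms: the factors $\abs{\Delta_{[1,k]}(\mathbf e)}$, $\Delta_I^2(\mathbf\zeta)$, $\mathbf b_I$, $\mathbf\zeta_I^p$, $\Gamma_{[1,k],I}(\mathbf e;\mathbf\zeta)$, together with the extra $c$-term when present, are nonnegative and the leading contribution is positive. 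Hence the right-hand sides of \eqref{eq:detinversexodd}--\eqref{eq:detinversexeven} are well-defined real numbers for every $t\in\R$, and by the forward/inverse correspondence of Section~\ref{sec:IP} together with the spectral evolution established in Theorem~\ref{thm:peakon_odd}, the functions $x_{k'}(t)$ solve the peakon ODEs \eqref{eq:m2CH_ode} on any interval on which they happen to be ordered (a standard bootstrap: the formulas are derived under the ordering assumption, so one propagates the order from $t=0$). Since $m_j,n_j$ are constant, the only way to lose the solution is a collision, so everything reduces to the $2K$ pointwise inequalities $x_{k'}(t)-x_{(k+1)'}(t)>0$ for $1\le k\le 2K$ and all $t\in\R$.

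Next I would split according to the parity of $k$ and substitute the closed formulas: for $k$ odd use \eqref{eq:detinversexodd} for $x_{k'}$ and \eqref{eq:detinversexeven} (with $k$ replaced by $k+1$) for $x_{(k+1)'}$, and symmetrically for $k$ even. In both cases the products $\mathbf e_{[1,\cdot]}$ telescope and several determinants with matching indices cancel, so that $x_{k'}-x_{(k+1)'}=\ln R_k(t)$, where $R_k(t)$ is an explicit ratio of two positive sums of monomials in $b_1(t),\dots,b_K(t)$ with coefficients built from $m_j,n_j,\zeta_j$ and $c$. Parts (2)--(3) of Theorem~\ref{thm:Dklp-peakon(odd)} (as in the even Theorem~\ref{thm:evenass}) then show that for odd $k$ the ratio $R_k(t)$ tends to a finite positive limit $m_{(k+1)'}n_{k'}\zeta_{(k+1)/2}$ (resp.\ $\zeta_{((k+1)/2)^*}$) as $t\to\pm\infty$ — this is the asymptotic pairing — while for even $k$ the gap tends to $+\infty$; thus the only danger is a dip at finite time, and what must be quantified is a uniform lower bound on $R_k(t)$ over $t\in\R$.

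The core of the argument is a collection of elementary but careful estimates reducing each inequality $R_k(t)>1$ (or $R_k(t)>$ the relevant pairing threshold) to the stated mass conditions. The two mechanisms are: (i) for two positive sums over the \emph{same} index family, $\bigl(\sum_I A_I\mathbf b_I\bigr)\big/\bigl(\sum_I B_I\mathbf b_I\bigr)\ge\min_I A_I/B_I$ uniformly in $t$; and (ii) to compare sums over index families of different cardinality, one controls the unmatched $b$-factors together with their accompanying $\zeta$-powers by using $0<\zeta_1\le\zeta_j\le\zeta_K$, Vandermonde-type lower and upper bounds on the $\Delta_I^2(\mathbf\zeta)$ in terms of $\min_j(\zeta_{j+1}-\zeta_j)$ and $\zeta_K-\zeta_1$, and a crude count of index sets — this is where the combinatorial factor $\tfrac{2}{k+1}$ and the ratios $\zeta_1^{(k+1)/2}/\zeta_K^{(k-1)/2}$ arise. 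Feeding (i)--(ii) into the two parities produces exactly the two displayed families of inequalities. The role of $c>0$ is that it contributes the extra term in $\D_{2K+1}^{(K+1,0)}$ and in the ``diagonal'' determinants with $p+l-1=k-l$ (Corollary~\ref{cor:Dklp-c-dependence}), which is why the bounds pick up the denominators $(1+m_{\cdot}n_{\cdot}\zeta_1)$ and split into $\hat\beta$ and $\hat\beta_1$, with the degenerate $k=1$ case giving $\hat\beta=+\infty$ because that pairing gap carries no constraint. This parallels, with the $c$-modifications, the proof of the even case in Appendix~\ref{app:globaleven}, and one may also transplant the analogous $m$1CH computations from \cite{chang-szmigielski-m1CHlong}.

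I expect the main obstacle to be step (ii): securing a genuinely uniform-in-$t$ lower bound for $R_k(t)$ when its numerator and denominator have different total degree in the $b_j$'s — naively such a ratio is unbounded as $t\to\pm\infty$, so one must exploit that each unmatched $b$-factor is always paired with a $\zeta$-power of definite sign and that the extremizing configurations are precisely the asymptotic ones, and then check that the blunt bounds (i)--(ii) remain sharp enough to close every inequality. A secondary nuisance is the bookkeeping at the ends of the index range ($k=1$, $k=2K-1$, $k=2K$) and the appearance of the $l=K+1$, $p=0$ case in Theorem~\ref{thm:Dklp-peakon(odd)}, where some $\D$'s degenerate ($\D_0^{(l,p)}=1$) or lose a term, together with verifying — what makes the statement economical — that it suffices to impose the conditions for odd $k$, the even-$k$ separation gaps being automatically positive once the neighbouring odd-$k$ estimates hold.
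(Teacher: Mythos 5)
Your plan is essentially the paper's own argument (Appendix~\ref{app:globalodd}): reduce global existence to the $2K$ ordering inequalities holding uniformly in $t$, rewrite each as a lower bound on a ratio of CSV determinants via \eqref{eq:detinversexodd}--\eqref{eq:detinversexeven}, use Corollary~\ref{cor:Dklp-c-dependence} to split each ratio as $(\mathcal{A}_1+\mathcal{B}_1)/(\mathcal{A}_2+\mathcal{B}_2)$ with $\mathcal{B}_i$ the $c$-proportional parts, bound this below by $\min\{\mathcal{A}_1/\mathcal{A}_2,\,\mathcal{B}_1/\mathcal{B}_2\}$ (your mechanism (i)), and control the two pieces by the uniform-in-$t$ even-case estimates \eqref{eq:order1step3} and \eqref{eq:order2step4} --- which is precisely the origin of the factors $\min\{1,\hat\beta\}$, $\min\{1,\hat\beta_1\}$ and of $\hat\beta=+\infty$ at $k=1$ (where $\mathcal{B}_2=0$). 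Two inessential slips: for $N=2K+1$ the asymptotic pairing switches parity between $t\to+\infty$ and $t\to-\infty$ (Theorem~\ref{thm:oddass}), though this is irrelevant since the proof never invokes the asymptotics, only uniform bounds; and the even-indexed gaps are not ``automatic'' --- they are exactly what the second displayed family of conditions (the one on $m_{(k+2)'}n_{(k+1)'}$) enforces, both families being indexed by odd $k$ simply because each odd $k$ labels one consecutive gap of each parity.
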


\subsection{Large time peakon asymptotics for $N=2K+1$}
Again, based on the global existence of multipeakons guaranteed by Theorem \ref{thm:global(odd)}, one can investigate the long time asymptotics
of  global multipeakon solutions.   After a straightforward, but tedious, computation using the formulas
for positions \eqref{eq:detinversexodd}, \eqref{eq:detinversexeven} (for $N=2K+1$), as
well as asymptotic evaluations of determinants  presented in Theorem \ref{thm:Dklp-peakon(odd)}, we obtain
\begin{theorem}\label{thm:oddass} Suppose the masses $m_j, n_j$ satisfy the conditions of Theorem \ref{thm:global(odd)}.  Then the asymptotic position of a $k$-th (counting from the right) peakon as $t\rightarrow+\infty$ is given by
\begin{subequations}
\begin{align*}
&x_{k'}=\frac{2t}{\zeta_{\frac{k+1}{2}}}+
\ln\frac{b_{\frac{k+1}{2}}(0)\mathbf{e}_{[1,k-1]} \Delta^2_{[1,\frac{k-1}{2}],\{\frac{k+1}{2}\}}(\mathbf{\zeta})}{m_{k'} \Gamma_{[1,k], \{\frac{k+1}{2}\}}(\mathbf{e}; \mathbf{\zeta}) \mathbf{\zeta}^2_{[1,\frac{k-1}{2}]}}+\mathcal{O}(
e^{-\alpha_k t}), \qquad  \alpha_k>0\\
& \hspace{9cm}\textrm{ and odd } k\leq 2K-1; \\
&x_{(2K+1)'}=\ln \frac{c \mathbf{e}_{[1,2K]}}{m_{(2K+1)'}\mathbf{\zeta}_{[1,K]}^2}
+\mathcal{O}(
e^{-\alpha t}), \qquad \alpha>0 ; \\
&x_{k'}=\frac{2t}{\zeta_{\frac{k}{2}}}+
\ln\frac{b_{\frac{k}{2}}(0)\mathbf{e}_{[1,k-1]} \Delta^2_{[1,\frac{k}{2}-1],\{\frac{k}{2}\}}(\mathbf{\zeta})}{m_{k'} \Gamma_{[1,k-1], \{\frac{k}{2}\}}(\mathbf{e}; \mathbf{\zeta}) \mathbf{\zeta}^2_{[1,\frac{k}{2}-1]}\zeta_{\frac k2}}+\mathcal{O}(
e^{-\alpha_k t}), \qquad \alpha_k>0 \\
& \hspace{9cm} \textrm{ and even } k\leq 2K;\\
&x_{k'}-x_{(k+1)'}=\ln m_{(k+1)'}n_{k'} \zeta_{\frac{k+1}{2}}
+\mathcal{O}(e^{-\alpha_k t}),  \qquad \alpha_k>0\\
& \hspace{9cm} \textrm{ and odd } k \leq 2K-1.
\end{align*}
\end{subequations}

  Likewise, as $t\rightarrow-\infty$, using the notation of Theorem \ref{thm:Dklm-peakon}, the asymptotic position of the $k$-th peakon is given by
  \begin{subequations}
  \begin{align*}
  &x_{k'}=\frac{2t}{\zeta_{(\frac{k-1}{2})^*}}+
\ln\frac{b_{(\frac{k-1}{2})^*}(0)\mathbf{e}_{[1,k-1]} \Delta^2_{[1,\frac{k-1}{2}-1]^*,\{(\frac{k-1}{2})^*\}}(\mathbf{\zeta})}{m_{k'} \Gamma_{[1,k-1], \{(\frac{k-1}{2})^*\}}(\mathbf{e}; \mathbf{\zeta}) \mathbf{\zeta}^2_{[1,\frac{k-1}{2}-1]^*}\zeta_{(\frac {k-1}{2})^*}}+\mathcal{O}(
e^{\beta_k t}), \\ &\hspace{7cm} \beta_k>0 \textrm{ and odd  }1<k\leq 2K+1;\\
&x_{1'}=\ln\frac{c}{m_{1'}}+\mathcal{O}(
e^{\beta_k t}), \qquad  \beta_k>0;\\
&x_{k'}=\frac{2t}{\zeta_{(\frac{k}{2})^*}}+
\ln\frac{b_{(\frac{k}{2})^*}(0)\mathbf{e}_{[1,k-1]} \Delta^2_{([1,\frac{k}{2}-1])^*,\{(\frac{k}{2})^*\}}(\mathbf{\zeta})}{m_{k'} \Gamma_{[1,k], \{(\frac{k}{2})^*\}}(\mathbf{e}; \mathbf{\zeta}) \mathbf{\zeta}^2_{[1,\frac{k}{2}-1]^*}}+\mathcal{O}(
e^{\beta_k t}), \\\
&
\hspace{8cm} \beta_k>0 \textrm{ and even } k\leq 2K; \\
&x_{k'}-x_{(k+1)'}=\ln m_{(k+1)'}n_{k'} \zeta_{(\frac{k}{2})^*}
+\mathcal{O}(e^{\beta_k t}), \\
& \hspace{8cm}\beta_k>0
 \textrm{ and even } k\leq 2K.
\end{align*}
\end{subequations}
\end{theorem}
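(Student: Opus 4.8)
The plan is to carry over, essentially verbatim, the argument used for the even case (Theorem~\ref{thm:evenass}) and in \cite{chang-szmigielski-m1CHlong}, adapting it to the presence of the positive shift $c$. The three ingredients are: the closed inverse formulas \eqref{eq:detinversexodd}--\eqref{eq:detinversexeven} expressing each $x_{k'}$ as the logarithm of a ratio of CSV determinants $\D_k^{(l,p)}$ and $\D_{k-1}^{(l,p)}$ together with $m_{k'}$ and $\mathbf{e}_{[1,k-1]}$; the asymptotic evaluations of those determinants as $t\to\pm\infty$ collected in Theorem~\ref{thm:Dklp-peakon(odd)}; and the time evolution $b_j(t)=b_j(0)e^{2t/\zeta_j}$, $c(t)=c(0)>0$ from Theorem~\ref{thm:peakon_odd}. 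Since Theorem~\ref{thm:global(odd)} guarantees that the ordering $x_1<\dots<x_{2K+1}$ is preserved for all $t$ under the stated mass hypotheses, the closed formulas are valid on all of $\R$ and the asymptotic analysis is legitimate.

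First I would substitute $b_j(t)=b_j(0)e^{2t/\zeta_j}$ into the finite sums \eqref{eq:Dklpodd1}--\eqref{eq:Dklpodd2}. The $I$-th term of $\D_k^{(l,p)}$ carries the time factor $\exp\!\big(2t\sum_{i\in I}\zeta_i^{-1}\big)$; since $\zeta_1^{-1}>\dots>\zeta_K^{-1}>0$, the exponent $\sum_{i\in I}\zeta_i^{-1}$, maximized over $I\in\binom{[1,K]}{l}$, is attained at $I=[1,l]$, and minimized at $I=[1,l]^*=\{K-l+1,\dots,K\}$. This immediately gives \eqref{eq:Dklp+odd} as $t\to+\infty$ and \eqref{eq:Dklp-odd1} as $t\to-\infty$ whenever $p+l-1<k-l$, the $[1+\mathcal O(e^{\mp\alpha t})]$ correction coming from the sub-leading index sets.

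The genuinely new feature, compared with the even case, is the bookkeeping of $c$ in the equality regime $p+l-1=k-l$, where Corollary~\ref{cor:Dklp-c-dependence} gives $\D_k^{(l,p)}(c)=\D_k^{(l,p)}(0)+c\,\D_k^{(l-1,p)}(0)$. As $t\to+\infty$ the $c$-free part grows like $\exp\!\big(2t\sum_{i=1}^l\zeta_i^{-1}\big)$ and the $c$-part like $c\exp\!\big(2t\sum_{i=1}^{l-1}\zeta_i^{-1}\big)$, so the $c$-free part wins, except when $l=K+1$ (forced only by $k=2K+1,\ p=0$), in which case the $c$-free part is empty and \eqref{eq:Dklp+odd-c} takes over. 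As $t\to-\infty$ the relevant sets are $[1,l]^*$ and $[1,l-1]^*$, and since $\sum_{i=K-l+1}^K\zeta_i^{-1}>\sum_{i=K-l+2}^K\zeta_i^{-1}$ the $c$-part $c\exp\!\big(2t\sum_{i=K-l+2}^K\zeta_i^{-1}\big)$ now dominates, producing \eqref{eq:Dklp-odd2}. This $t\to+\infty$ versus $t\to-\infty$ asymmetry is exactly what distinguishes the odd asymptotics from the even ones: it shifts by one the eigenvalue index controlling the asymptotic speed of an odd-labeled peakon as $t\to-\infty$, it moves the pairing partner, and it makes the extreme peakon ($x_{(2K+1)'}=x_1$ as $t\to+\infty$, $x_{1'}=x_{2K+1}$ as $t\to-\infty$) settle at a bounded position fixed by $c$ via \eqref{eq:Dklp+odd-c}, \eqref{eq:Dklp-odd3}.

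Finally I would insert these asymptotics into \eqref{eq:detinversexodd}--\eqref{eq:detinversexeven}, take logarithms and simplify. The exponential rate $2t/\zeta_*$ emerges from the surviving ratio of ``size $k$'' and ``size $k-1$'' determinants as $\mathbf b_{[1,l]}/\mathbf b_{[1,l-1]}=b_l(0)e^{2t/\zeta_l}$, with the analogous $\mathbf b_{[1,l]^*}/\mathbf b_{[1,l-1]^*}$ as $t\to-\infty$; the constants assemble into the displayed products of $b_j(0)$, $\mathbf e_{[1,k-1]}$, $\Delta^2$, $\Gamma$ and $m_{k'}$, using $e_j=1/(m_{j'}n_{j'})$ to cancel the Vandermonde factors $|\Delta_{[1,k]}(\mathbf e)|$; and the pairing relations $x_{k'}-x_{(k+1)'}=\ln m_{(k+1)'}n_{k'}\zeta_*+\mathcal O(\cdot)$ follow by differencing consecutive position formulas, after which all spectral determinants cancel and only the indicated term remains. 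The error exponents are the spectral gaps appearing in the $[1+\mathcal O(\cdot)]$ factors of Theorem~\ref{thm:Dklp-peakon(odd)}. The main obstacle is organizational rather than conceptual: one must go through the cases $k$ odd with $k\le 2K-1$, $k$ even with $k\le 2K$, and the boundary value $k=2K+1$, checking in each which of the four CSV determinants in the position formula is governed by the $p+l-1<k-l$ regime and which by the $c$-sensitive equality regime, and then verifying that the surviving leading terms recombine into precisely the stated expressions. Given the even-case template of \cite{chang-szmigielski-m1CHlong} and the $c$-bookkeeping above, the remaining computation is long but routine.
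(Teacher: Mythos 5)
Your proposal is correct and follows exactly the route the paper takes: substitute the time evolution of $b_j$ and $c$ into the CSV determinant formulas, extract the dominant index sets as $t\to\pm\infty$ via Theorem \ref{thm:Dklp-peakon(odd)} (including the $c$-bookkeeping of Corollary \ref{cor:Dklp-c-dependence} in the equality regime $p+l-1=k-l$), and insert the resulting leading terms into the inverse formulas \eqref{eq:detinversexodd}--\eqref{eq:detinversexeven}. The paper itself only sketches this as a ``straightforward, but tedious, computation,'' and your account of the $t\to+\infty$ versus $t\to-\infty$ asymmetry caused by the shift $c$ is the correct mechanism behind the index shift and the halting extreme peakon.
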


\begin{remark}
Similar to the even case, the Toda-like sorting property can also be observed in this case.  It is perhaps interesting to examine the role of the constant $c$.  The constant $c$ is playing the role of an additional eigenvalue $\zeta_{K+1}=\infty$ resulting in the formal asymptotic speed $0$.
We observe that for large positive times the first particle counting from the left comes to a stop, while
the remaining $2K$ peakons form pairs of bound states akin to what is
occurring for even $N$, effectively sharing in pairs the remaining $K$ speeds.
By contrast, for large negative times, the first particle counting from the right slows to a halt,
while the remaining peakons form pairs.  Note that a similar phenomenon also occurs in the mCH.
\end{remark}

At the end of this section, we present graphs for a concrete 3-peakon solution, which confirm our theoretical predictions. Let $K=1$, and $ b_1(0)=1,\ c=3,\ \zeta_1=5,\ m_1=3,\ m_2=2,\ m_3=1.6,\ n_1=1.8,\ n_2=3,\ n_3=2.2.$ Then the sufficient conditions in Theorem \ref{thm:global(odd)} are satisfied. Hence the order of $\{x_k, k = 1, 2, 3\}$ will be preserved  and one can use the explicit formulae for the 3-peakon solution, resulting in the following sequence of graphs (Figure \ref{fig_3peakon}).
\begin{figure}[h!]
  \centering
  \resizebox{1.1\textwidth}{!}{
  \includegraphics{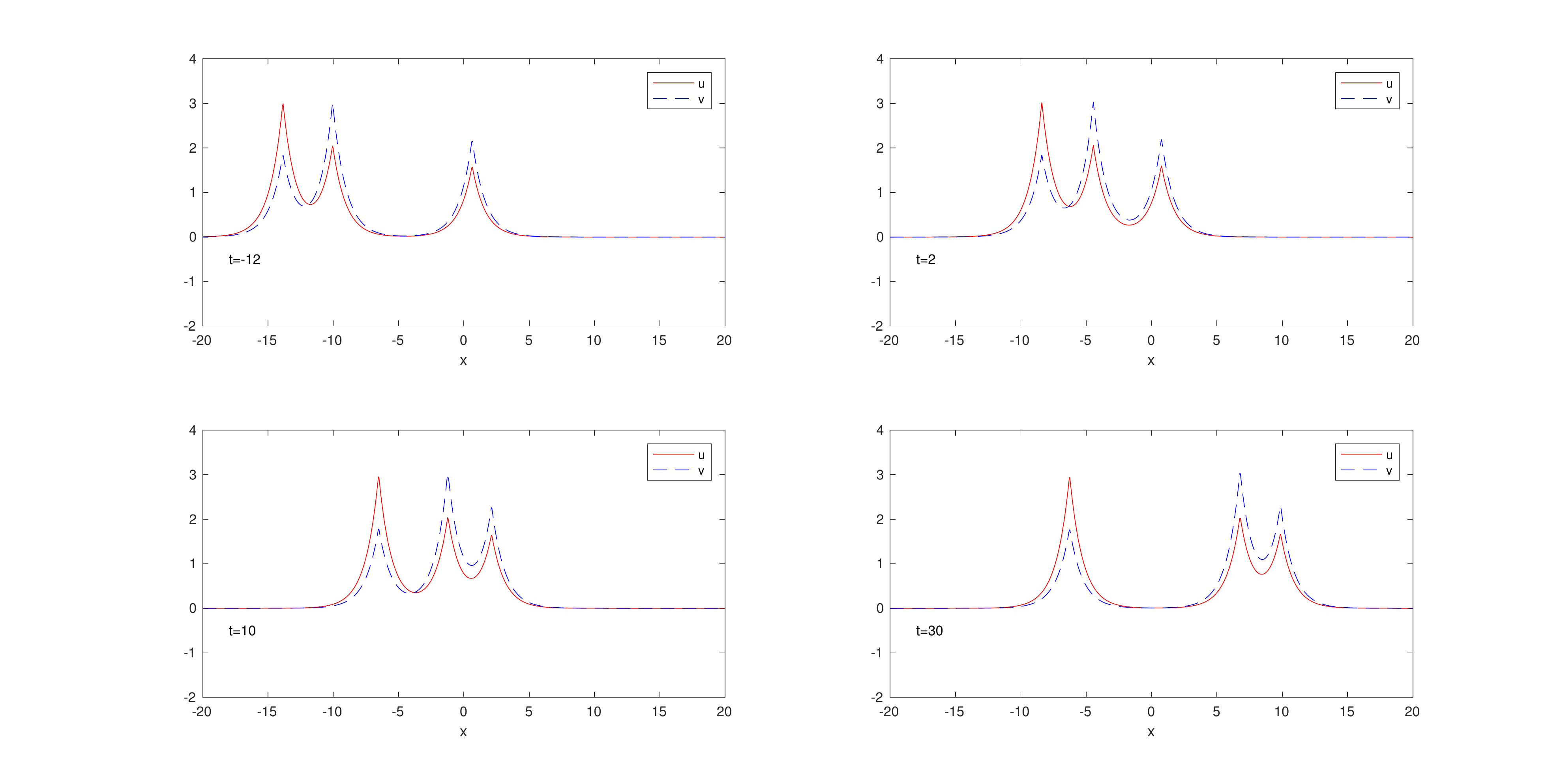}
  }
   \caption{Graphs of 3-peakon at time $t=-12,\ 2,\ 10, \ 30$ in the case of $ b_1(0)=1,\ c=3,\ \zeta_1=5,\ m_1=3,\ m_2=2,\ m_3=1.6,\ n_1=1.8,\ n_2=3,\ n_3=2.2.$}
   \label{fig_3peakon}
\end{figure}

\section{Reductions of multipeakons}
We recall that the 2-mCH \eqref{eq:m2CH} is a two-component integrable generalization of the mCH \eqref{eq:m1CH}. In the present contribution we constructed explicit formulae for generic multipeakons of the  2-mCH by using the inverse spectral method. Yet, in our past work we constructed the  so-called interlacing multipeakons of the 2-mCH \cite{chang2016multipeakons}, while in \cite{chang-szmigielski-m1CHlong} we gave explicit formulae for multipeakons of the mCH.  In this section, as if to close the circle, we would like to show how the formulae obtained in previous sections reduce to those special cases.
\subsection{From odd case to even case} \label{sec:redc_odd_even}
In Section \ref{sec:evenpeakons} and \ref{sec:oddpeakons}, we presented the multipeakon formulae according to the parity of the number $N$ of masses. In this subsection, we will show how the multipeakon formula in the odd case can be used to derive the mutipeakon formula for the even case. Consider the multipeakons
$$u=\sum_{j=1}^{2K+1} m_j (t)e^{-\abs{x-x_j(t)}},\qquad v=\sum_{j=1}^{2K+1} n_j (t)e^{-\abs{x-x_j(t)}}.$$
Suppose
$m_1\rightarrow0,\, n_1\rightarrow0$,
then
$h_1\rightarrow0,\, g_1\rightarrow0$,
since
$h_j=m_je^{x_j},\,  g_j=n_je^{-x_j}$.
Moreover, by continuity, it is not hard to see that
$c\rightarrow0$ in view of
\eqref{eq:c} and results from Section \ref{sec:FSM}.

Now, with the help of multipeakon formulae \eqref{eq:detinversexodd}-\eqref{eq:detinversexeven}, we also see that when $c\rightarrow0$, 
then
$\lim_{c\rightarrow0}m_1e^{-|x-x_1|}=0$.  Moreover, we can rewrite the multipeakon formulae \eqref{eq:detinversexodd}-\eqref{eq:detinversexeven} in the odd case as
\begin{subequations}
\begin{align}
&&x_{2K-2k+3}&=\ln \frac
{\mathbf{e}_{[1,2k-2]}\D_{2k-1}^{(k,0)}(c)\, \D_{2k-2}^{(k-1,0)}(c)}{m_{2K-2k+3}\D_{2k-1}^{(k-1,1)}(c)\,\D_{2k-2}^{(k-1,1)}(c)},   &&1\leq k\leq K+1, \\
&&x_{2K-2k+2}&= \ln \frac
{\mathbf{e}_{[1,2k-1]}\D_{2k}^{(k,0)}(c)\, \D_{2k-1}^{(k,0)}(c)}{m_{2K-2k+2}\D_{2k}^{(k,1)}(c)\,\D_{2k-1}^{(k -1,1)}(c)}, &&1\leq k\leq K,
\end{align}
\end{subequations}
where $e_j=\frac{1}{m_{2K+2-j}n_{2K+2-j}}$.
Note, however, that neither $m_1$ nor $n_1$ appear in the expressions for ~$x_2,x_3,\cdots, x_{2K+1}$. Thus, we have
\begin{subequations}
\begin{align}
&&\lim_{c\rightarrow0}x_{2K-2k+3}&=\ln \frac
{\mathbf{e}_{[1,2k-2]}\D_{2k-1}^{(k,0)}(0)\, \D_{2k-2}^{(k-1,0)}(0)}{m_{2K-2k+3}\D_{2k-1}^{(k-1,1)}(0)\,\D_{2k-2}^{(k-1,1)}(0)},   &&1\leq k\leq K, \\
&&\lim_{c\rightarrow0}x_{2K-2k+2}&= \ln \frac
{\mathbf{e}_{[1,2k-1]}\D_{2k}^{(k,0)}(0)\, \D_{2k-1}^{(k,0)}(0)}{m_{2K-2k+2}\D_{2k}^{(k,1)}(0)\,\D_{2k-1}^{(k -1,1)}(0)}, &&1\leq k\leq K.
\end{align}
\end{subequations}
With the notation
$$\tilde x_j =\lim_{c\rightarrow0}x_{j+1},\qquad \tilde m_j =m_{j+1},\qquad \tilde n_j =n_{j+1},$$
as well as
$$\tilde e_j=\frac{1}{\tilde m_{2K+1-j}\tilde n_{2K+1-j}},\qquad \mathbf{\tilde e}_{[1,j]}=\tilde e_1\cdots \tilde e_j,\qquad \tilde \D_k^{l,p}=\D_k^{l,p}$$
where $ \tilde \D_k^{l,p}$ is obtained by replacing $m_{j+1},n_{j+1}$ by $\tilde m_j, \tilde n_j$ in the corresponding CSV matrix,
we obtain
\begin{subequations}
\begin{align}
&&\tilde x_{2K-2k+2}&=\ln \frac
{\mathbf{\tilde e}_{[1,2k-2]}\tilde \D_{2k-1}^{(k,0)}(0)\, \tilde \D_{2k-2}^{(k-1,0)}(0)}{\tilde m_{2K-2k+2}\tilde \D_{2k-1}^{(k-1,1)}(0)\,\tilde \D_{2k-2}^{(k-1,1)}(0)},   &&1\leq k\leq K, \\
&&\tilde x_{2K-2k+1}&= \ln \frac
{\mathbf{\tilde e}_{[1,2k-1]}\tilde \D_{2k}^{(k,0)}(0)\, \tilde \D_{2k-1}^{(k,0)}(0)}{\tilde m_{2K-2k+1}\tilde \D_{2k}^{k,1)}(0)\,\tilde \D_{2k-1}^{(k -1,1)}(0)}, &&1\leq k\leq K.
\end{align}
\end{subequations}
Comparing the formulae \eqref{eq:detinversexodd}-\eqref{eq:detinversexeven} in the even case, these are nothing but the formulae for the positions $\tilde x_j$ corresponding the multipeakon ansatz:
$$u=\sum_{j=1}^{2K} \tilde m_j (t)e^{-\abs{x-\tilde x_j(t)}},\qquad v=\sum_{j=1}^{2K+1} \tilde n_j (t)e^{-\abs{x-\tilde x_j(t)}}.$$

So far, we have shown how the formulae for the odd case reduce to those for the even case. It is not hard to see that a comparison between Examples \ref{ex:2peakon} and \ref{ex:3peakon} supports our claim above.

\subsection{From the even case to the interlacing case}
Let us consider the multipeakon ansatz with the even number of masses:
$$u=\sum_{j=1}^{2K} m_j (t)e^{-\abs{x-x_j(t)}},\qquad v=\sum_{j=1}^{2K} n_j (t)e^{-\abs{x-x_j(t)}}. $$
Then the reduction
$m_{2j}\rightarrow0,\,  n_{2j-1}\rightarrow0$
gives the interlacing ansatz
$$u=\sum_{j=1}^{K} m_{2j-1} (t)e^{-\abs{x-x_{2j-1}(t)}},\qquad v=\sum_{j=1}^{K} n_{2j}(t)e^{-\abs{x-x_{2j}(t)}}, $$
which was considered in \cite{chang2016multipeakons}.

Note that this reduction does not change the form of the Weyl function, that is, the Weyl function still possesses the form:
\begin{equation*}
W(z)=\int \frac{d\mu(x)}{x-z}, \quad d\mu=\sum_{i=1}^K
b_j \delta_{\zeta _j}, \quad 0<\zeta_1<\dots< \zeta_K, \quad 0<b_j,   \quad 1\leq j\leq K.
\end{equation*}

Let us consider the multipeakon formulae  \eqref{eq:detinversexodd}-\eqref{eq:detinversexeven} for the case of the even number of masses. It is not hard to verify that
$$\left(\mathbf{e}_{[1,k]}\right)^l\frac{\D_k^{(l,p)}}{|\Delta_{[1,k]}(\mathbf e)|}\rightarrow \sum_{J\in\binom{[1,K]}{l}}b_J\zeta_J^{p}\Delta_J^2\triangleq H_l^p, \qquad \text{as} \qquad m_{2j},\ n_{2j-1}\rightarrow0, $$
which leads to
$$
    \begin{aligned}
      &&x_{2K+1-2k} &\rightarrow \ln\left(\frac{1}{m_{2K+1-2k}}\cdot\frac{(H_{k}^0)^2}{H_{k}^1H_{k-1}^1}\right), && 1\leq k\leq K, \\
      &&x_{2K+2-2k} &\rightarrow\ln\left(n_{2K+2-2k}\cdot\frac{H_{k}^0H_{k-1}^0}{(H_{k-1}^1)^2}\right), && 1\leq k\leq K.
    \end{aligned}
$$
These formulae are identical to those obtained in \cite{chang2016multipeakons} for the interlacing multipeakons of 2-mCH equation. It is useful to compare
examples \ref{ex:2peakon} and \ref{ex:4peakon} in the present paper and examples 5.4, 5.5 in \cite{chang2016multipeakons} to see a concrete
manifestation of the transition from a general configuration to an interlacing one.
It is also helpful to observe that the inverse problem
for the interlacing case was solved using diagonal Pad\'{e} approximations at $\infty$
while in the present paper we are formulating our interpolation problem
at $e_j$ (see \eqref{eq:ej} all of which get moved to $\infty$ in the limit $m_{2j}\rightarrow 0,  n_{2j-1}\rightarrow 0$.
\subsection{From 2-mCH to 1-mCH}
As is known, when $v=u$, the 2-mCH \eqref{eq:m2CH} reduces to the mCH \eqref{eq:m1CH}. As for the formulae \eqref{eq:detinversexodd}-\eqref{eq:detinversexeven} in Theorem \ref{thm:inversex},
the identical mass setting
$$n_j=m_j,$$
leads to
$$v=u=\sum_{j=1}^N m_j (t)e^{-\abs{x-x_j(t)}},\qquad n=m=2\sum_{j=1}^N m_j \delta_{x_j}.$$
In this case we have
$$e_j=\frac{1}{m_{j'}^2}$$
and the formulae \eqref{eq:detinversexodd}-\eqref{eq:detinversexeven} in Theorem \ref{thm:inversex} reduce to the multipeakon formulae for the mCH equation (see  \cite[Theorem 4.21]{chang-szmigielski-m1CHlong}). Moreover, it is not hard to see that the global existence condition and the long time asymptotics in Section \ref{sec:evenpeakons} and \ref{sec:oddpeakons} also cover those for the mCH equation in \cite{chang-szmigielski-m1CHlong}.

\section*{Acknowledgments}
The first author was supported in part by the National Natural Science Foundation of China (\#11688101, 11731014, 11701550) and the Youth Innovation Promotion Association CAS. The second author was supported in part by NSERC \#163953.

\begin{appendix}
\section{Lax pair for the 2-mCH peakon ODEs}\label{lax_m2ch}
   The purpose of this appendix is to give a rigorous interpretation for
the distributional Lax pair of the 2-mCH equation in the generic
case, i.e. the Lax pair of the 2-mCH peakon ODEs \eqref{eq:m2CH_ode}. We note that the transition from the smooth sector to the distribution sector
   is not canonical. The argument presented below is closer to that in Appendix~A
of~\cite{chang-szmigielski-m1CHlong} than to the one used for the interlacing case of the 2-mCH \cite{chang2016multipeakons}. Let us begin by reviewing some notation needed to present the argument.

Let $\Omega_k$ denote the region $x_k(t)<x<x_{k+1}(t)$, where $x_k$
are smooth functions such that
$-\infty=x_0(t)<x_1(t)<\dots<x_{N}(t)<x_{N+1}(t)=+\infty$.

Let the function space $PC^\infty$ consist of all piecewise smooth
functions $f(x,t)$ such that the restriction of~$f$ to each
region~$\Omega_k$ is a smooth function $f_k(x,t)$ defined on ~$\Omega_k$. Then, for each fixed $t$, $f(x,t)$
defines a regular distribution~$T_f(t)$ \cite{schwartz-distributions} in the class~$\mathcal{D}'(R)$
(for simplicity we will write $f$ instead of~$T_f$). Since distributions
do not in general have values at individual points, we do not require
$f(x_k)$ to be defined.  However, the left  and right limits
are defined.  Let us denote them  $f_x(x_k-,t)$ and $f_x(x_k+,t)$  and set
\begin{equation*}
  [f](x_k)=f(x_k+,t)-f(x_k-,t),
  \qquad
  \langle f\rangle(x_k)=\frac{f(x_k+,t)+f(x_k-,t)}{2},
\end{equation*}
to denote the jump and the average, respectively.  
Denote by $f_x$ (or $f_t$) the ordinary (classical) partial derivative
with respect to~$x$ (or~$t$), and by
$\frac{\partial f_k}{\partial x}$
(or $\frac{\partial f_k}{\partial t}$) their restrictions
to~$\Omega_k$. Let $D_xf$ denote the distributional derivative with
respect to~$x$.   Then we have a well-known identity
\begin{equation*}
  D_xf=f_x+\sum_{k=1}^{N}[f](x_k)\delta_{x_k}.
\end{equation*}
Likewise, we can define $D_tf$ as a distributional limit
\begin{equation*}
  D_tf(t)=\lim_{a\rightarrow0}\frac{f(t+a)-f(t)}{a},
\end{equation*}
which can be computed explicitly to be
\begin{equation*}
  D_tf=f_t-\sum_{k=1}^{N}\dot x_{k}[f(x_k)]\delta_{x_k},
\end{equation*}
where $\dot x_{k}=dx_k/dt$.

The following formulas will be useful in further analysis:
\begin{equation} \label{eq:jumpave}
  [fg]=\langle f\rangle[g]+[f]\langle g\rangle,
  \qquad
  \langle fg\rangle=\langle f\rangle\langle g\rangle+\frac{1}{4}[f][g]
\end{equation}
\begin{equation}
  \frac{d}{dt}[f](x_k)=[f_x](x_k)\dot x_k +[f_t](x_k),
\end{equation}
\begin{equation}
  \frac{d}{dt}\langle f\rangle(x_k)=\langle f_x\rangle(x_k)\dot x_k +\langle f_t\rangle(x_k),
\end{equation}
for any $f,g\in PC^\infty$.

It is easy to see that the peakon solution $u(x,t),v(x,t)$ and the corresponding functions $\Psi_1,\ \Psi_2$ in the Lax pair
\eqref{2chlax} are piecewise smooth functions of class $PC^\infty$. Moreover, $u,v$ are continuous throughout $\R$ but $u_x, v_x,\Psi_1, \Psi_2$ have a jump at each $x_{k}$.

Let us now set $\Psi=(\Psi_1,\Psi_2)^T$, and consider an overdetermined system (see \eqref{2chlax})
\begin{equation}\label{eq:DLax-pair}
D_x\Psi=\frac{1}{2}\hat L\Psi,\qquad D_t\Psi=\frac{1}{2}\hat A\Psi,
\end{equation}
where
\begin{eqnarray}
  &&\hat L=L+2\lambda\left(\sum_{k=1}^NM_{k}\delta_{x_{k}}\right),\label{lax_peakon1}\\
  &&\hat A=A-2\lambda\left(\sum_{k=1}^NM_{k}Q(x_{k})\delta_{x_{k}}\right)\label{lax_peakon2}
\end{eqnarray}
with
\begin{eqnarray*}
  &&L=\left(
\begin{array}{cc}
     -1     & 0  \\
 0&    1 \\
\end{array}
\right),
\qquad M_k=\left(
\begin{array}{cc}
     0     & m_k  \\
 -n_k&    0 \\
\end{array}
\right),\\
&& A=\left( \begin{array}{cc}
4\lambda^{-2}+Q              &  -2\lambda^{-1}(u-u_x)\\
2\lambda^{-1}(v+v_x) & -Q
\end{array} \right)
\end{eqnarray*}
and $Q=(u-u_x)(v+v_x)$.
We observe that the above splitting of the Lax pair corresponds to the distributional splitting into a regular distribution and a distribution 
with a singular support.
In particular the singular part of (\eqref{lax_peakon1})  requires that we multiply $M_k\Psi=(m_k\Psi_2,-n_k\Psi_1)$ by $\delta_{x_{k}}$. This is not defined, and to define this multiplication we need to assign some values to $\Psi_1,\Psi_2$ at $x_k$. This, in principle, is an arbitrary procedure but
we want it to reflect the local behaviour of $\Psi$ around $x_k$ and postulate that our choice of values of $\Psi$ depends linearly on
the right and left hand limits.

Likewise, for the $t$-Lax equation (\ref{lax_peakon2}) to be defined as a distribution equation, $QM_k\Psi=(Qm_k\Psi_2,-Qn_k\Psi_1)$ needs to be a multiplier of $\delta_{x_{k}}$. Thus, in the same sense as above, the values of $Q(x_{k})$ need to be assigned as well.

Henceforth, we will refer to these assignments as \textit{regularizations},
even though this name in the theory of distributions refers usually to
re-defining divergent integrals.

We now specify more concretely a family of regularizations we consider.  We will refer to them as \textit{invariant regularizations}.  The reason for that
name is explained in \cite[Appendix A] {chang-szmigielski-m1CHlong}.

\begin{definition} \label{def:invariantreg}
An invariant regularization of
the Lax pair \eqref{eq:DLax-pair} consists of specifying
the values of $\alpha, \beta \in \R$ and $Q(x_k)=(u-u_x)(v+v_x)(x_k)$ in the formulas
\begin{equation*}
\begin{aligned}
\Psi(x)\delta _{x_k}&=\Psi(x_k) \delta_{x_k}, \\
\Psi(x_k)&=\alpha \jump{\Psi}(x_k)+\beta\avg{\Psi}(x_k),  \\
Q(x)\delta_{x_k}&=Q(x_k) \delta_{x_k}.
\end{aligned}
\end{equation*}
\end{definition}

\begin{theorem}\label{thm:invreg}
Let $m$ and $n$ be the discrete measures associated to $u$ and $v$ defined by \eqref{eq:peakonansatz}.
Given an invariant regularization in the sense of Definition \ref{def:invariantreg} the
distributional Lax pair \eqref{eq:DLax-pair} is compatible, i.e. $D_tD_x\Psi=D_xD_t\Psi$,  if and only if
the following conditions hold:

\begin{equation}
\beta =1, \qquad  \alpha=\pm \tfrac12.   \label{eq:betaalpha}
\end{equation}
Then
\begin{subequations}
\begin{align}
\dot m_k&=-m_k(Q(x_k)-\avg{Q}(x_k)), \label{eq:dotmk}\\
\dot n_k&=n_k(Q(x_k)-\avg{Q}(x_k)), \label{eq:dotnk}\\
\dot x_k&=Q(x_k). \label{eq:dotxk}
\end{align}
\end{subequations}

\end{theorem}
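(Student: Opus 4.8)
\textit{Sketch of the approach.} The plan is to localise the compatibility at each point $x_k$: reduce the distributional Lax equations \eqref{eq:DLax-pair} to a transition (jump) relation together with a ``moving'' consistency condition, and then read off $\beta=1$, $\alpha=\pm\tfrac12$ and the equations \eqref{eq:dotmk}--\eqref{eq:dotxk} by expanding a single $2\times2$ matrix identity in powers of $\lambda$.

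First I would split \eqref{eq:DLax-pair} into its regular part and its $\delta_{x_k}$- and $\delta'_{x_k}$-supported parts, using the identities recalled above. The regular part of the $x$-equation is the constant-coefficient ODE $\Psi_x=\tfrac12 L\Psi$ on each $\Omega_k$, and equating the $\delta_{x_k}$-coefficients in $D_x\Psi=\tfrac12\hat L\Psi$, together with $\Psi(x_k)=\alpha\jump{\Psi}(x_k)+\beta\avg{\Psi}(x_k)$, yields the transition relation $\Psi(x_k{+})=\mathcal T_k\,\Psi(x_k{-})$ where, after using $M_k^2=-m_kn_k I$,
\[
\mathcal T_k=\bigl(I-\lambda s M_k\bigr)^{-1}\bigl(I+\lambda r M_k\bigr)=\frac{(1-\lambda^2 rs\,m_kn_k)\,I+\lambda\beta M_k}{1+\lambda^2 s^2 m_kn_k},\qquad s=\alpha+\tfrac\beta2,\ \ r=\tfrac\beta2-\alpha,
\]
so that $s+r=\beta$ and $\Psi(x_k)=\beta(I-\lambda s M_k)^{-1}\Psi(x_k{-})$. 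On the regular side, since $u-u_{xx}=v-v_{xx}=0$ on each $\Omega_k$, the functions $u,v$ solve \eqref{eq:m2CH} trivially there, so the smooth zero-curvature relation $A_x=\tfrac12[L,A]$ (recall $L_t=0$) holds on every $\Omega_k$; hence the smooth overdetermined system $\Psi_x=\tfrac12 L\Psi,\ \Psi_t=\tfrac12 A\Psi$ is consistent on each region and the only obstructions to compatibility sit at the points $x_k$. I would work throughout with a fundamental (matrix) solution, so that $\Psi(x_k{\pm})$ is invertible and the identities below are genuine matrix identities (a genericity argument in $\lambda$ would do as well); the degenerate case $\beta=0$, which collapses the jump ($\jump{\Psi}(x_k)=0$ for generic $\lambda$), is excluded from the outset.

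Next I would extract \eqref{eq:dotxk}. Equating the $\delta_{x_k}$-coefficients in the $t$-equation $D_t\Psi=\tfrac12\hat A\Psi$ — the left side computed from $D_t\Psi=\Psi_t-\sum\dot x_k\jump{\Psi}(x_k)\delta_{x_k}$, the right side from the regularization — gives $\dot x_k\,\jump{\Psi}(x_k)=\lambda Q(x_k)M_k\Psi(x_k)$; combined with $\jump{\Psi}(x_k)=\lambda M_k\Psi(x_k)$ and the invertibility of $M_k\Psi(x_k)$ this is exactly $\dot x_k=Q(x_k)$. The heart of the argument is then the consistency of the transition relation with the time flow: since $\Psi(x_k{\pm})$ evolves by $\tfrac{d}{dt}\Psi(x_k{\pm})=\tfrac{\dot x_k}2 L\Psi(x_k{\pm})+\tfrac12 A(x_k{\pm})\Psi(x_k{\pm})$, differentiating $\Psi(x_k{+})=\mathcal T_k\Psi(x_k{-})$ in $t$ and cancelling the invertible factor $\Psi(x_k{-})$ produces the matrix identity
\[
\dot{\mathcal T}_k=\tfrac{\dot x_k}2\bigl[L,\mathcal T_k\bigr]+\tfrac12\bigl(A(x_k{+})\mathcal T_k-\mathcal T_k A(x_k{-})\bigr).
\]

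Finally I would evaluate both sides as explicit $2\times2$ matrices. Writing $A(x_k{\pm})=\avg{A}\pm\tfrac12\jump{A}$, using the jumps $\jump{u_x}(x_k)=-2m_k$, $\jump{v_x}(x_k)=-2n_k$ (so $\jump{A}$, $\jump{Q}$ and $\avg{A}$ are known at $x_k$) and the elementary facts $\{\jump{A},M_k\}=0$ and $\{\mathrm{diag}(1,-1),M_k\}=0$, the right-hand side collapses to a combination of $\mathrm{diag}(1,-1)$ and $\bigl(\begin{smallmatrix}0&m_k\\ n_k&0\end{smallmatrix}\bigr)$ only, whereas on the left $\dot{\mathcal T}_k$ acquires, through $\dot M_k$, a component along $M_k$ proportional to $\tfrac{d}{dt}(m_kn_k)$. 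Matching the $M_k$-component forces $\tfrac{d}{dt}(m_kn_k)=0$; matching the $\mathrm{diag}(1,-1)$-component forces $1-\beta-\lambda^2 rs\,m_kn_k\equiv0$ as a function of $\lambda$ (here one uses the generic condition $\jump{Q}(x_k)\neq0$, handled separately), hence $\beta=1$ and $rs=\tfrac{\beta^2}{4}-\alpha^2=\tfrac14-\alpha^2=0$, i.e. $\alpha=\pm\tfrac12$; and matching the remaining off-diagonal component, with $\dot x_k=Q(x_k)$ inserted, gives $\dot m_k/m_k=\avg{Q}(x_k)-Q(x_k)$, which is \eqref{eq:dotmk}, after which \eqref{eq:dotnk} follows from the conservation of $m_kn_k$. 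Conversely, once $\beta=1$, $\alpha=\pm\tfrac12$ and \eqref{eq:dotmk}--\eqref{eq:dotxk} hold, the displayed matrix identity is satisfied, so these relations are both necessary and sufficient for compatibility. The obstacle I anticipate is essentially bookkeeping: keeping the several distributional multiplications consistent with the chosen regularization, and carrying out the $\lambda$-expansion of the matrix identity without sign errors, together with the (routine) exclusion of the non-generic sub-cases.
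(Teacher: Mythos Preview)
Your sketch is correct and is precisely the standard argument the paper has in mind: the paper's own proof is omitted entirely and merely cites \cite[Theorem~A.2]{chang-szmigielski-m1CHlong}, whose method is exactly the localisation-at-$x_k$ plus transition-matrix consistency $\dot{\mathcal T}_k=\tfrac{\dot x_k}{2}[L,\mathcal T_k]+\tfrac12\bigl(A(x_k{+})\mathcal T_k-\mathcal T_kA(x_k{-})\bigr)$ that you set up, followed by matching in powers of $\lambda$. Your computations of $\mathcal T_k$, of the anticommutation relations $\{\jump{A},M_k\}=0$ and $\{L,M_k\}=0$, and of the singular $\delta_{x_k}$-balance yielding $\dot x_k=Q(x_k)$ are all correct, and the residual bookkeeping you flag (treating the non-generic case $\jump{Q}(x_k)=0$, excluding $\beta=0$) is indeed routine.
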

\begin{proof}
The proof follows almost \textit{verbatim} \cite[Theorem A.2]{chang-szmigielski-m1CHlong} (also see \cite{chang2016multipeakons,hls}). We omit the details.
\end{proof}
Now we can spell out the connection of the regularization problem to the content of the present paper.
\begin{corollary}\label{cor:disrlax}
Let $m$ and $n$ be the discrete measures associated to $u$ and $v$ defined by \eqref{eq:peakonansatz}. Suppose that $Q(x_k),\Psi(x_k)$ are
assigned values
$$
Q(x_k)=\avg{Q}(x_k),\quad \Psi(x_k)=\Psi(x_k+),
$$
or
$$Q(x_k)=\avg{Q}(x_k), \quad \Psi(x_k)=\Psi(x_k-).  $$
For either case, the compatibility condition $D_tD_x\Psi = D_xD_t\Psi$
for the distributional Lax pair \eqref{eq:DLax-pair} reads
$$
\dot m_k=0,\quad \dot n_k=0,\quad \dot x_k=\avg{Q}(x_k).
$$
\end{corollary}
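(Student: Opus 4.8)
The plan is to recognize the two assignments displayed in the statement as particular instances of the invariant regularizations of Definition \ref{def:invariantreg}, and then to invoke Theorem \ref{thm:invreg} directly. First I would use the elementary identities
\[
\Psi(x_k+)=\avg{\Psi}(x_k)+\tfrac12\jump{\Psi}(x_k),\qquad \Psi(x_k-)=\avg{\Psi}(x_k)-\tfrac12\jump{\Psi}(x_k),
\]
which exhibit the choice $\Psi(x_k)=\Psi(x_k+)$ as the invariant regularization with $\alpha=+\tfrac12,\ \beta=1$, and the choice $\Psi(x_k)=\Psi(x_k-)$ as the one with $\alpha=-\tfrac12,\ \beta=1$. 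In both cases the hypothesis of the corollary also pins down the regularization of the singular coefficient by setting $Q(x_k)=\avg{Q}(x_k)$, which is exactly the remaining datum demanded by Definition \ref{def:invariantreg}.

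Next I would observe that the two parameter pairs $(\alpha,\beta)=(\pm\tfrac12,1)$ are precisely the ones singled out by the compatibility criterion \eqref{eq:betaalpha} of Theorem \ref{thm:invreg}. Hence that theorem applies verbatim and yields at once both that the distributional Lax pair \eqref{eq:DLax-pair} is compatible, $D_tD_x\Psi=D_xD_t\Psi$, and that the induced motion is governed by the ODEs \eqref{eq:dotmk}--\eqref{eq:dotxk}, namely $\dot m_k=-m_k\bigl(Q(x_k)-\avg{Q}(x_k)\bigr)$, $\dot n_k=n_k\bigl(Q(x_k)-\avg{Q}(x_k)\bigr)$, and $\dot x_k=Q(x_k)$. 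Substituting the regularization $Q(x_k)=\avg{Q}(x_k)$ into these three identities annihilates the bracketed differences and collapses the last equation, leaving $\dot m_k=0$, $\dot n_k=0$, $\dot x_k=\avg{Q}(x_k)$, which is the claim.

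The only point requiring a little care --- and what I would regard as the (mild) main obstacle --- is to make sure that the symbol $Q(x_k)$ appearing in the conclusion of Theorem \ref{thm:invreg} is read as the \emph{regularized} value fixed in Definition \ref{def:invariantreg}, so that replacing it by $\avg{Q}(x_k)$ in the final step is legitimate rather than circular; this is in fact the convention already built into the statement of Theorem \ref{thm:invreg}. Beyond that, the argument is a pure specialization: no new computation is needed, since the substantive work --- verifying that compatibility of \eqref{eq:DLax-pair} forces $\beta=1,\ \alpha=\pm\tfrac12$ and produces \eqref{eq:dotmk}--\eqref{eq:dotxk} --- has already been carried out in the proof of Theorem \ref{thm:invreg}, following \cite[Theorem A.2]{chang-szmigielski-m1CHlong}.
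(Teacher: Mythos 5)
Your proposal is correct and is exactly the argument the paper intends: the corollary is stated as an immediate specialization of Theorem \ref{thm:invreg}, obtained by identifying $\Psi(x_k\pm)=\avg{\Psi}(x_k)\pm\tfrac12\jump{\Psi}(x_k)$ with the invariant regularizations $(\alpha,\beta)=(\pm\tfrac12,1)$ and then substituting $Q(x_k)=\avg{Q}(x_k)$ into \eqref{eq:dotmk}--\eqref{eq:dotxk}. Your remark about reading $Q(x_k)$ as the regularized value is the right point of care, and it matches the convention of Definition \ref{def:invariantreg}.
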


\section{Proof of Theorem \ref{thm:global} }\label{app:globaleven}
\begin{proof}
It suffices to ensure that the ordering conditions $x_1<x_2<\cdots<x_{2K}$ hold for all time $t$.    We write these conditions as:
\begin{subequations}
\begin{align}
&&x_{(k+1)'}&<x_{k'}, & \text { for all odd }k, \qquad  1\leq k\leq 2K-1, \label{eq:order1}\\
&&x_{(k+2)'}&<x_{(k+1)'}, & \text{ for all odd }k, \qquad  1\leq k\leq 2K-3, \label{eq:order2}
\end{align}
\end{subequations}
and use equations \eqref{eq:detinversexodd}, \eqref{eq:detinversexeven} to
obtain equivalent conditions
\begin{subequations}
\begin{align}
\frac{1}{m_{(k+1)'}n_{k'}}<\frac{\D_{k+1}^{(\frac{k+1}{2},1)} \D_{k-1}^{(\frac{k-1}{2},0)}}{\D_{k+1}^{(\frac{k+1}{2},0)} \D_{k-1}^{(\frac{k-1}{2},1)}},&& \text { for all odd }k, \  1\leq k\leq 2K-1, \label{eq:order1bis} \\
\frac{1}{m_{(k+2)'}n_{(k+1)'}}<\frac{\D_{k+2}^{(\frac{k+1}{2},1)} \D_{k}^{(\frac{k+1}{2},0)}}{\D_{k+2}^{(\frac{k+3}{2},0)} \D_{k}^{(\frac{k-1}{2},1)}},&& \text{ for all odd } k, \ 1\leq k\leq 2K-3.  \label{eq:order2bis}
\end{align}
\end{subequations}

However, \eqref{eq:Dklp} implies that the inequality
\begin{equation}\label{eq:order1step3}
\frac{\D_{k+1}^{(\frac{k+1}{2},1)} \D_{k-1}^{(\frac{k-1}{2},0)}}{\D_{k+1}^{(\frac{k+1}{2},0)} \D_{k-1}^{(\frac{k-1}{2},1)}}>\frac{\zeta_1^{\frac{k+1}{2}}}{\zeta_K^{\frac{k-1}{2}}}
\end{equation}
holds uniformly in $t$ and
if we impose
\begin{equation*}
\frac{1}{m_{(k+1)'}n_{k'}}<\frac{\zeta_1^{\frac{k+1}{2}}}{\zeta_K^{\frac{k-1}{2}}}\text { for all odd }k, \qquad   k\leq 2K-1,
\end{equation*}
then \eqref{eq:order1} holds.

Now we focus on the second inequality, namely \eqref{eq:order2bis}, which is valid whenever $K\geq 2$.
It is convenient to consider a slightly more general expression, namely,
\begin{equation*}
\frac{\D_{k+2}^{(l,1)} \D_{k}^{(l,0)}}{\D_{k+2}^{(l+1,0)} \D_{k}^{(l-1,1)}}, \qquad 1\leq l \leq K-1,
\end{equation*}
for which, using similar steps to those to in \cite{chang-szmigielski-m1CHlong}, we can prove the bound
\begin{equation} \label{eq:order2step4}
\begin{split}
\frac{\D_{k+2}^{(l,1)} \D_{k}^{(l,0)}}{\D_{k+2}^{(l+1,0)} \D_{k}^{(l-1,1)}}> \frac{\zeta_1^{l}}{\zeta_K^{l-1}}
\frac{\textnormal{min}_j(\zeta_{j+1}-\zeta_j)^{2(l-1)}}{l(\zeta_K-\zeta_1)^{2l}}\cdot(e_{k+1}+\zeta_1)(e_{k+2}+\zeta_1)\\=\frac{\zeta_1^{l}}{\zeta_K^{l-1}}
\frac{\textnormal{min}_j(\zeta_{j+1}-\zeta_j)^{2(l-1)}}{l(\zeta_K-\zeta_1)^{2l}}\cdot\frac{(1+m_{(k+1)'}n_{(k+1)'}\zeta_1)(1+m_{(k+2)'}n_{(k+2)'}\zeta_1)}{m_{(k+1)'}n_{(k+1)'}m_{(k+2)'}n_{(k+2)'}}.
\end{split}
\end{equation}
Hence, specializing to $l=\frac{k+1}{2}$ and assuming
\begin{align*}
&\frac{1}{m_{(k+2)'}n_{(k+1)'}} < \\
&\frac{\zeta_1^{\frac{k+1}{2}}}{\zeta_K^{\frac{k-1}{2}}}
\frac{2\, \textnormal{min}_j(\zeta_{j+1}-\zeta_j)^{k-1}}{(k+1)(\zeta_K-\zeta_1)^{k+1}}\cdot\frac{(1+m_{(k+1)'}n_{(k+1)'}\zeta_1)(1+m_{(k+2)'}n_{(k+2)'}\zeta_1)}{m_{(k+1)'}n_{(k+1)'}m_{(k+2)'}n_{(k+2)'}},
\end{align*}
we obtain that \eqref{eq:order2bis} and thus \eqref{eq:order2} hold.

Finally, after rewriting
the last condition as:
\begin{align*}
&\frac{n_{(k+2)'}m_{(k+1)'}}{(1+m_{(k+1)'}n_{(k+1)'}\zeta_1)(1+m_{(k+2)'}n_{(k+2)'}\zeta_1)}<
\frac{\zeta_1^{\frac{k+1}{2}}}{\zeta_K^{\frac{k-1}{2}}}
\frac{2\, \textnormal{min}_j(\zeta_{j+1}-\zeta_j)^{k-1}}{(k+1)(\zeta_K-\zeta_1)^{k+1}}, \\
&\hspace{8cm} \text{ for all odd } k, \  k\leq 2K-3
\end{align*}
we obtain the second condition \eqref{eq:seccond}.
\end{proof}

\section{Proof of Theorem \ref{thm:global(odd)} }\label{app:globalodd}
\begin{proof}
Again, it suffices to ensure that the ordering conditions $x_1<x_2<\cdots<x_{2K+1}$ hold for all time $t$.
  We write these conditions in an equivalent form:
\begin{subequations}
\begin{align}
x_{(k+1)'}<x_{k'}, && \text { for all odd }k, \qquad  1\leq k\leq 2K-1, \label{eq:order1-odd}\\
x_{(k+2)'}<x_{(k+1)'}, && \text{ for all odd }k, \qquad  1\leq k\leq 2K-1, \label{eq:order2-odd}
\end{align}
\end{subequations}
and use equations \eqref{eq:detinversexodd}, \eqref{eq:detinversexeven} to
obtain
\begin{subequations}
\begin{align}
&\frac{1}{m_{(k+1)'}n_{k'}}<\frac{\D_{k+1}^{(\frac{k+1}{2},1)}(c) \D_{k-1}^{(\frac{k-1}{2},0)}(c)}{\D_{k+1}^{(\frac{k+1}{2},0)} (c)\D_{k-1}^{(\frac{k-1}{2},1)}(c)}, \text { for all odd }k,  1\leq k\leq 2K-1, \label{eq:order1bis-odd} \\
&\frac{1}{m_{(k+2)'}n_{(k+1)'}}<\frac{\D_{k+2}^{(\frac{k+1}{2},1)}(c) \D_{k}^{(\frac{k+1}{2},0)}(c)}{\D_{k+2}^{(\frac{k+3}{2},0)}(c) \D_{k}^{(\frac{k-1}{2},1)}(c)}, \text{ for all odd }k,  1\leq k\leq 2K-1.   \label{eq:order2bis-odd}
\end{align}
\end{subequations}

From \eqref{eq:Dklpodd1-c-dep} and \eqref{eq:Dklpodd2-c-dep}
we obtain
\begin{equation*}
\begin{split}
\frac{\D_{k+1}^{(\frac{k+1}{2},1)}(c) \D_{k-1}^{(\frac{k-1}{2},0)}(c)}{\D_{k+1}^{(\frac{k+1}{2},0)}(c) \D_{k-1}^{(\frac{k-1}{2},1)}(c)}=\frac{ \big(\D_{k+1}^{(\frac{k+1}{2}, 1)}(0)+c\D_{k+1}^{(\frac{k-1}{2}, 1)}(0)\big) \D_{k-1}^{(\frac{k-1}{2},0)}(0)}{
\D_{k+1}^{(\frac{k+1}{2},0)}(0)\big(\D_{k-1}^{(\frac{k-1}{2},1)}(0)+c\D_{k-1}^{(\frac{k-3}{2},1)}(0)\big)}\\=\frac{ \D_{k+1}^{(\frac{k+1}{2}, 1)}(0)\D_{k-1}^{(\frac{k-1}{2},0)}(0)+c\D_{k+1}^{(\frac{k-1}{2}, 1)}(0)\D_{k-1}^{(\frac{k-1}{2},0)}(0)}{
\D_{k+1}^{(\frac{k+1}{2},0)}(0)\D_{k-1}^{(\frac{k-1}{2},1)}(0)+c\D_{k+1}^{(\frac{k+1}{2},0)}(0)\D_{k-1}^{(\frac{k-3}{2},1)}(0)}\stackrel{def}{=} \frac {\mathcal{A}_1+\mathcal{B}_1}{\mathcal{A}_2+\mathcal{B}_2},
\end{split}
\end{equation*}
where $\mathcal{B}_2=0$ for $k=1$.  Hence the ratios $\frac{\mathcal{A}_1}{\mathcal{A}_2}, \frac{\mathcal{B}_1}{\mathcal{B}_2}$ satisfy (uniform in $t$) bounds
\begin{align*}
&\frac{\mathcal{A}_1}{\mathcal{A}_2}>\frac{\zeta_1^{\frac{k+1}{2}}}{\zeta_K^{\frac{k-1}{2}}}\stackrel{def}{=}\alpha, \\
&\frac{\mathcal{B}_1}{\mathcal{B}_2}>\frac{\zeta_1^{\frac{k-1}{2}}}{\zeta_K^{\frac{k-3}{2}}}\frac{2\, \textrm{min}_j(\zeta_{j+1}-\zeta_j)^{k-3}}{(k-1)(\zeta_K-\zeta_1)^{k-1}}\frac{(1+m_{k'}n_{k'}\zeta_1)(1+m_{(k+1)'}n_{(k+1)'}\zeta_1)}{m_{k'}n_{k'} m_{(k+1)'}n_{(k+1)'}}\\
&\hspace{0.5cm}\stackrel{def}{=}\beta,
\end{align*}
by equations \eqref{eq:order1step3} and \eqref{eq:order2step4}, respectively,
with the convention that $\beta=\infty$ for the special case $k=1$.
Thus
\begin{equation*}
\textrm{min}\{\alpha, \beta\}<\frac{\D_{k+1}^{(\frac{k+1}{2},1)}(c) \D_{k-1}^{(\frac{k-1}{2},0)}(c)}{\D_{k+1}^{(\frac{k+1}{2},0)}(c) \D_{k-1}^{(\frac{k-1}{2},1)}(c)}
\end{equation*}
holds uniformly in $t$ and
if we impose
\begin{equation*}
\frac{1}{m_{(k+1)'}n_{k'}}<\textrm{min}\{\alpha, \beta\}\qquad \text { for all odd }k, \qquad   1\leq k\leq 2K-1,
\end{equation*}
then equations \eqref{eq:order1-odd} will hold automatically.

Now we turn to the second inequality \eqref{eq:order2bis-odd}. From
Corollary \ref{cor:Dklp-c-dependence} we obtain
\begin{equation*}
\frac{\D_{k+2}^{(\frac{k+1}{2},1)}(c) \D_{k}^{(\frac{k+1}{2},0)}(c)}{\D_{k+2}^{(\frac{k+3}{2},0)}(c) \D_{k}^{(\frac{k-1}{2},1)}(c)}>\frac {\mathcal{A}_1+\mathcal{B}_1}{\mathcal{A}_2+\mathcal{B}_2}.
\end{equation*}
Since
\begin{align*}
&\frac{\mathcal{A}_1}{\mathcal{A}_2}>\frac{\zeta_1^{\frac{k+1}{2}}}{\zeta_K^{\frac{k-1}{2}}}=\alpha, \\
&\frac{\mathcal{B}_1}{\mathcal{B}_2}>\frac{\zeta_1^{\frac{k+1}{2}}}{\zeta_K^{\frac{k-1}{2}}}\frac{2\, \textrm{min}_j(\zeta_{j+1}-\zeta_j)^{k-1}}{(k+1)(\zeta_K-\zeta_1)^{k+1}}\frac{(1+m_{(k+1)'}n_{(k+1)'}\zeta_1)(1+m_{(k+2)'}n_{(k+2)'}\zeta_1)}{m_{(k+1)'}n_{(k+1)'} m_{(k+2)'}n_{(k+2)'}}\\
&\hspace{0.5cm}\stackrel{def}{=}\beta_1,
\end{align*}
then
\begin{equation*}
\textrm{min}\{\alpha, \beta_1\}<\frac{\D_{k+2}^{(\frac{k+1}{2},1)}(c) \D_{k}^{(\frac{k+1}{2},0)}(c)}{\D_{k+2}^{(\frac{k+3}{2},0)}(c) \D_{k}^{(\frac{k-1}{2},1)}(c)}
\end{equation*}
is satisfied.
Thus inequality
\begin{equation*}
\frac{1}{m_{(k+2)'}n_{(k+1)'}}<\textrm{min}\{\alpha, \beta_1\}, \qquad  \text{ for all odd } k, \qquad 1\leq k\leq 2K-1,
\end{equation*}
implies \eqref{eq:order2bis-odd} and, consequently, \eqref{eq:order2-odd},
thereby proving the claim.

\end{proof}

\end{appendix}

\strut\hfill

\end{document}